\tikzstyle{state}=[thick,minimum size=18pt, circle,draw]
\tikzstyle{transition}=[->,thick,>=stealth,shorten >=1pt,shorten <=1pt]
\tikzstyle{final}=[after node path={ node[state, scale=.8] at (\tikzlastnode) {} }]
\tikzstyle{initial}=[after node path={
\tikzset{
	bg/.default={},
	bg/.style={execute at end picture={
			\begin{scope}[on background layer]
				\node[xshift=-1mm, yshift=-1mm] (sw) at (current bounding box.south west) {};
				\node[xshift=1mm, yshift=1mm] (ne) at (current bounding box.north east) {};
				\node[xshift=1mm, yshift=-1mm] (nw) at (current bounding box.north west) {};
				\fill[fill=black!10,rounded corners] (sw) rectangle (ne);
				
				\ifx&#1&\else
				\node[anchor=north east, xshift=2pt] at (nw) {#1};
				\fi
			\end{scope}
	}},
}
\newcommand{\N}{\mathbb{N}}
\newcommand{\R}{\mathbb{R}}
\newcommand{\RR}{\R^{\pm \infty}}
\newcommand{\D}{\mathbb{D}}
\newcommand{\B}{\mathbb{B}}
\newcommand{\avg}{\text{\normalfont avg}}
\newcommand{\fin}{\text{\normalfont fin}}
\newcommand{\LTLf}{\ensuremath{\lozenge}}
\newcommand{\LTLg}{\ensuremath{\square}}
\let\LTLnext\LTLo
\newcommand{\req}{\texttt{rq}}
\newcommand{\gra}{\texttt{gr}}
\newcommand{\tick}{\texttt{tk}}
\newcommand{\other}{\texttt{oo}}
\newcommand{\calM}{\mathcal{M}}
\newcommand{\prefixeq}{\preceq}
\newcommand{\prefix}{\prec}
\newcommand{\suchthat}{\;\ifnum\currentgrouptype=16 \middle\fi|\;}
\let\st\suchthat
\title{Quantitative Safety and Liveness} 
\author{Thomas~A.~Henzinger \and Nicolas~Mazzocchi \and N.~Ege~Sara\c{c}} %
\institute{Institute of Science and Technology Austria (ISTA), Klosterneuburg, Austria \\ \email{\{tah,nmazzocc,esarac\}@ist.ac.at}} %
\authorrunning{T.A.~Henzinger et al.} %
\date{}
\begin{document}
	\maketitle
	\begin{abstract}
		Safety and liveness are elementary concepts of computation, and the foundation of many verification paradigms.
		The safety-liveness classification of boolean properties characterizes whether a given property can be falsified by observing a finite prefix of an infinite computation trace (always for safety, never for liveness).
		In quantitative specification and verification, properties assign not truth values, but quantitative values to infinite traces (e.g., a cost, or the distance to a boolean property). 
		We introduce quantitative safety and liveness, and we prove that our definitions induce conservative quantitative generalizations of both (1)~the safety-progress hierarchy of boolean properties and (2)~the safety-liveness decomposition of boolean properties.
		In particular, we show that every quantitative property can be written as the pointwise minimum of a quantitative safety property and a quantitative liveness property.
		Consequently, like boolean properties, also quantitative properties can be $\min$-decomposed into safety and liveness parts, or alternatively, $\max$-decomposed into co-safety and co-liveness parts.
		Moreover, quantitative properties can be approximated naturally. 
		We prove that every quantitative property that has both safe and co-safe approximations can be monitored arbitrarily precisely by a monitor that uses only a finite number of states. 
	\end{abstract}

\section{Introduction}        
Safety and liveness are elementary concepts in the semantics of computation~\cite{DBLP:journals/tse/Lamport77}.
They can be explained through the thought experiment of a \emph{ghost monitor}---an imaginary device that watches an infinite computation trace at runtime, one observation at a time, and always maintains the set of \emph{possible prediction values} to reflect the satisfaction of a given property.
Let $\varPhi$ be a boolean property, meaning that $\varPhi$ divides all infinite traces into those that satisfy~$\varPhi$, and those that violate~$\varPhi$.
After any finite number of observations,~\texttt{True} is a possible prediction value for $\varPhi$ if the observations seen so far are consistent with an infinite trace that satisfies~$\varPhi$, and~\texttt{False} is a possible prediction value for $\varPhi$ if the observations seen so far are consistent with an infinite trace that violates~$\varPhi$.
When~\texttt{True} is no possible prediction value, the ghost monitor can reject the hypothesis that $\varPhi$ is satisfied.
The property $\varPhi$ is \emph{safe} if and only if the ghost monitor can always reject the hypothesis $\varPhi$ after a finite number of observations:
if the infinite trace that is being monitored violates~$\varPhi$, then after some finite number of observations, \texttt{True} is no possible prediction value for~$\varPhi$.
Orthogonally, the property $\varPhi$ is \emph{live} if and only if the ghost monitor can never reject the hypothesis $\varPhi$ after a finite number of observations:
for all infinite traces, after every finite number of observations,~\texttt{True} remains a possible prediction value for~$\varPhi$.

The safety-liveness classification of properties is fundamental in verification.
In the natural topology on infinite traces---the ``Cantor topology''---the safety properties are the closed sets, and the liveness properties are the dense sets~\cite{DBLP:journals/ipl/AlpernS85}.
For every property~$\varPhi$, the location of $\varPhi$ within the Borel hierarchy that is induced by the Cantor topology---the so-called ``safety-progress hierarchy''~\cite{ChangMP93}---indicates the level of difficulty encountered when verifying~$\varPhi$.
On the first level, we find the safety and co-safety properties, the latter being the complements of safety properties, i.e., the properties whose falsehood (rather than truth) can always be rejected after a finite number of observations by the ghost monitor.
More sophisticated verification techniques are needed for second-level properties, which are the countable boolean combinations of first-level properties---the so-called ``response'' and ``persistence'' properties~\cite{ChangMP93}.
Moreover, the orthogonality of safety and liveness leads to the following celebrated fact:
\emph{every} property can be written as the intersection of a safety property and a liveness property~\cite{DBLP:journals/ipl/AlpernS85}.
This means that every property $\varPhi$ can be decomposed into two parts:
a safety part---which is amenable to simple verification techniques, such as invariants---and a liveness part---which requires heavier verification paradigms, such as ranking functions.
Dually, there is always a disjunctive decomposition of $\varPhi$ into co-safety and co-liveness.

So far, we have retold the well-known story of safety and liveness for \emph{boolean} properties.
A boolean property $\varPhi$ is formalized mathematically as the \emph{set} of infinite computation traces that satisfy~$\varPhi$, or equivalently, the characteristic \emph{function} that maps each infinite trace to a truth value.
Quantitative generalizations of the boolean setting allow us to capture not only correctness properties, but also performance properties~\cite{DBLP:conf/concur/HenzingerO13}.
In this paper we reveal the story of safety and liveness for such \emph{quantitative} properties, which are functions from infinite traces to an arbitrary set $\D$ of \emph{values}.
In order to compare values, we equip the value domain $\D$ with a partial order~$<$, and we require $(\D,<)$ to be a complete lattice.
The membership problem~\cite{DBLP:journals/tocl/ChatterjeeDH10} for an infinite trace $f$ and a quantitative property $\varPhi$ asks whether $\varPhi(f)\geq v$ for a given threshold value $v\in\D$.
Correspondingly, in our thought experiment, the ghost monitor attempts to reject hypotheses of the form $\varPhi(f)\geq v$, which cannot be rejected as long as all observations seen so far are consistent with an infinite trace $f$ with $\varPhi(f)\geq v$.
We will define $\varPhi$ to be a \emph{quantitative safety} property if and only if every hypothesis of the form $\varPhi(f)\geq v$ can always be rejected by the ghost monitor after a finite number of observations, and we will define $\varPhi$ to be a \emph{quantitative liveness} property if and only if some hypothesis of the form $\varPhi(f)\geq v$ can never be rejected by the ghost monitor after any finite number of observations.
We note that in the quantitative case, after every finite number of observations, the set of possible prediction values for $\varPhi$ maintained by the ghost monitor may be finite or infinite, and in the latter case, it may not contain a minimal or maximal element.

Let us give a few examples.
Suppose we have four observations:
observation $\req$ for ``request a resource,'' observation $\gra$ for ``grant the resource,'' observation $\tick$ for ``clock tick,'' and observation $\other$ for ``other.''
The boolean property \textsf{Resp} requires that every occurrence of~$\req$ in an infinite trace is followed eventually by an occurrence of~$\gra$.
The boolean property \textsf{NoDoubleReq} requires that no occurrence of~$\req$ is followed by another~$\req$ without some~$\gra$ in between.
The quantitative property \textsf{MinRespTime} maps every infinite trace to the largest number $k$ such that there are at least $k$ occurrences of~$\tick$ between each~$\req$ and the closest subsequent~$\gra$.
The quantitative property \textsf{MaxRespTime} maps every infinite trace to the smallest number $k$ such that there are at most $k$ occurrences of~$\tick$ between each~$\req$ and the closest subsequent~$\gra$.
The quantitative property \textsf{AvgRespTime} maps every infinite trace to the lower limit value $\liminf$ of the infinite sequence $(v_i)_{i \geq 1}$, where $v_i$ is, for the first $i$ occurrences of~$\tick$, the average number of occurrences of~$\tick$ between~$\req$ and the closest subsequent~$\gra$.
Note that the values of \textsf{AvgRespTime} can be $\infty$ for some computations, including those for which the value of \textsf{Resp} is~\texttt{True}.
This highlights that boolean properties are not embedded in the limit behavior of quantitative properties.

The boolean property \textsf{Resp} is live because every finite observation sequence can be extended with an occurrence of~$\gra$.
In fact, \textsf{Resp} is a second-level liveness property (namely, a response property), because it can be written as a countable intersection of co-safety properties. 
The boolean property \textsf{NoDoubleReq} is safe because if it is violated, it will be rejected by the ghost monitor after a finite number of observations, namely, as soon as the ghost monitor sees a~$\req$ followed by another occurrence of~$\req$ without an intervening~$\gra$.
According to our quantitative generalization of safety, \textsf{MinRespTime} is a safety property.
The ghost monitor always maintains the minimal number $k$ of occurrences of~$\tick$ between any past~$\req$ and the closest subsequent~$\gra$ seen so far; the set of possible prediction values for \textsf{MinRespTime} is always $\{0,1,\ldots,k\}$.
Every hypothesis of the form ``the \textsf{MinRespTime}-value is at least~$v$'' is rejected by the ghost monitor as soon as $k<v$; if such a hypothesis is violated, this will happen after some finite number of observations.
Symmetrically, the quantitative property \textsf{MaxRespTime} is co-safe, because every wrong hypothesis of the form ``the \textsf{MaxRespTime}-value is at most~$v$'' will be rejected by the ghost monitor as soon as
the smallest possible prediction value for \textsf{MaxRespTime}, which is the maximal number of occurrences of~$\tick$ between any past~$\req$ and the closest subsequent~$\gra$ seen so far, goes above~$v$.
By contrast, the quantitative property \textsf{AvgRespTime} is both live and co-live because no hypothesis of the form ``the \textsf{AvgRespTime}-value is at least~$v$,'' nor of the form ``the \textsf{AvgRespTime}-value is at most~$v$,'' can ever be rejected by the ghost monitor after a finite number of observations.
All nonnegative real numbers and $\infty$ always remain possible prediction values for \textsf{AvgRespTime}.
Note that a ghost monitor that attempts to reject hypotheses of the form $\varPhi(f) \geq v$ does not need to maintain the entire set of possible prediction values, but only the $\sup$ of the set of possible prediction values, and whether or not the $\sup$ is contained in the set.
Dually, updating $\inf$ (and whether it is contained) suffices to reject hypotheses of the form $\varPhi(f) \leq v$.

By defining quantitative safety and liveness via ghost monitors, we not only obtain a conservative and quantitative generalization of the boolean story, but also open up attractive frontiers for quantitative semantics, monitoring, and verification.
For example, while the approximation of boolean properties reduces to adding and removing traces to and from a set, the approximation of quantitative properties offers a rich landscape of possibilities.
In fact, we can approximate the notion of safety itself.
Given an error bound~$\alpha$, the quantitative property $\varPhi$ is \emph{$\alpha$-safe} if and only if for every value $v$ and every infinite trace $f$ whose value $\varPhi(f)$ is less than~$v$, all possible prediction values for $\varPhi$ are less than $v+\alpha$ after some finite prefix of~$f$.
This means that for an $\alpha$-safe property~$\varPhi$, the ghost monitor may not reject wrong hypotheses of the form $\varPhi(f)\ge v$ after a finite number of observations, once the violation is below the error bound.
We show that every quantitative property that is both $\alpha$-safe and $\beta$-co-safe, for any finite~$\alpha$~and~$\beta$, can be monitored arbitrarily precisely by a monitor that uses only a finite number of states. 

We are not the first to define quantitative (or multi-valued) definitions of safety and liveness~\cite{DBLP:journals/isci/LiDL17,DBLP:conf/nfm/GorostiagaS22}.
While the previously proposed quantitative generalizations of safety share strong similarities with our definition (without coinciding completely), our quantitative generalization of liveness is entirely new.
The definitions of~\cite{DBLP:conf/nfm/GorostiagaS22} do not support any safety-liveness decomposition, because their notion of safety is too permissive, and their liveness too restrictive.
While the definitions of~\cite{DBLP:journals/isci/LiDL17} admit a safety-liveness decomposition, our definition of liveness captures strictly fewer properties.
Consequently, our definitions offer a stronger safety-liveness decomposition theorem.
Our definitions also fit naturally with the definitions of emptiness, equivalence, and inclusion for quantitative languages~\cite{DBLP:journals/tocl/ChatterjeeDH10}.

\paragraph*{\bf Overview.}
In Section~2, we introduce quantitative properties.
In Section~3, we define quantitative safety as well as safety closure, namely, the property that increases the value of each trace as little as possible to achieve safety.
Then, we prove that our definitions preserve classical boolean facts.
In particular, we show that a quantitative property $\varPhi$ is safe if and only if $\varPhi$ equals its safety closure if and only if $\varPhi$ is upper semicontinuous.
In Section~4, we generalize the safety-progress hierarchy to quantitative properties.
We first define limit properties.
For $\ell \in\{\inf, \sup, \liminf, \limsup\}$, the class of $\ell$-properties captures those for which the value of each infinite trace can be derived by applying the limit function $\ell$ to the infinite sequence of values of finite prefixes.
We prove that $\inf$-properties coincide with safety, $\sup$-properties with co-safety, $\liminf$-properties are suprema of countably many safety properties, and $\limsup$-properties infima of countably many co-safety properties.
The $\liminf$-properties generalize the boolean persistence properties of~\cite{ChangMP93}; the $\limsup$-properties generalize their response properties.
For example, \textsf{AvgRespTime} is a $\liminf$-property.
In Section~5, we introduce quantitative liveness and co-liveness.
We prove that our definitions preserve the classical boolean facts, and show that there is a unique property which is both safe and live.
As main result, we provide a safety-liveness decomposition that holds for every quantitative property.
In Section~6, we define approximate safety and co-safety.
We generalize the well-known unfolding approximation of discounted properties for approximate safety and co-safety properties over the extended reals.
This allows us to provide a finite-state approximate monitor for these properties.
In Section~7, we conclude with future research directions.

\paragraph*{\bf Related Work.}
The notions of safety and liveness for boolean properties appeared first in~\cite{DBLP:journals/tse/Lamport77} and were later formalized in~\cite{DBLP:journals/ipl/AlpernS85}, where safety properties were characterized as closed sets of the Cantor topology on infinite traces, and liveness properties as dense sets.
As a consequence, the seminal decomposition theorem followed:
every boolean property is an intersection of a safety property and a liveness property.
A benefit of such a decomposition lies in the difference between the mathematical arguments used in their verification.
While safety properties enable simpler methods such as invariants, liveness properties require more complex approaches such as well-foundedness~\cite{DBLP:journals/scp/MannaP84,DBLP:journals/dc/AlpernS87}.
These classes were characterized in terms of B\"{u}chi automata in~\cite{DBLP:journals/dc/AlpernS87} and in terms of linear temporal logic in~\cite{DBLP:journals/fac/Sistla94}.

The safety-progress classification of boolean properties~\cite{ChangMP93} proposes an orthogonal view:
rather than partitioning the set of properties, it provides a hierarchy of properties starting from safety.
This yields a more fine-grained view of nonsafety properties which distinguishes whether a ``good thing'' happens at least once (co-safety or ``guarantee''), infinitely many times (response), or eventually always (persistence).
This classification follows the Borel hierarchy that is induced by the Cantor topology on infinite traces, and has corresponding projections within properties that are definable by finite automata and by formulas of linear temporal logic.

Runtime verification, or monitoring, is a lightweight, dynamic verification technique~\cite{DBLP:series/lncs/BartocciFFR18}, where a monitor watches a system during its execution and tries to decide, after each finite sequence of observations, whether the observed finite computation trace or its unknown infinite extension satisfies a desired property.
The safety-liveness dichotomy has profound implications for runtime verification as well:
safety is easy to monitor~\cite{DBLP:conf/tacas/HavelundR02}, while liveness is not.
An early definition of boolean monitorability was equivalent to safety with recursively enumerable sets of bad prefixes~\cite{DBLP:journals/entcs/KimKLSV02}.
The monitoring of infinite-state boolean safety properties was later studied in~\cite{DBLP:conf/lics/FerrereHS18}.
A more popular definition of boolean monitorability~\cite{DBLP:conf/fm/PnueliZ06,DBLP:journals/tosem/BauerLS11} accounts for both truth and falsehood, establishing the set of monitorable properties as a strict superset of finite boolean combinations of safety and co-safety~\cite{DBLP:journals/sttt/FalconeFM12}.
Boolean monitors that use the set possible prediction values can be found in~\cite{DBLP:journals/logcom/BauerLS10}. 
The notion of boolean monitorability was investigated through the safety-liveness lens in~\cite{DBLP:conf/birthday/PeledH18} and through the safety-progress lens in~\cite{DBLP:journals/sttt/FalconeFM12}.

Quantitative properties (a.k.a.~``quantitative languages'')~\cite{DBLP:journals/tocl/ChatterjeeDH10} extend their boolean counterparts by moving from the two-valued truth domain to richer domains such as real numbers.
Such properties have been extensively studied from a static verification perspective in the past decade, e.g., in the context of
games with quantitative objectives~\cite{DBLP:reference/mc/BloemCJ18,DBLP:journals/acta/BouyerMRLL18}, specifying quantitative properties~\cite{DBLP:journals/tocl/BokerCHK14,DBLP:journals/tcs/AlfaroFHMS05}, measuring distances between systems~\cite{DBLP:conf/icalp/AlfaroFS04,DBLP:journals/tcs/CernyHR12,DBLP:journals/tcs/FahrenbergL14,DBLP:journals/ife/Henzinger13}, best-effort synthesis and repair~\cite{DBLP:conf/cav/BloemCHJ09,DBLP:conf/cav/DAntoniSS16}, and quantitative analysis of transition systems~\cite{DBLP:journals/jlp/ThraneFL10,DBLP:journals/cacm/BouyerFLM11,DBLP:conf/aplas/FahrenbergL13,DBLP:journals/tocl/ChatterjeeHO17}.
More recently, quantitative properties have been also studied from a runtime verification perspective, e.g., for limit monitoring of statistical indicators of infinite traces~\cite{DBLP:conf/csl/FerrereHK20} and for analyzing resource-precision trade-offs in the design of quantitative monitors~\cite{DBLP:conf/lics/HenzingerS21,DBLP:conf/rv/HenzingerMS22}.

To the best of our knowledge, previous definitions of (approximate) safety and liveness in nonboolean domains make implicit assumptions about the specification language~\cite{DBLP:conf/atva/WeinerHKPS13,DBLP:conf/csl/KatoenSZ14,DBLP:journals/acta/FaranK18,DBLP:journals/tr/QianSCP22}.
We identify two notable exceptions.
In~\cite{DBLP:conf/nfm/GorostiagaS22}, the authors generalize the framework of~\cite{DBLP:conf/birthday/PeledH18} to nonboolean value domains.
They provide neither a safety-liveness decomposition of quantitative properties, nor a fine-grained classification of nonsafety properties.
In~\cite{DBLP:journals/isci/LiDL17}, the authors present a safety-liveness decomposition and some levels of the safety-progress hierarchy on multi-valued truth domains, which are bounded distributive lattices.
Their motivation is to provide algorithms for model-checking properties on multi-valued truth domains.
We present the relationships between their definitions and ours in the relevant sections below.

\section{Quantitative Properties}
Let $\Sigma = \{a,b,\ldots\}$ be a finite alphabet of observations.
A \emph{trace} is an infinite sequence of observations, denoted by $f,g,h \in \Sigma^\omega$, and a \emph{finite trace} is a finite sequence of observations, denoted by $s,r,t \in \Sigma^*$.
Given $s \in \Sigma^*$ and $w \in \Sigma^* \cup \Sigma^\omega$, we denote by $s \prefix w$ (resp.\ $s \prefixeq w$) that $s$ is a strict (resp.\ nonstrict) prefix of $w$.
Furthermore, we denote by $|w|$ the length of $w$ and, given $a \in \Sigma$, by $|w|_a$ the number of occurrences of $a$ in $w$.

A \emph{value domain} $\D$ is a poset.
Unless otherwise stated, we assume that $\D$ is a nontrivial (i.e., $\bot \neq \top$) complete lattice and, whenever appropriate, we write $0, 1, -\infty, \infty$ instead of $\bot$ and $\top$ for the least and the greatest elements.
We respectively use the terms minimum and maximum for the greatest lower bound and the least upper bound of finitely many elements.

\begin{definition}[Property]
	A \emph{quantitative property} (or simply \emph{property}) is a function $\varPhi : \Sigma^\omega \to \D$ from the set of all traces to a value domain.
\end{definition}

A boolean property $P \subseteq \Sigma^\omega$ is defined as a set of traces.
We use the boolean domain $\B = \{0,1\}$ with $0 < 1$ and, in place of $P$, its \emph{characteristic property} $\varPhi_P : \Sigma^\omega \to \B$, which is defined by $\varPhi_P(f) = 1$ if $f \in P$, and $\varPhi_P(f) = 0$ if $f \notin P$.

For all properties $\varPhi_1,\varPhi_2$ on a domain $\D$ and all traces $f \in \Sigma^\omega$, we let $\min(\varPhi_1,\varPhi_2)(f) = \min(\varPhi_1(f),\varPhi_2(f))$ and $\max(\varPhi_1,\varPhi_2)(f) = \max(\varPhi_1(f),\varPhi_2(f))$.
For a domain $\D$, the \emph{inverse} of $\D$ is the domain $\overline{\D}$ that contains the same elements as $\D$ but with the ordering reversed.
For a property $\varPhi$, we define its \emph{complement} $\overline{\varPhi} : \Sigma^\omega \to \overline{\D}$ by $\overline{\varPhi}(f) = \varPhi(f)$ for all $f \in \Sigma^\omega$.

Some properties can be defined as limits of value sequences. 
A \emph{finitary property} $\pi \colon \Sigma^* \rightarrow \D$ associates a value with each finite trace.
A \emph{value function} $\ell \colon \D^\omega\rightarrow\D$ condenses an infinite sequence of values to a single value.
Given a finitary property $\pi$, a value function $\ell$, and a trace $f \in \Sigma^\omega$, we write $\ell_{s \prefix f} \pi(s)$ instead of $\ell(\pi(s_0)\pi(s_1)\ldots)$, where each $s_i$ fulfills $s_i \prefix f$ and $|s_i|=i$.

\section{Quantitative Safety}
Given a property $\varPhi : \Sigma^\omega \to \D$, a trace $f \in \Sigma^\omega$, and a value $v \in \D$, the quantitative membership problem~\cite{DBLP:journals/tocl/ChatterjeeDH10} asks whether $\varPhi(f) \geq v$.
We define quantitative safety as follows: the property $\varPhi$ is safe iff every wrong hypothesis of the form $\varPhi(f) \geq v$ has a finite witness $s \prec f$.

\begin{definition}[Safety]
	A property $\varPhi : \Sigma^\omega \to \D$ is \emph{safe} iff for every $f \in \Sigma^\omega$ and value $v \in \D$ with $\varPhi(f) \not \geq v$, there is a prefix $s \prefix f$ such that $\sup_{g \in \Sigma^\omega} \varPhi(sg) \not \geq v$.
\end{definition}

Let us illustrate this definition with the \emph{minimal response-time} property.

\begin{example} \label{ex:minresp}
	Let $\Sigma = \{\req, \gra, \tick, \other\}$ and $\D = \N \cup \{\infty\}$.
	We define the minimal response-time property $\varPhi_{\min}$ through an auxiliary finitary property $\pi_{\min}$ that computes the minimum response time so far.
	In a finite or infinite trace, an occurrence of $\req$ is \emph{granted} if it is followed, later, by a $\gra$, and otherwise it is \emph{pending}.
	Let $\pi_{\text{last}}(s) = \infty$ if the finite trace $s$ contains a pending $\req$, or no $\req$, and $\pi_{\text{last}}(s) = |r|_\tick - |t|_\tick$ otherwise, where $r \prec s$ is the longest prefix of $s$ with a pending $\req$, and $t \prec r$ is the longest prefix of $r$ without pending $\req$.
	Intuitively, $\pi_{\text{last}}$ provides the response time for the last request when all requests are granted, and $\infty$ when there is a pending request or no request.
	Given $s \in \Sigma^*$, taking the minimum of the values of $\pi_{\text{last}}$ over the prefixes $r \preceq s$ gives us the minimum response time so far.
	Let $\pi_{\min}(s) = \min_{r \preceq s} \pi_{\text{last}}(r)$ for all $s \in \Sigma^*$, and $\varPhi_{\min}(f) = \lim_{s \prec f} \pi_{\min}(s)$ for all $f\in\Sigma^\omega$.
	The limit always exists because the minimum is monotonically decreasing.
	
	The minimal response-time property is safe.
	Let $f \in \Sigma^\omega$ and $v \in \D$ such that $\varPhi_{\min}(f) < v$.
	Then, some prefix $s \prec f$ contains a $\req$ that is granted after $u < v$ ticks, in which case, no matter what happens in the future, the minimal response time is guaranteed to be at most $u$; that is, $\sup_{g \in \Sigma^\omega} \varPhi_{\min}(sg) \leq u < v$.
	If you recall from the introduction the ghost monitor that maintains the $\sup$ of possible prediction values for the minimal response-time property, that value is always $\pi_{\min}$; that is, $\sup_{g \in \Sigma^\omega} \varPhi_{\min}(sg) = \pi_{\min}(s)$ for all $s\in \Sigma^*$.
	Note that in the case of minimal response time, the $\sup$ of possible prediction values is always realizable; that is, for all $s\in \Sigma^*$, there exists an $f\in\Sigma^{\omega}$ such that $\sup_{g \in \Sigma^\omega} \varPhi_{\min}(sg) = \varPhi_{\min}(sf)$.
\qed\end{example}

\begin{proposition} \label{prop:safetyboolean}
	Quantitative safety generalizes boolean safety.
	For every boolean property $P \subseteq \Sigma^\omega$, the following statements are equivalent:
		(i)~$P$ is safe according to the classical definition~\cite{DBLP:journals/ipl/AlpernS85},
        (ii)~its characteristic property $\varPhi_P$ is safe, and
        (iii)~for every $f \in \Sigma^\omega$ and $v \in \B$ with $\varPhi_P(f) < v$, there exists a prefix $s \prefix f$ such that for all $g \in \Sigma^\omega$, we have $\varPhi_P(sg) < v$.
\end{proposition}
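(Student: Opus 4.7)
The plan is to prove the three implications via a simple cycle, exploiting that the boolean domain $\B=\{0,1\}$ is totally ordered with only two elements. The first observation I would make is that both (ii) and (iii) are vacuous when $v=0$ (since $\varPhi_P(f)\geq 0$ holds for every trace), so it suffices to analyze $v=1$. For $v=1$, the key reduction is $\varPhi_P(f)\not\geq 1 \iff \varPhi_P(f)<1 \iff \varPhi_P(f)=0 \iff f\notin P$, and analogously $\sup_g \varPhi_P(sg)\not\geq 1 \iff \sup_g \varPhi_P(sg)=0 \iff \varPhi_P(sg)=0$ for all $g\in\Sigma^\omega$. The latter equivalence is what makes the boolean case degenerate: because $\B$ is totally ordered and finite, falsifying the supremum is equivalent to falsifying every pointwise value.

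With these translations in hand, I would show (i)$\Rightarrow$(ii): assume $\varPhi_P(f)\not\geq v$; the only nontrivial case is $v=1$, which means $f\notin P$, so the classical definition yields $s\prec f$ with $sg\notin P$ for all $g$, i.e.\ $\varPhi_P(sg)=0$ for all $g$, hence $\sup_g \varPhi_P(sg)=0\not\geq 1$. For (ii)$\Rightarrow$(iii), from $\varPhi_P(f)<v$ (necessarily $v=1$) apply (ii) to get $s\prec f$ with $\sup_g \varPhi_P(sg)\not\geq 1$, which by the equivalence above implies $\varPhi_P(sg)=0<1=v$ for every $g$. Finally (iii)$\Rightarrow$(i) is immediate: given $f\notin P$, apply (iii) with $v=1$ to recover the classical bad-prefix condition.

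The only subtle point, and therefore the one I would flag most carefully, is the passage between $\sup_g \varPhi_P(sg)\not\geq v$ and the universal statement $\varPhi_P(sg)\not\geq v$ for all $g$. In a general value domain the quantitative definition uses the supremum precisely because these two formulations can diverge when the supremum is not attained; in $\B$ they coincide because the lattice is finite and totally ordered. Making this explicit is what justifies formulation (iii) as a faithful boolean reading of the quantitative definition, and it is the one place where the proof genuinely uses a property of $\B$ rather than mere syntactic unfolding of $\varPhi_P$.
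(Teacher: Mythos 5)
Your proof is correct and follows essentially the same route as the paper's: reduce to the threshold $v=1$ (noting $v=0$ is vacuous), translate $\varPhi_P(f)<1$ into $f\notin P$, and use the fact that $\B$ is a finite total order to identify $\sup_{g}\varPhi_P(sg)\not\geq 1$ with the pointwise statement $\varPhi_P(sg)<1$ for all $g$. The only difference is organizational (a cycle of implications versus the paper's two pairwise equivalences), and you correctly flag the sup-versus-pointwise step as the one place where a property of $\B$ is genuinely used.
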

\begin{proof}
	Recall that (i) means the following: for every $f \notin P$ there exists $s \prec f$ such that for all $g \in \Sigma^\omega$ we have $sg \notin P$.
	Expressing the same statement with the characteristic property $\varPhi_P$ of $P$ gives us for every $f \in \Sigma^\omega$ with $\varPhi_P(f) = 0$ there exists $s \prec f$ such that for all $g \in \Sigma^\omega$ we have $\varPhi_P(sg) = 0$.
	In particular, since $\B = \{0,1\}$ and $0 < 1$, we have for every $f \in \Sigma^\omega$ with $\varPhi_P(f) < 1$ there exists $s \prec f$ such that for all $g \in \Sigma^\omega$ we have $\varPhi_P(sg) < 1$.
	Moreover, since there is no $f \in \Sigma^\omega$ with $\varPhi_P(f) < 0$, we get the equivalence between (i) and (iii).
	Now, observe that for every $s \in \Sigma^*$, we have $\varPhi_P(sg) < 1$ for all $g \in \Sigma^\omega$ iff $\sup_{g \in \Sigma^\omega} \varPhi_P(sg) < 1$, simply because the domain $\B$ is a finite total order.
	Therefore, (ii) and (iii) are equivalent as well.
\qed\end{proof}

We now generalize the notion of safety closure and present an operation that makes a property safe by increasing the value of each trace as little as possible.

\begin{definition}[Safety closure]
  The \emph{safety closure} of a property $\varPhi$ is the property $\varPhi^*$ defined by $\varPhi^*(f) = \inf_{s \prec f} \sup_{g \in \Sigma^\omega} \varPhi(sg)$ for all $f\in\Sigma^\omega$. 
\end{definition}

We can say the following about the safety closure operation.

\begin{proposition}\label{proposition:safe:closure}
For every property $\varPhi : \Sigma^\omega \to \D$, the following statements hold.
\begin{enumerate}
	\item $\varPhi^*$ is safe.
	\item $\varPhi^*(f) \geq \varPhi(f)$ for all $f\in\Sigma^\omega$.
	\item $\varPhi^*(f) = {\varPhi^*}^*(f)$ for all $f\in\Sigma^\omega$.
	\item For every safety property $\varPsi : \Sigma^\omega \to \D$, if $\varPhi(f) \leq \varPsi(f)$ for all $f\in\Sigma^\omega$, then $\varPsi(g) \not< \varPhi^*(g)$ for all $g\in\Sigma^\omega$.
\end{enumerate}
\end{proposition}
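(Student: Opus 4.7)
The plan is to establish the four items in the order (2), (1), (4), (3), with (3) falling out as a corollary of the others.

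Item (2) is immediate from unfolding the definition: for any $f$ and any prefix $s \prec f$, the trace $f$ itself is among the extensions $sg$ of $s$, so $\sup_{g \in \Sigma^\omega} \varPhi(sg) \geq \varPhi(f)$, and taking $\inf_{s \prec f}$ preserves this lower bound. Item (1) is a direct chase through the safety definition. Given $f$ and $v$ with $\varPhi^*(f) \not\geq v$, the definition of $\varPhi^*(f)$ as an infimum produces a prefix $s \prec f$ with $u \defeq \sup_{g \in \Sigma^\omega} \varPhi(sg) \not\geq v$; for every extension $sg$, the infimum defining $\varPhi^*(sg)$ ranges over prefixes of $sg$, which includes $s$, so $\varPhi^*(sg) \leq u$, and hence $\sup_{g \in \Sigma^\omega} \varPhi^*(sg) \leq u \not\geq v$, witnessing safety of $\varPhi^*$ at threshold $v$.

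For item (4) I would argue by contradiction. Suppose $\varPsi(g) < \varPhi^*(g)$ for some $g$, and set $v \defeq \varPhi^*(g)$; then $\varPsi(g) \not\geq v$, so the safety of $\varPsi$ yields a prefix $s \prec g$ with $\sup_h \varPsi(sh) \not\geq v$. Pointwise domination $\varPhi \leq \varPsi$ gives $\sup_h \varPhi(sh) \leq \sup_h \varPsi(sh)$, and the contrapositive of transitivity (if $a \leq b$ and $b \not\geq v$, then $a \not\geq v$) produces $\sup_h \varPhi(sh) \not\geq v$. Since $\varPhi^*(g) \leq \sup_h \varPhi(sh)$ by definition, applying the same step once more yields $\varPhi^*(g) \not\geq v = \varPhi^*(g)$, contradicting reflexivity of $\geq$.

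Item (3) then follows by applying (4) with $\varPsi \defeq \varPhi^*$, which is safe by (1) and trivially dominates itself: one obtains $\varPhi^*(g) \not< {\varPhi^*}^*(g)$, and combined with ${\varPhi^*}^*(g) \geq \varPhi^*(g)$ given by (2) applied to $\varPhi^*$, the partial-order axioms force equality. The only subtle point throughout is partial-order bookkeeping: since $\D$ need not be totally ordered, $a \not\geq b$ is weaker than $a < b$, and the transitivity arguments must be read in contrapositive form. Apart from this care, each part reduces to a straightforward manipulation of infima and suprema against the definition of quantitative safety.
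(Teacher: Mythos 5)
Your proof is correct, and items (1), (2), and (4) follow essentially the same路线 as the paper's: (2) because $f$ itself is an extension of each of its prefixes, (1) via the observation that $\varPhi^*(sg)\leq\sup_{h\in\Sigma^\omega}\varPhi(sh)$ whenever $s$ is a prefix of $sg$ (the paper packages this as the standalone inequality $\sup_{g}\varPhi^*(sg)\leq\sup_{g}\varPhi(sg)$ and argues by contradiction, whereas you use it directly and locally, which is slightly cleaner), and (4) by the same contradiction through the safety of $\varPsi$ and pointwise domination. The one genuine divergence is item (3): the paper proves idempotence computationally, by taking the infimum over prefixes in the key inequality to get ${\varPhi^*}^*\leq\varPhi^*$ and combining with (2) applied to $\varPhi^*$; you instead deduce it abstractly from the universal property, applying (4) to the property $\varPhi^*$ with $\varPsi=\varPhi^*$ and then invoking (2). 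Both are valid; your route is the standard closure-operator argument (extensivity plus minimality forces idempotence) and reuses work already done, while the paper's keeps each item self-contained and does not require first generalizing (4) to an arbitrary base property. Your explicit attention to the partial-order subtleties---that $a\not\geq v$ is weaker than $a<v$, that the transitivity steps must be read contrapositively, and that $a\leq b$ together with $a\not< b$ forces $a=b$---is exactly the care this statement needs on a non-total domain, and it is handled correctly throughout.
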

\begin{proof}
	We first prove that $\sup_{g \in \Sigma^\omega} \varPhi^*(sg) \leq \sup_{g \in \Sigma^\omega} \varPhi(sg)$ for all $s\in\Sigma^*$, in other words, $\sup_{g \in \Sigma^\omega} \inf_{r \prefix sg} \sup_{h \in \Sigma^\omega} \varPhi(rh) \leq \sup_{g \in \Sigma^\omega} \varPhi(sg)$ for all $s\in\Sigma^*$.
	This will be useful for the proofs of the first and the third items above.
	$$\begin{array}{l}
		\forall s, \sup_{g \in \Sigma^\omega} \varPhi(sg) \in \{ \sup_{h \in \Sigma^\omega} \varPhi(rh) \st r \prefixeq s \}
		\\
		\implies \forall s, \sup_{g \in \Sigma^\omega} \varPhi(sg) \geq \inf_{r \prefixeq s} \sup_{h \in \Sigma^\omega} \varPhi(rh)
		\\
		\implies \forall s, \sup_{g \in \Sigma^\omega} \varPhi(sg) \geq  \sup_{g\in\Sigma^\omega} \inf_{r \prefixeq s} \sup_{h \in \Sigma^\omega} \varPhi(rh) \hfill (\dagger)
		\medskip\\
		\forall s, s',  \sup_{g \in \Sigma^\omega} \varPhi(sg) \geq \sup_{h \in \Sigma^\omega} \varPhi(ss'h)
		\\
		\implies \forall s, \sup_{g \in \Sigma^\omega} \varPhi(sg) \geq  \sup_{g\in\Sigma^\omega} \inf_{s' \prefix g} \sup_{h \in \Sigma^\omega} \varPhi(ss'h) \hfill(\ddagger)
		\medskip\\
		(\dagger) \land (\ddagger) \implies  \forall s, \sup_{g \in \Sigma^\omega} \varPhi(sg) \geq  \sup_{g\in\Sigma^\omega} \inf_{r \prefix sg} \sup_{h \in \Sigma^\omega} \varPhi(rh)
	\end{array}$$
	
	Now, we prove that $\varPhi^*$ is safe.
	Suppose towards contradiction that $\varPhi^*$ is not safe, i.e., there exists $f$ and $v$ for which $\varPhi^*(f) \ngeq v$ and $\sup_{g \in \Sigma^\omega} \varPhi^*(sg) \geq v$ for all $s \prefix f$.	
	As a direct consequence of the fact that $\sup_{g \in \Sigma^\omega} \varPhi^*(sg) \leq \sup_{g \in \Sigma^\omega} \varPhi(sg)$ for all $s\in\Sigma^*$, we have that $\inf_{s \prefix f}\sup_{g \in \Sigma^\omega} \varPhi(sg) \geq v$.
	It implies that $\varPhi^*(f) \geq v$, which contradicts the hypothesis $\varPhi^*(f) \ngeq v$.
	Hence $\varPhi^*$ is safe.
	
	Next, we prove that $\varPhi^*(f) \geq \varPhi(f)$ for all $f\in\Sigma^\omega$.
	Given $s \in \Sigma^*$, let $P_{\varPhi,s} = \{\varPhi(sg) \st g \in \Sigma^\omega\}$.
	Observe that $\varPhi^*(f) = \lim_{s \prec f} (\sup P_{\varPhi,s})$ for all $f\in\Sigma^\omega$.
	Moreover, $\varPhi(f) \in P_{\varPhi,s}$ for each $s \prec f$, and thus $\sup P_{\varPhi,s} \geq \varPhi(f)$ for each $s \prec f$, which implies $\lim_{s \prec f} (\sup P_{\varPhi,s}) \geq \varPhi(f)$, since the sequence of suprema is monotonically decreasing.
	
	Next, we prove that $\varPhi^*(f) = {\varPhi^*}^*(f)$ for all $f\in\Sigma^\omega$.
	Recall from the first paragraph that $\sup_{g \in \Sigma^\omega} \varPhi^*(sg) \leq \sup_{g \in \Sigma^\omega} \varPhi(sg)$ for all $s\in\Sigma^*$.
	So, for every $f\in\Sigma^\omega$, we have $\inf_{s \prefix f} \sup_{g \in \Sigma^\omega} \varPhi^*(sg) \leq \inf_{s \prefix f} \sup_{g \in \Sigma^\omega} \varPhi(sg)$ and thus ${\varPhi^*}^*(f) \leq \varPhi^*(f)$ for all $f\in\Sigma^\omega$.
	Since we also have ${\varPhi^*}^*(f) \geq \varPhi^*(f)$, then ${\varPhi^*}^*(f) = \varPhi^*(f)$ for all $f\in\Sigma^\omega$.
	
	Finally, we prove that $\varPhi^*$ is the least safety property that bounds $\varPhi$ from above.
	Suppose towards contradiction that there exists a safety property $\varPsi$ such that $\varPhi(f)\leq \varPsi(f)$ holds for all $f\in\Sigma^\omega$ but there exists $g \in \Sigma^\omega$ satisfying $\varPsi(g) < \varPhi^*(g)$.
	Since $\varPsi(g) \not\geq \varPhi^*(g)$ and as $\varPsi$ is safe, there exists $s \prefix g$ for which $\sup_{h\in\Sigma^\omega} \varPsi(sh) \not\geq \varPhi^*(g)$.
	Let $v = \sup_{h\in\Sigma^\omega} \varPsi(sh)$.
	Furthermore, we have $v \geq \sup_{h\in\Sigma^\omega} \varPhi(sh) $ by hypothesis.
	Consider the set $S_g = \{u \in \D \st \exists r \prefix g : \sup_{h\in\Sigma^\omega} \varPhi(rh)
	\leq u \}$ and observe that $v \in S_g$.
	By definition, $\varPhi^*(g)=\inf S_g$, implying that $v \geq \varPhi^*(g)$, which contradicts the choice of $v$.
\qed\end{proof}

\subsection{Alternative Characterizations of Quantitative Safety}
Consider a trace and its prefixes of increasing length.
For a given property, the ghost monitor from the introduction maintains, for each prefix, the $\sup$ of possible prediction values, i.e., the least upper bound of the property values for all possible infinite continuations.
The resulting sequence of monotonically decreasing suprema provides an upper bound on the eventual property value.
Moreover, for some properties, this sequence always converges to the property value.
If this is the case, then the ghost monitor can always dismiss wrong lower-bound hypotheses after finite prefixes, and vice versa.
This gives us an alternative definition for the safety of quantitative properties which, inspired by the notion of Scott continuity, was called \emph{continuity}~\cite{DBLP:conf/lics/HenzingerS21}.
We believe that \emph{upper semicontinuity} is a more appropriate term, as becomes clear when we consider the Cantor topology on $\Sigma^\omega$ and the value domain $\R \cup \{-\infty, +\infty\}$ with the standard order topology.

\begin{definition} [Upper semicontinuity~\cite{DBLP:conf/lics/HenzingerS21}]
	A property $\varPhi$ is \emph{upper semicontinuous} iff $\varPhi(f) = \lim_{s \prefix f} \sup_{g \in \Sigma^\omega} \varPhi(sg)$ for all $f \in \Sigma^\omega$. 
\end{definition}

We note that the minimal response-time property is upper semicontinuous.
\begin{example}
	Recall the minimal response-time property $\varPhi_{\min}$ from Example~\ref{ex:minresp}.
	For every trace $f\in\Sigma^\omega$, the $\varPhi_{\min}$ value is the limit of the $\pi_{\min}$ values for the prefixes of $f$.
	Therefore, $\varPhi_{\min}$ is upper semicontinuous.
\qed\end{example}

In general, a property is safe iff it maps every trace to the limit of the suprema of possible prediction values.
Moreover, we can also characterize safety properties as the properties that are equal to their safety closure.

\begin{theorem}\label{thm:safe:main}
	For every property $\varPhi$, the following statements are equivalent:
	\begin{enumerate}
		\item $\varPhi$ is safe.
		\item $\varPhi$ is upper semicontinuous.
		\item $\varPhi(f) = \varPhi^*(f)$ for all $f \in \Sigma^\omega$.
	\end{enumerate}
\end{theorem}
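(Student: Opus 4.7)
My plan is to close the cycle (1)$\Rightarrow$(3)$\Rightarrow$(2)$\Rightarrow$(1), treating the (2)$\Leftrightarrow$(3) link as essentially a triviality and putting the real work into (1)$\Rightarrow$(3).

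First I would observe that for any trace $f$, the sequence of suprema $\sup_{g\in\Sigma^\omega}\varPhi(sg)$ indexed by the prefixes $s\prec f$ in order of increasing length is monotonically decreasing in the lattice order, because extending $s$ to $sa$ only restricts the set of continuations $\{sg\}$. In a complete lattice, the limit of a monotonically decreasing sequence is its infimum, so $\lim_{s\prec f}\sup_{g\in\Sigma^\omega}\varPhi(sg) = \inf_{s\prec f}\sup_{g\in\Sigma^\omega}\varPhi(sg) = \varPhi^*(f)$. This immediately gives (2)$\Leftrightarrow$(3) by comparing the definitions of upper semicontinuity and of $\varPhi^*$.

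Next, (3)$\Rightarrow$(1) is immediate from Proposition~\ref{proposition:safe:closure}(1), which asserts that $\varPhi^*$ is always safe: if $\varPhi=\varPhi^*$ pointwise, then $\varPhi$ is safe.

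The real step is (1)$\Rightarrow$(3). From Proposition~\ref{proposition:safe:closure}(2) we already have $\varPhi^*(f)\geq\varPhi(f)$ for every $f$, so I only need the reverse inequality. The plan is to argue by the contrapositive of the safety condition: for any $v\in\D$, if $\varPhi(f)\not\geq v$, then safety yields a prefix $s\prec f$ with $\sup_{g}\varPhi(sg)\not\geq v$, and therefore $\varPhi^*(f)=\inf_{s'\prec f}\sup_{g}\varPhi(s'g)\not\geq v$ as well (the infimum of a set containing an element that is $\not\geq v$ cannot itself be $\geq v$). Contrapositively, $\varPhi^*(f)\geq v$ implies $\varPhi(f)\geq v$ for every $v\in\D$. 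Instantiating $v=\varPhi^*(f)$ gives $\varPhi(f)\geq\varPhi^*(f)$, and combined with Proposition~\ref{proposition:safe:closure}(2) we conclude $\varPhi(f)=\varPhi^*(f)$.

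The main subtlety to watch is that $\D$ is only a partial order, so $\not\geq v$ does not mean $<v$; the whole argument must be phrased in terms of the predicate $\geq v$, which is exactly how the safety definition is stated. The one step where this needs care is the claim that if some element of a set is $\not\geq v$, then the infimum of that set is $\not\geq v$; this is simply the fact that the infimum is a lower bound, so if it were $\geq v$ then every element of the set would be $\geq v$ too. Everything else is formal manipulation of the definitions of $\varPhi^*$ and of upper semicontinuity, together with the two facts from Proposition~\ref{proposition:safe:closure} that we are allowed to cite.
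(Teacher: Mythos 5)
Your proof is correct, and it closes the equivalences along a slightly different cycle than the paper does. The paper proves $(1)\Leftrightarrow(2)$ head-on (both directions by contradiction, using the monotone decrease of $s \mapsto \sup_{g}\varPhi(sg)$ to compare the limit with the value at a single witnessing prefix) and dismisses $(2)\Leftrightarrow(3)$ as definitional, exactly as you do. You instead make $(1)\Leftrightarrow(3)$ the working edge: for $(3)\Rightarrow(1)$ you simply cite Proposition~\ref{proposition:safe:closure}(1) (safety of the closure), which the paper's proof does not use at all, and for $(1)\Rightarrow(3)$ you run the contrapositive of the safety condition through the fact that an infimum is a lower bound, then instantiate $v = \varPhi^*(f)$ and invoke Proposition~\ref{proposition:safe:closure}(2) plus antisymmetry. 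Mathematically this is the same kernel as the paper's $(1)\Rightarrow(2)$ direction (the paper sets $v$ equal to the limit, i.e.\ to $\varPhi^*(f)$, and derives a contradiction), but your phrasing is a bit cleaner: it avoids the explicit appeal to monotonicity of the supremum sequence and states the partial-order caveat ($\not\geq v$ is not $< v$) that the paper glosses over. What the paper's self-contained route buys is independence from Proposition~\ref{proposition:safe:closure}(1); what yours buys is brevity and reuse of already-established facts. Either is acceptable.
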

\begin{proof} 
		We only show the first equivalence as the other follows from the definitions.
		Assume $\varPhi$ is safe, i.e., for all $f \in \Sigma^\omega$ and $v \in \D$ if $\varPhi(f) \not \geq v$ then there exists $s \prec f$ with $\sup_{g \in \Sigma^\omega} \varPhi(sg) \not \geq v$.
		Suppose towards contradiction that $\varPhi$ is not upper semicontinuous, i.e., for some $f' \in \Sigma^\omega$ we have $\varPhi(f') < \lim_{s' \prec f'} \sup_{g \in \Sigma^\omega} \varPhi(s'g)$.
		Let $v = \lim_{s' \prec f'} \sup_{g \in \Sigma^\omega} \varPhi(s'g)$.	
		Since $\varPhi$ is safe and $\varPhi(f') \not \geq v$, there exists $r \prec f'$ such that $\sup_{g \in \Sigma^\omega} \varPhi(rg) \not \geq v$.
		Observe that for all $f \in \Sigma^\omega$ and $s_1 \prec s_2 \prec f$ we have $\sup_{g \in \Sigma^\omega} \varPhi(s_2 g) \leq \sup_{g \in \Sigma^\omega} \varPhi(s_1 g)$, i.e., the supremum is monotonically decreasing with longer prefixes.
		Therefore, we have $\lim_{s' \prec f'} \sup_{g \in \Sigma^\omega} \varPhi(s'g) \leq \sup_{g \in \Sigma^\omega} \varPhi(rg)$.
		But since $\sup_{g \in \Sigma^\omega} \varPhi(rg) \not \geq v$, we get a contradiction.
		
		Now, assume $\varPhi$ is upper semicontinuous, i.e., for all $f \in \Sigma^\omega$ we have $\varPhi(f) = \lim_{s \prec f} \sup_{g \in \Sigma^\omega} \varPhi(sg)$.
		Suppose towards contradiction that $\varPhi$ is not safe, i.e., for some $f' \in \Sigma^\omega$ and $v \in \D$ with $\varPhi(f') \not \geq v$ we have that $\sup_{g \in \Sigma^\omega} \varPhi(s'g) \geq v$ for all $s' \prec f'$.
		Since the supremum over all infinite continuations is monotonically decreasing as we observed above, we get $\lim_{s' \prec f'} \sup_{g \in \Sigma^\omega} \varPhi(s'g) \geq v$.
		However, since $\varPhi$ is upper semicontinuous, we have $\varPhi(f') = \lim_{s' \prec f'} \sup_{g \in \Sigma^\omega} \varPhi(s'g)$.
		Therefore, we obtain a contradiction to $\varPhi(f') \not \geq v$.
	\qed\end{proof}

\subsection{Related Definitions of Quantitative Safety}


In~\cite{DBLP:journals/isci/LiDL17}, the authors consider the model-checking problem for properties  on multi-valued truth domains.
They introduce the notion of multi-safety through a closure operation that coincides with our safety closure.
Formally, a property $\varPhi$ is \emph{multi-safe} iff $\varPhi(f) = \varPhi^*(f)$ for every $f \in \Sigma^\omega$.
It is easy to see the following.

\begin{proposition}
	For every property $\varPhi$, we have $\varPhi$ is multi-safe iff $\varPhi$ is safe.
\end{proposition}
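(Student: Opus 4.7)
The plan is essentially to observe that this proposition is an immediate corollary of Theorem~\ref{thm:safe:main}. The definition of multi-safety provided, namely $\varPhi(f) = \varPhi^*(f)$ for every $f \in \Sigma^\omega$, is syntactically identical to clause~(3) of that theorem. Since Theorem~\ref{thm:safe:main} has already established the equivalence between clauses~(1) and~(3), that is, between $\varPhi$ being safe and $\varPhi$ coinciding with its safety closure, the proposition follows without any additional work.

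Concretely, I would write one or two sentences: first recall the definition of multi-safety from \cite{DBLP:journals/isci/LiDL17} (stating that $\varPhi$ is multi-safe iff $\varPhi = \varPhi^*$ pointwise), then note that this is precisely the third condition in Theorem~\ref{thm:safe:main}, which is equivalent to the first condition, namely that $\varPhi$ is safe. No case analysis on $\D$, no topological argument, and no reuse of Proposition~\ref{proposition:safe:closure} is required beyond what has already been folded into the theorem.

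There is essentially no obstacle to overcome: the real content of the result, the nontrivial inequality between the trace value and its closure, was already handled when proving $(1)\Leftrightarrow(3)$ in Theorem~\ref{thm:safe:main}. The only thing worth double-checking is that the closure operator used in \cite{DBLP:journals/isci/LiDL17} truly matches the one defined here, i.e., $\varPhi^*(f) = \inf_{s \prec f}\sup_{g \in \Sigma^\omega} \varPhi(sg)$; the preceding paragraph of the excerpt explicitly asserts that it does, so we may treat this as given. Hence the proof is a direct pointer back to Theorem~\ref{thm:safe:main}.
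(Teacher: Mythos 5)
Your proposal is correct and matches the paper's intent exactly: the paper states this proposition without proof, prefaced by ``It is easy to see the following,'' precisely because multi-safety is by definition condition~(3) of Theorem~\ref{thm:safe:main}, whose equivalence with safety (condition~(1)) was already established. Your one-line pointer back to that theorem is all that is needed.
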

	
Although the two definitions of safety are equivalent, our definition is consistent with the membership problem for quantitative automata and motivated by the monitoring of quantitative properties.

In~\cite{DBLP:conf/nfm/GorostiagaS22}, the authors extend a refinement of the safety-liveness classification for monitoring~\cite{DBLP:conf/birthday/PeledH18} to richer domains.
They introduce the notion of verdict-safety through dismissibility of values not less than or equal to the property value.
Formally, a property $\varPhi$ is \emph{verdict-safe} iff for every $f \in \Sigma^\omega$ and $v \not \leq \varPhi(f)$, there exists a prefix $s \prec f$ such that for all $g \in \Sigma^\omega$, we have $\varPhi(sg) \neq v$. 

We demonstrate that verdict-safety is weaker than safety.
Moreover, we provide a condition under which the two definitions coincide.
To achieve this, we reason about sets of possible prediction values:
for a property $\varPhi$ and $s \in \Sigma^*$, let $P_{\varPhi,s} = \{\varPhi(sf) \st f \in \Sigma^\omega\}$.

\begin{lemma}\label{lem:GS:characterization}
	A property $\varPhi$ is verdict-safe iff $\varPhi(f) = \sup (\lim_{s \prec f} P_{\varPhi,s})$ for all $f \in \Sigma^\omega$. 
\end{lemma}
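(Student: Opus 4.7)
The plan is to unpack the definition of verdict-safety set-theoretically and identify $\lim_{s\prec f} P_{\varPhi,s}$ with the descending intersection $\bigcap_{s\prec f} P_{\varPhi,s}$. Observe that if $s\prefix s'\prec f$, then every $\varPhi(s'g)$ is of the form $\varPhi(sg')$ by concatenation, so $P_{\varPhi,s'}\subseteq P_{\varPhi,s}$; the sequence of sets indexed by the prefixes of $f$ is thus monotonically decreasing, and its limit can be written as $\bigcap_{s\prec f} P_{\varPhi,s}$. So membership $v\in\lim_{s\prec f} P_{\varPhi,s}$ is equivalent to: for every prefix $s\prec f$ there exists $g\in\Sigma^\omega$ with $\varPhi(sg)=v$. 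Note further that $\varPhi(f)$ itself lies in $P_{\varPhi,s}$ for every $s\prec f$ (pick $g$ with $sg=f$), and hence $\varPhi(f)\in\lim_{s\prec f} P_{\varPhi,s}$, which already gives the easy inequality $\varPhi(f)\leq\sup(\lim_{s\prec f} P_{\varPhi,s})$ in every case.

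Now rewrite verdict-safety using the contrapositive: $\varPhi$ is verdict-safe iff for every $f$ and every $v$, whenever no finite prefix $s\prec f$ dismisses $v$ (i.e.\ $v\in P_{\varPhi,s}$ for all $s\prec f$, equivalently $v\in\lim_{s\prec f} P_{\varPhi,s}$), we have $v\leq\varPhi(f)$. Equivalently, verdict-safety says exactly that every element of $\lim_{s\prec f} P_{\varPhi,s}$ is bounded above by $\varPhi(f)$, which in a complete lattice is equivalent to $\sup(\lim_{s\prec f} P_{\varPhi,s})\leq\varPhi(f)$.

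For the forward direction, assume $\varPhi$ is verdict-safe. Pick any $f\in\Sigma^\omega$; the reformulation above gives $\sup(\lim_{s\prec f} P_{\varPhi,s})\leq\varPhi(f)$, and combining this with the always-true reverse inequality from the opening paragraph yields the desired equality. For the converse, suppose $\varPhi(f)=\sup(\lim_{s\prec f} P_{\varPhi,s})$ for every $f$, and consider any $v\not\leq\varPhi(f)$. Then $v$ cannot belong to $\lim_{s\prec f} P_{\varPhi,s}$, since otherwise $v\leq\sup(\lim_{s\prec f} P_{\varPhi,s})=\varPhi(f)$, contradicting the choice of $v$. Hence there exists some prefix $s\prec f$ with $v\notin P_{\varPhi,s}$, i.e.\ $\varPhi(sg)\neq v$ for every $g\in\Sigma^\omega$, which is exactly the verdict-safety condition.

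The only delicate point is fixing the meaning of $\lim_{s\prec f} P_{\varPhi,s}$: I would state explicitly, using the monotone-inclusion argument above, that the limit is simply the intersection, so that the whole proof becomes a clean manipulation of ``element of every $P_{\varPhi,s}$'' versus ``dismissible by some finite prefix.'' Once that identification is in place, no further subtlety with the lattice structure of $\D$ is needed beyond the standard fact $\sup S\leq u\iff\forall x\in S\colon x\leq u$.
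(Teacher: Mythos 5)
Your proof is correct and takes essentially the same route as the paper's: identify $\lim_{s\prec f}P_{\varPhi,s}$ with the descending intersection, observe that $\varPhi(f)$ always belongs to it, and recast verdict-safety as the statement that no value $v\not\leq\varPhi(f)$ survives into that intersection. Your reformulation via $\sup(\lim_{s\prec f}P_{\varPhi,s})\leq\varPhi(f)$ is in fact slightly cleaner than the paper's two contradiction arguments, and handles the partial-order subtleties (where the paper loosely writes strict inequalities) more carefully.
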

\begin{proof}
	For all $f \in \Sigma^\omega$ let us define $P_f = \lim_{s \prec f} P_{\varPhi,s} = \bigcap_{s \prec f} P_{\varPhi,s}$.
	Assume $\varPhi$ is verdict-safe and suppose towards contradiction that $\varPhi(f) \neq \sup P_f$ for some $f \in \Sigma^\omega$.
	If $\varPhi(f) \not \leq \sup P_f$, then $\varPhi(f) \notin P_f$, which is a contradiction.
	Otherwise, if $\varPhi(f) < \sup P_f$, there exists $v \not \leq \varPhi(f)$ with $v \in P_f$.
	It means that there is no $s \prec f$ that dismisses the value $v \not \leq \varPhi(f)$, which contradicts the fact that $\varPhi$ is verdict-safe.
	Therefore, $\varPhi(f) = \sup P_f$ for all $f \in \Sigma^\omega$.
	
	We prove the other direction by contrapositive.
	Assume $\varPhi$ is not verdict-safe, i.e., for some $f \in \Sigma^\omega$ and $v \not \leq \varPhi(f)$, every $s \prec f$ has an extension $g \in \Sigma^\omega$ with $\varPhi(sg) = v$.
	Equivalently, for some $f \in \Sigma^\omega$ and $v \not \leq \varPhi(f)$, every $s \prec f$ satisfies $v \in P_{\varPhi,s}$.
	Then, $v \in P_f$, but since $v \not \leq \varPhi(f)$, we have $\sup P_f > \varPhi(f)$.
\qed\end{proof}

Notice that $\varPhi$ is safe iff $\varPhi(f) = \lim_{s \prec f} (\sup P_{\varPhi,s})$ for all $f \in \Sigma^\omega$.
Below we describe a property that is verdict-safe but not safe.

\begin{example}
	Let $\Sigma = \{a,b\}$.
	Define $\varPhi$ by $\varPhi(f) = 0$ if $f = a^\omega$, and $\varPhi(f) = |s|$ otherwise, where $s \prec f$ is the shortest prefix in which $b$ occurs.
	The property $\varPhi$ is verdict-safe.
	First, observe that $\D = \N \cup \{\infty\}$.
	Let $f \in \Sigma^\omega$ and $v \in \D$ with $v > \varPhi(f)$.
	If $\varPhi(f) > 0$, then $f$ contains $b$, and $\varPhi(f) = |s|$ for some $s \prec f$ in which $b$ occurs for the first time.
	After the prefix $s$, all $g \in \Sigma^\omega$ yield $\varPhi(sg) = |s|$, thus all values above $|s|$ are rejected.
	If $\varPhi(f) = 0$, then $f = a^\omega$.
	Let $v \in \D$ with $v > 0$, and consider the prefix $a^v \prec f$.
	Observe that the set of possible prediction values after reading $a^v$ is $\{0, v+1, v+2, \ldots\}$, therefore $a^v$ allows the ghost monitor to reject the value $v$.
	However, $\varPhi$ is not safe because, although $\varPhi(a^\omega) = 0$, for every $s \prec a^\omega$, we have $\sup_{g \in \Sigma^\omega} \varPhi(sg) = \infty$.
\qed\end{example}

The separation is due to the fact that for some finite traces, the $\sup$ of possible prediction values cannot be realized by any future.
Below, we present a condition that prevents such cases.
\begin{definition}[Supremum closedness]	
	A property $\varPhi$ is \emph{$\sup$-closed} iff for every $s \in \Sigma^*$ we have $\sup P_{\varPhi, s} \in P_{\varPhi, s}$.
\end{definition}

We remark that the minimal response-time property is $\sup$-closed.
\begin{example}
	The safety property minimal response-time $\varPhi_{\min}$ from Example~\ref{ex:minresp} is $\sup$-closed.
	This is because, for every $s \in \Sigma^*$, the continuation $\gra^\omega$ realizes the value $\sup_{g \in \Sigma^\omega} \varPhi(sg)$.
\qed\end{example}

Recall from the introduction the ghost monitor that maintains the $\sup$ of possible prediction values.
For monitoring $\sup$-closed properties this suffices; otherwise the ghost monitor also needs to maintain whether or not the supremum of the possible prediction values is realizable by some future continuation.
In general, we have the following for every $\sup$-closed property.

\begin{lemma}\label{lemma:value:closed}
  For every $\sup$-closed property $\varPhi$ and for all $f \in \Sigma^\omega$, we have $\lim_{s \prec f} (\sup P_{\varPhi, s}) = \sup (\lim_{s \prec f} P_{\varPhi,s})$.
\end{lemma}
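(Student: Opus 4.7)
The plan is to prove the two inequalities separately. First note that $(P_{\varPhi,s})_{s \prec f}$ is monotonically nonincreasing under set inclusion (longer prefixes admit fewer continuations), so $\lim_{s \prec f} P_{\varPhi,s} = \bigcap_{s \prec f} P_{\varPhi,s}$, and $(\sup P_{\varPhi,s})_{s \prec f}$ is nonincreasing in $\D$, so $\lim_{s \prec f} \sup P_{\varPhi,s} = \inf_{s \prec f} \sup P_{\varPhi,s}$; denote this infimum by $v^*$. The inequality $\sup \bigcap_{s \prec f} P_{\varPhi,s} \leq v^*$ requires no $\sup$-closedness: any $v$ in every $P_{\varPhi,s}$ satisfies $v \leq \sup P_{\varPhi,s}$ for each $s \prec f$, so $v \leq v^*$, and taking the supremum over all such $v$ gives the bound.

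The reverse inequality $v^* \leq \sup \bigcap_{s \prec f} P_{\varPhi,s}$ is the crux and is where $\sup$-closedness enters. By $\sup$-closedness, for each $r \prec f$ fix $g_r \in \Sigma^\omega$ with $\varPhi(r g_r) = \sup P_{\varPhi,r}$; then for any $s \preceq r \prec f$ we have $\sup P_{\varPhi,r} \in P_{\varPhi,r} \subseteq P_{\varPhi,s}$, so each $P_{\varPhi,s}$ contains the decreasing sequence $\{\sup P_{\varPhi,r} : s \preceq r \prec f\}$, whose infimum is $v^*$. The aim is to exhibit an element of $\bigcap_{s \prec f} P_{\varPhi,s}$ that is at least $v^*$, ideally $v^*$ itself, by showing that for each $s \prec f$ there is a tail $h_s \in \Sigma^\omega$ with $\varPhi(s h_s) = v^*$.

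I expect this to be the main obstacle: in a general complete lattice, the infimum of a decreasing chain of attained suprema need not itself be attained. The natural strategy is a K\"{o}nig/compactness argument on $\Sigma^\omega$ (compact since $\Sigma$ is finite): for each $s \prec f$, use the family of realizer tails $\{g_r : r \succeq s,\, r \prec f\}$ to extract, by compactness, a limit tail $h_s$, and then apply $\sup$-closedness at suitable finite prefixes of $s h_s$ to conclude $\varPhi(s h_s) = v^*$, thereby placing $v^*$ in every $P_{\varPhi,s}$ and hence in the intersection. The delicate interplay between the lattice order on $\D$ and the Cantor topology on $\Sigma^\omega$ --- synchronizing $\sup$-closedness across longer and longer prefixes of $f$ --- is what I expect to be the technical heart of the proof.
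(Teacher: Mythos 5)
Your treatment of the first inequality and your setup of the second match the paper's proof exactly: the direction $\sup(\lim_{s\prec f}P_{\varPhi,s})\leq \lim_{s\prec f}(\sup P_{\varPhi,s})$ needs no hypothesis, and for the converse both you and the paper observe that $\sup$-closedness together with $P_{\varPhi,r}\subseteq P_{\varPhi,s}$ for $s\preceq r$ places the decreasing chain $\{\sup P_{\varPhi,r} : s\preceq r\prec f\}$ inside every $P_{\varPhi,s}$, with infimum $v^*=\lim_{s\prec f}(\sup P_{\varPhi,s})$. The gap is that you stop there: you announce that one should exhibit an element of $\bigcap_{s\prec f}P_{\varPhi,s}$ that is at least $v^*$ via a K\"onig/compactness extraction of limit tails from the realizers $g_r$, but you do not carry this out, and that strategy cannot work as described. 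Compactness of $\Sigma^\omega$ does give a Cantor-limit $h_s$ of the tails, but nothing relates $\varPhi(sh_s)$ to $\lim_r\varPhi(rg_r)$, since $\varPhi$ is not assumed continuous on $\Sigma^\omega$; and $\sup$-closedness constrains only the supremum of each $P_{\varPhi,s}$, not infima of decreasing chains inside it. Concretely, over $\Sigma=\{a,b\}$ and $\D=[0,1]$, let $\varPhi(a^\omega)=0$ and $\varPhi(f)=\frac{1}{2}+\frac{1}{2n}$ when the first $b$ of $f$ occurs at position $n$. Every $P_{\varPhi,s}$ attains its supremum, so $\varPhi$ is $\sup$-closed; yet along $f=a^\omega$ one has $\lim_{s\prec f}(\sup P_{\varPhi,s})=\frac{1}{2}$ while $\bigcap_{s\prec f}P_{\varPhi,s}=\{0\}$. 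The realizer tails (each beginning with $b$) converge in the Cantor topology to $a^\omega$, whose value $0$ is unrelated to $\frac{1}{2}$, which is exactly the failure mode you feared.

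For comparison, the paper closes this step by asserting outright that $\lim_{s\prec f}(\sup P_{\varPhi,s})\in P_{\varPhi,r}$ for every $r\prec f$, justified by the claim that each $P_{\varPhi,s}$ ``is closed.'' That inference uses closedness of $P_{\varPhi,s}$ under infima of decreasing chains, which is strictly stronger than the stated $\sup$-closedness, as the example above shows. So your instinct about where the technical heart lies is accurate, and your honesty about not having resolved it is to your credit; but as submitted your proposal does not prove the hard inequality, and any completion must either strengthen the hypothesis on $\varPhi$ (e.g., require each $P_{\varPhi,s}$ to contain the infimum of every decreasing chain it contains) or restrict the value domain (e.g., to domains where every decreasing chain of attained suprema stabilizes), rather than rely on topological compactness of $\Sigma^\omega$.
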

\begin{proof}
	Note that $\lim_{s \prec f} (\sup P_{\varPhi, s}) \geq \sup (\lim_{s \prec f} P_{\varPhi,s})$ holds in general, and we want to show that $\lim_{s \prec f} (\sup P_{\varPhi, s}) \leq \sup (\lim_{s \prec f} P_{\varPhi,s})$ holds for every value-closed $\varPhi$.
	Let $f \in \Sigma^\omega$.
	Since the sequence $(P_{\varPhi,s})_{s \prec f}$ of sets is monotonically decreasing and $P_{\varPhi,s}$ is closed for every $s \in \Sigma^*$, we have $\sup P_{\varPhi,r} \in P_{\varPhi,s}$ for every $s,r \in \Sigma^*$ with $s \preceq r$. 
	Moreover, $\lim_{s \prec f} (\sup P_{\varPhi,s}) \in P_{\varPhi,r}$ for every $r \in \Sigma^*$ with $r \prec f$.
	Then, by definition, we have $\lim_{s \prec f} (\sup P_{\varPhi,s}) \in \lim_{s \prec f} P_{\varPhi,s}$, and therefore $\lim_{s \prec f} (\sup P_{\varPhi,s}) \leq \sup (\lim_{s \prec f} P_{\varPhi,s})$.
\qed\end{proof}

As a consequence of the lemmas above, we get the following.
	
\begin{theorem}
	A $\sup$-closed property $\varPhi$ is safe iff $\varPhi$ is verdict-safe.
\end{theorem}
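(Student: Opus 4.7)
The plan is essentially to chain three results that have already been proved in the section, so the theorem amounts to a short corollary rather than a new argument.

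First, I would rewrite safety using Theorem~\ref{thm:safe:main}: $\varPhi$ is safe iff $\varPhi$ is upper semicontinuous, i.e., iff for every $f \in \Sigma^\omega$ we have $\varPhi(f) = \lim_{s \prec f} \sup P_{\varPhi,s}$. Next, I would rewrite verdict-safety using Lemma~\ref{lem:GS:characterization}: $\varPhi$ is verdict-safe iff for every $f \in \Sigma^\omega$ we have $\varPhi(f) = \sup\bigl(\lim_{s \prec f} P_{\varPhi,s}\bigr)$. The goal thus reduces to showing that, under $\sup$-closedness, these two right-hand sides agree pointwise on every trace.

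Exactly this pointwise equality is the content of Lemma~\ref{lemma:value:closed}: for every $\sup$-closed $\varPhi$ and every $f \in \Sigma^\omega$,
\[
\lim_{s \prec f} \bigl(\sup P_{\varPhi,s}\bigr) \;=\; \sup\bigl(\lim_{s \prec f} P_{\varPhi,s}\bigr).
\]
Hence the upper-semicontinuity equation and the verdict-safety equation have identical right-hand sides for every $f$, so one holds universally iff the other does. This gives the biconditional in the theorem without any further computation.

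There is no real obstacle in this proof because all the technical work, in particular the swap between taking suprema and intersecting the decreasing sequence of possible-prediction sets, was already absorbed into Lemma~\ref{lemma:value:closed}. The only thing to check is that the conclusion of Lemma~\ref{lemma:value:closed} is a genuine equality (not just an inequality in one direction) so that the equivalence of (1) and (2) propagates in both directions, which it is. I would therefore present the proof as a three-line appeal to Theorem~\ref{thm:safe:main}, Lemma~\ref{lem:GS:characterization}, and Lemma~\ref{lemma:value:closed}.
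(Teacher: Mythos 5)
Your proof is correct and is exactly the argument the paper intends: the theorem is stated there as "a consequence of the lemmas above," namely the characterization of safety as $\varPhi(f) = \lim_{s \prec f}(\sup P_{\varPhi,s})$, Lemma~\ref{lem:GS:characterization} for verdict-safety, and Lemma~\ref{lemma:value:closed} to identify the two right-hand sides under $\sup$-closedness. Nothing is missing; you have simply made the implicit chaining explicit.
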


\section{The Quantitative Safety-Progress Hierarchy}
Our quantitative extension of safety closure allows us to build a Borel hierarchy, which is a quantitative extension of the boolean safety-progress hierarchy~\cite{ChangMP93}.
First, we show that safety properties are closed under pairwise $\min$ and $\max$.

\begin{proposition}\label{prop:safe:clorure}
	For every value domain $\D$, the set of safety properties over $\D$ is closed under $\min$ and $\max$.
\end{proposition}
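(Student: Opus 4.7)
I attack both closure statements directly from Definition~3. In each, fix safe $\varPhi_1,\varPhi_2$, a trace $f\in\Sigma^\omega$, and a value $v\in\D$ with the combined property at $f$ not $\geq v$, then produce a prefix $s\prec f$ whose continuations' supremum also fails to be $\geq v$. For the $\min$ case, from $\varPhi_1(f)\wedge\varPhi_2(f)\not\geq v$ at least one $\varPhi_i(f)\not\geq v$ (else the meet would be $\geq v$). Safety of $\varPhi_i$ produces $s\prec f$ with $\sup_g\varPhi_i(sg)\not\geq v$. Since $\min(\varPhi_1,\varPhi_2)(sg)\leq\varPhi_i(sg)$ pointwise, the inequality survives under supremum over $g$; combined with the implication ``$a\leq b$ and $b\not\geq v$ imply $a\not\geq v$'', the same $s$ witnesses safety of $\min(\varPhi_1,\varPhi_2)$.

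\textbf{The $\max$ case.} Here $\varPhi_1(f)\vee\varPhi_2(f)\not\geq v$ forces \emph{both} $\varPhi_i(f)\not\geq v$, because each $\varPhi_i(f)$ lies below the join. Safety of each yields prefixes $s_1,s_2\prec f$ with $\sup_g\varPhi_i(s_ig)\not\geq v$; these are comparable as prefixes of a common $f$, so I take $s$ to be the longer one. Both bounds survive because $\sup_g\varPhi_i(sg)\leq\sup_g\varPhi_i(s_ig)$ whenever $s\succeq s_i$. Using the complete-lattice identity $\sup_g(\alpha(g)\vee\beta(g))=(\sup_g\alpha(g))\vee(\sup_g\beta(g))$, I obtain $\sup_g\max(\varPhi_1,\varPhi_2)(sg)=(\sup_g\varPhi_1(sg))\vee(\sup_g\varPhi_2(sg))$, a join of two values, neither of which is $\geq v$.

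\textbf{Main obstacle.} The concluding step of the $\max$ case---that the join of two elements each not $\geq v$ is itself not $\geq v$---is the only nontrivial point. In a general lattice it can fail (in a diamond, $p\vee q=\top$ while $p,q\not\geq\top$), so the argument effectively requires $v$ to be join-prime. For the totally ordered value domains used throughout the paper's examples (such as $\N\cup\{\infty\}$ or $\R\cup\{\pm\infty\}$), the implication is immediate, since the join collapses to a maximum and hence equals one of its arguments.
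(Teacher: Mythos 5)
Your $\min$ argument is exactly the paper's: from $\min(\varPhi_1,\varPhi_2)(f)\not\geq v$ conclude $\varPhi_i(f)\not\geq v$ for some $i$, invoke safety of $\varPhi_i$ to obtain a prefix, and push the pointwise inequality $\min(\varPhi_1,\varPhi_2)\leq\varPhi_i$ through the supremum; this half is correct in every complete lattice. Your $\max$ argument is also structurally the paper's (both $\varPhi_i(f)\not\geq v$, take the longer of the two witnessing prefixes, use $\sup_g(\alpha(g)\vee\beta(g))=\sup_g\alpha(g)\vee\sup_g\beta(g)$), and the ``main obstacle'' you flag is precisely the step the paper performs silently: its proof says ``combining the two statements'' to pass from $\sup_g\varPsi_1(sg)\not\geq v$ and $\sup_g\varPsi_2(sg)\not\geq v$ to their join not being $\geq v$, which is exactly the join-primality of $v$ that you isolate.

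Your suspicion is warranted: in a general complete lattice that step is not merely unjustified but false, and with it the $\max$ half of the proposition as stated. Take $\D=\{\bot,b\}\cup\{a_n\mid n\in\N\}\cup\{\top\}$ with $\bot<\cdots<a_2<a_1<a_0<\top$, with $\bot<b<\top$, and with $b$ incomparable to every $a_n$; this is a complete lattice in which $\inf_n a_n=\bot$ and $a_n\vee b=\top$. Over $\Sigma=\{x,y\}$ let $\varPsi_1(x^n y g)=a_n$ for all $n$ and $g$, let $\varPsi_1(x^\omega)=\bot$, and let $\varPsi_2$ be constantly $b$. Both are safe (one checks $\sup_g\varPsi_1(x^n g)=a_n$, whose infimum is $\bot$), yet $\max(\varPsi_1,\varPsi_2)(x^\omega)=b\not\geq\top$ while $\sup_g\max(\varPsi_1,\varPsi_2)(x^n g)=a_n\vee b=\top$ for every $n$, so no prefix of $x^\omega$ dismisses the hypothesis ``value $\geq\top$.'' Hence closure under $\max$ needs an extra hypothesis: a totally ordered $\D$ (where every $v$ is join-prime and both your argument and the paper's go through), or more generally that binary joins distribute over infima of decreasing chains. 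Your write-up matches the paper's proof in substance and is more candid about where the generality breaks.
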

\begin{proof}
	First, we prove the closure under $\min$.
	Consider the two safety properties $\varPhi_1$, $\varPhi_2$ and let $\varPhi$ be their pairwise minimum, i.e., $\varPhi(f) = \min(\varPhi_1(f),\varPhi_2(f))$ for all $f\in \Sigma^\omega$.
	Suppose towards contradiction that $\varPhi$ is not safe, i.e., for some $f \in \Sigma^\omega$ and $v \in \D$ such that $\varPhi(f) \not \geq v$ and $\sup_{g \in \Sigma^\omega} \varPhi(sg) \geq v$ for all $s \prec f$.
	Observe that $\varPhi(f) \not \geq v$  implies $\varPhi_1(f) \not \geq v$ or $\varPhi_2(f) \not \geq v$.
	We assume without loss of generality that $\varPhi_1(f) \not \geq v$ holds.
	Thanks to the safety of $\varPhi_1$, there exists $r \prec f$ such that $\sup_{g \in \Sigma^\omega} \varPhi_1(rg) \not \geq v$.
	Since $\varPhi_1(rg) \geq \varPhi(rg)$ for all $g \in \Sigma^\omega$, we have that $\sup_{g \in \Sigma^\omega} \varPhi_1(rg) \geq \sup_{g \in \Sigma^\omega}\varPhi(rg) \geq v$.
	This implies that $\sup_{g \in \Sigma^\omega} \varPhi_1(rg) \geq v$, which yields a contradiction.
	
	Then, we prove the closure under $\max$.
	Consider the two safety properties $\varPsi_1$, $\varPsi_2$ and let $\varPsi$ be their pairwise maximum, i.e., $\varPsi(f) = \max(\varPsi_1(f),\varPsi_2(f))$ for all $f\in \Sigma^\omega$.
	Suppose towards contradiction that $\varPsi$ is not safe, i.e., for some $f \in \Sigma^\omega$ and $v \in \D$, we have $\varPsi(f) \not \geq v$ and $\sup_{g \in \Sigma^\omega} \varPsi(sg) \geq v$ for all $s \prefix f$.
	Due to the safety of both $\varPsi_1$ and $\varPsi_2$, we get that for all $f \in \Sigma^\omega$ and $v \in \D$ if $\varPsi_i(f) \not \geq v_i$, there is $s_i \prec f$ such that $\sup_{g \in \Sigma^\omega} \varPsi(s_i g) \not \geq v_i$ with $i\in\{1,2\}$.
	Combining the two statements, we get for all $f \in \Sigma^\omega$ and $v\in \D$ if $\max(\varPsi_1(f), \varPsi_2(f))\not \geq \max(v, v)$, then there exists $s \prec f$ such that $\max(\sup_{g \in \Sigma^\omega} \varPsi_1(sg), \sup_{g \in \Sigma^\omega} \varPsi_2(sg)) \not \geq \max(v,v)$.
	In particular, we have that $\max(\sup_{g \in \Sigma^\omega} \varPsi_1(sg), \sup_{g \in \Sigma^\omega} \varPsi_2(sg)) \not \geq v$ holds since $\max(\varPsi_1(f), \varPsi_2(f)) = \varPsi(f) \not \geq v = \max(v, v)$.
	It is well know that $\sup (X \cup Y) = \max (\sup X, \sup Y)$ for all $X, Y \subseteq \D$, implying $\sup_{g \in \Sigma^\omega} (\max(\varPsi_1(sg), \varPsi_2(sg))) = \max(\sup_{g \in \Sigma^\omega} \varPsi_1(sg), \sup_{g \in \Sigma^\omega} \varPsi_2(sg))$.
	Consequently $\sup_{g \in \Sigma^\omega} \max(\varPsi_1(sg), \varPsi_2(sg))$ $=$ $\sup_{g \in \Sigma^\omega} \varPsi(sg) \not \geq v$, which yields a contradiction.	
\qed\end{proof}

The boolean safety-progress classification of properties is a Borel hierarchy built from the Cantor topology of traces.
Safety and co-safety properties lie on the first level, respectively corresponding to the closed sets and open sets of the topology.
The second level is obtained through countable unions and intersections of properties from the first level:
persistence properties are countable unions of closed sets, while response properties are countable intersections of open sets.
We generalize this construction to the quantitative setting.

In the boolean case, each property class is defined through an operation that takes a set $S \subseteq \Sigma^*$ of finite traces and produces a set $P \subseteq \Sigma^\omega$ of infinite traces.
For example, to obtain a co-safety property from $S \subseteq \Sigma^*$, the corresponding operation yields $S\Sigma^\omega$.
Similarly, we formalize each property class by a value function.
For this, we define the notion of \emph{limit property}.
\begin{definition}[Limit property]
	A property $\varPhi : \Sigma^\omega \to \D$ is a \emph{limit property} iff there exists a finitary property $\pi : \Sigma^* \rightarrow \D$ and a value function $\ell : \D^\omega \to \D$ such that $\varPhi(f) = \ell_{s \prec f} \pi(s)$ for all $f \in \Sigma^\omega$.
	We denote this by $\varPhi = (\pi,\ell)$, and write $\varPhi(s)$ instead of $\pi(s)$.
	In particular, if $\varPhi = (\pi,\ell)$, where $\ell \in \{ \inf, \sup, \liminf, \limsup\}$, then $\varPhi$ is an \emph{$\ell$-property}.
\end{definition}

To account for the value functions that construct the first two levels of the safety-progress hierarchy, we start our investigation with $\inf$- and $\sup$-properties and later focus on $\liminf$- and $\limsup$- properties~\cite{DBLP:journals/tocl/ChatterjeeDH10}.

\subsection{Infimum and Supremum Properties}
Let us start with an example by demonstrating that the minimal response-time property is an $\inf$-property.
\begin{example} \label{ex:infresp}
	Recall the safety property $\varPhi_{\min}$ of minimal response time from Example~\ref{ex:minresp}.
	We can equivalently define $\varPhi_{\min}$ as a limit property by taking the finitary property $\pi_{\text{last}}$ and the value function $\inf$.
	As discussed in Example~\ref{ex:minresp}, the function $\pi_{\text{last}}$ outputs the response time for the last request when all requests are granted, and $\infty$ when there is a pending request or no request.
	Then $\inf_{s \prec f} \pi_{\text{last}}(s) = \varPhi_{\min}(f)$ for all $f \in \Sigma^\omega$, and therefore $\varPhi_{\min} = (\pi_{\text{last}}, \inf)$.
\qed\end{example}

In fact, the safety properties coincide with $\inf$-properties.
\begin{theorem}\label{thm:safe:inf}
	A property $\varPhi$ is safe iff $\varPhi$ is an $\inf$-property.
\end{theorem}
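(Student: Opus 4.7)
The plan is to prove both directions by leveraging Theorem~\ref{thm:safe:main}, which equates safety with being equal to one's own safety closure. The key observation is that the safety closure $\varPhi^*(f) = \inf_{s \prec f} \sup_{g \in \Sigma^\omega} \varPhi(sg)$ is syntactically already in $\inf$-property form, with the finitary component $\pi(s) \defeq \sup_{g \in \Sigma^\omega} \varPhi(sg)$.

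For the forward direction, suppose $\varPhi$ is safe. Define the finitary property $\pi : \Sigma^* \to \D$ by $\pi(s) = \sup_{g \in \Sigma^\omega} \varPhi(sg)$ for every $s \in \Sigma^*$. Then by the definition of $\varPhi^*$ and Theorem~\ref{thm:safe:main}, we obtain $\inf_{s \prec f} \pi(s) = \varPhi^*(f) = \varPhi(f)$ for every $f \in \Sigma^\omega$. Hence $\varPhi = (\pi, \inf)$ is an $\inf$-property.

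For the reverse direction, suppose $\varPhi = (\pi, \inf)$ for some finitary $\pi$, so that $\varPhi(f) = \inf_{s \prec f} \pi(s)$. Pick any $f \in \Sigma^\omega$ and $v \in \D$ with $\varPhi(f) \not\geq v$. Since $\inf_{s \prec f} \pi(s) \not\geq v$, there must exist a prefix $s \prec f$ with $\pi(s) \not\geq v$; otherwise $v$ would be a lower bound of the set $\{\pi(s) \st s \prec f\}$ and hence bounded above by its infimum, contradicting $\varPhi(f) \not\geq v$. Now, for every $g \in \Sigma^\omega$, the finite trace $s$ is itself a prefix of $sg$, so $\pi(s)$ appears in the infimum defining $\varPhi(sg)$, yielding $\varPhi(sg) \leq \pi(s)$. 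Taking the supremum over $g$, we get $\sup_{g \in \Sigma^\omega} \varPhi(sg) \leq \pi(s)$, and transitivity with $\pi(s) \not\geq v$ (i.e., $v \not\leq \pi(s)$) gives $\sup_{g \in \Sigma^\omega} \varPhi(sg) \not\geq v$, which is exactly the safety condition for $\varPhi$.

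The main subtlety is handling $\not\geq$ rather than $<$ in the general lattice setting: one must ensure that from $v \not\leq \inf_{s \prec f} \pi(s)$ one can extract a \emph{single} prefix witness $s$ with $v \not\leq \pi(s)$, and dually that $v \not\leq \pi(s)$ together with $\sup_g \varPhi(sg) \leq \pi(s)$ propagates to $v \not\leq \sup_g \varPhi(sg)$. Both steps are immediate from the definition of infimum and the transitivity of $\leq$, but they are the only places where the lattice structure (as opposed to a total order) plays any role. Everything else reduces to the already-established safety closure characterization from Theorem~\ref{thm:safe:main}.
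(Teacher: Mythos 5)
Your proposal is correct and follows essentially the same route as the paper: the forward direction uses Theorem~\ref{thm:safe:main} with $\pi(s) = \sup_{g \in \Sigma^\omega} \varPhi(sg)$, and the reverse direction hinges on the same key inequality $\sup_{g \in \Sigma^\omega} \varPhi(sg) \leq \pi(s)$ (since $s \prec sg$). The only difference is presentational: you verify the safety definition directly and track the $\not\geq$ relation carefully in the lattice setting, whereas the paper argues by contradiction against upper semicontinuity; your handling of the partial-order subtleties is, if anything, slightly more explicit.
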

\begin{proof}
	Assume $\varPhi$ is safe.
	By Theorem~\ref{thm:safe:main}, we have $\varPhi(f) = \inf_{s \prec f} \sup_{g \in \Sigma^\omega} \varPhi(sg)$ for all $f \in \Sigma^\omega$.
	Then, simply taking $\pi(s) = \sup_{g \in \Sigma^\omega} \varPhi(sg)$ for all $s \in \Sigma^*$ yields that $\varPhi$ is an $\inf$ property.
	
	Now, assume $\varPhi$ is an $\inf$ property, and suppose towards contradiction that $\varPhi$ is not safe.
	In other words, let $\varPhi = (\pi, \inf)$ for some finitary property $\pi : \Sigma^* \to \D$ and suppose $\inf_{s \prec f'} \sup_{g \in \Sigma^\omega} \varPhi(sg) > \varPhi(f') = \inf_{s \prec f'} \pi(s)$ for some $f' \in \Sigma^\omega$.
	Let $s \in \Sigma^*$ and note that $\sup_{g \in \Sigma^\omega} \varPhi(sg) = \sup_{g \in \Sigma^\omega} (\inf_{r \prec sg} \pi(r))$ by definition.
	Moreover, for every $g \in \Sigma^\omega$, notice that $\inf_{r \prec sg} \pi(r) \leq \pi(s)$ since $s \prec sg$.
	Then, we obtain $\sup_{g \in \Sigma^\omega} \varPhi(sg) \leq \pi(s)$ for every $s \in \Sigma^*$.
	In particular, this is also true for all $s \prec f'$.
	Therefore, we get $\inf_{s \prec f'} \sup_{g \in \Sigma^\omega} \varPhi(sg) \leq \inf_{s \prec f'} \pi(s)$, which contradicts to our initial supposition.
\qed\end{proof}

Defining the minimal response-time property as a limit property, we observe the following relation between its behavior on finite traces and infinite traces.
\begin{example}
	Consider the property $\varPhi_{\min} = (\pi_{\text{last}}, \inf)$ from Example~\ref{ex:infresp}.
	Let $f \in \Sigma^\omega$ and $v \in \D$.
	Observe that if the minimal response time of $f$ is at least $v$, then the last response time for each prefix $s \prec f$ is also at least $v$.
	Conversely, if the minimal response time of $f$ is below $v$, then there is a prefix $s \prec f$ for which the last response time is also below $v$.
\qed\end{example}

In light of this observation, we provide another characterization of safety properties, explicitly relating the specified behavior of the limit property on finite and infinite traces.
\begin{theorem}\label{thm:safe:inflooking}
	A property $\varPhi:\Sigma^\omega\rightarrow\D$ is safe iff $\varPhi$ is a limit property such that for every $f \in \Sigma^\omega$ and value $v \in \D$, we have $\varPhi(f) \geq v$ iff $\varPhi(s) \geq v$ for all $s \prec f$.
\end{theorem}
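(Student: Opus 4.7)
My plan is to reduce this equivalence to Theorem~\ref{thm:safe:inf}, which already identifies safety with the class of $\inf$-properties. The key observation is that the stated ``iff'' condition between $\varPhi(f)\geq v$ and the universal statement $\pi(s)\geq v$ for all $s\prefix f$ is essentially a characterization of $\inf$: in any poset, we have $v\leq x$ iff $v\leq y$ for all $v$ exactly when $x=y$, and $v$ is a lower bound of $\{\pi(s)\st s\prefix f\}$ iff $v\leq\inf_{s\prefix f}\pi(s)$. So both directions will pivot on turning the iff condition into the identity $\varPhi(f)=\inf_{s\prefix f}\pi(s)$.

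For the forward direction, I will assume $\varPhi$ is safe. By Theorem~\ref{thm:safe:inf}, $\varPhi$ is an $\inf$-property, so there is a finitary property $\pi$ with $\varPhi(f)=\inf_{s\prefix f}\pi(s)$ for every $f\in\Sigma^\omega$ (concretely, the proof of Theorem~\ref{thm:safe:inf} exhibits $\pi(s)=\sup_{g\in\Sigma^\omega}\varPhi(sg)$, but we need only its existence). From this identity, for every $v\in\D$ we immediately get $\varPhi(f)\geq v$ iff $\inf_{s\prefix f}\pi(s)\geq v$ iff $\pi(s)\geq v$ for every $s\prefix f$, which is the desired characterization (recalling the convention that $\varPhi(s)$ denotes $\pi(s)$ when $\varPhi=(\pi,\ell)$).

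For the backward direction, I will assume $\varPhi=(\pi,\ell)$ is a limit property satisfying the iff condition and deduce that $\varPhi$ is in fact an $\inf$-property. Fix $f\in\Sigma^\omega$ and let $L_f=\{v\in\D\st\forall s\prefix f,\ v\leq\pi(s)\}$ be the set of lower bounds of $\{\pi(s)\st s\prefix f\}$; by definition of infimum in the complete lattice $\D$, $v\in L_f$ iff $v\leq\inf_{s\prefix f}\pi(s)$. The hypothesis states exactly that $v\leq\varPhi(f)$ iff $v\in L_f$, so the two elements $\varPhi(f)$ and $\inf_{s\prefix f}\pi(s)$ have the same down-set and are therefore equal. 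Hence $\varPhi=(\pi,\inf)$ is an $\inf$-property, and Theorem~\ref{thm:safe:inf} yields that $\varPhi$ is safe.

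The proof is essentially bookkeeping once Theorem~\ref{thm:safe:inf} is in hand; there is no serious obstacle. The only subtle point worth stating carefully is the lattice-theoretic step in the backward direction, namely that ``equal down-sets imply equal elements,'' which is where the complete-lattice assumption on $\D$ is used to legitimately replace the universal quantifier over $v$ by an infimum. One small care item in the writing is to respect the notational convention introduced with limit properties: in the iff condition $\varPhi(s)$ refers to $\pi(s)$ with $\varPhi=(\pi,\ell)$, so the forward direction must exhibit a \emph{specific} $\pi$ (any $\pi$ witnessing $\varPhi$ as an $\inf$-property suffices), rather than speaking abstractly of $\varPhi$ on finite traces.
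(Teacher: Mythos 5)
Your proposal is correct and follows essentially the same route as the paper's proof: both directions pivot on Theorem~\ref{thm:safe:inf}, translating the stated iff condition into the identity $\varPhi(f)=\inf_{s\prec f}\pi(s)$ for a witnessing finitary property $\pi$. The only difference is cosmetic --- you invoke the universal property of the infimum (equal sets of lower bounds imply equal elements) directly, where the paper argues by contradiction with a case analysis, and your attention to the choice of $\pi$ in the forward direction matches the paper's treatment.
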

\begin{proof}
	Assume $\varPhi$ is safe.
	Then we know by Theorem~\ref{thm:safe:inf} that $\varPhi$ is an $\inf$ property, i.e., $\varPhi = (\pi, \inf)$ for some finitary property $\pi : \Sigma^* \to \D$, and thus a limit property.
	Suppose towards contradiction that for some $f \in \Sigma^\omega$ and $v \in \D$ we have (i) $\varPhi(f) \geq v$ and $\pi(s) \not \geq v$ for some $s \prec f$, or (ii) $\varPhi(f) \not \geq v$ and $\pi(s) \geq v$ for every $s \prec f$.
	One can easily verify that (i) yields a contradiction, since if for some $s \prec f$ we have $\pi(s) \not \geq v$ then $\inf_{s \prec f} \pi(s) = \varPhi(f) \not \geq v$.
	Similarly, (ii) also yields a contradiction, since if $\varPhi(f) = \inf_{s \prec f} \pi(s) \not \geq v$ then there exists $s \prec f$ such that $\pi(s) \not \geq v$.
	
	Now, assume $\varPhi = (\pi,\ell)$ for some finitary property $\pi$ and value function $\ell$ such that for every $f \in \Sigma^\omega$ and value $v \in \D$ we have $\varPhi(f) \geq v$ iff $\pi(s) \geq v$ for every $s \prec f$.
	We claim that $\varPhi(f) = \inf_{s \prec f} \pi(s)$ for every $f \in \Sigma^\omega$.
	Suppose towards contradiction that the equality does not hold for some trace.
	If $\varPhi(f) \not \geq \inf_{s \prec f} \pi(s)$ for some $f \in \Sigma^\omega$, let $v = \inf_{s \prec f} \pi(s)$ and observe that (i) $\varPhi(f) \not \geq v$, and (ii) $\inf_{s \prec f} \pi(s) \geq v$.
	However, while (i) implies $\pi(s) \not \geq v$ for some $s \prec f$ by hypothesis, (ii) implies $\pi(s) \geq v$ for all $s \prec f$, resulting in a contradiction.
	The case where $\varPhi(f) \not \leq \inf_{s \prec f} \pi(s)$ for some $f \in \Sigma^\omega$ is similar.
	It means that $\varPhi$ is an $\inf$ property.
	Therefore, $\varPhi$ is safe by Theorem~\ref{thm:safe:inf}.
\qed\end{proof}

Recall that a safety property allows rejecting wrong lower-bound hypotheses with a finite witness, by assigning a tight upper bound to each trace.
We define co-safety properties symmetrically: a property $\varPhi$ is co-safe iff every wrong hypothesis of the form $\varPhi(f) \leq v$ has a finite witness $s \prec f$.

\begin{definition}[Co-safety]
	A property $\varPhi : \Sigma^\omega \rightarrow \D$ is \emph{co-safe} iff for every $f\in\Sigma^\omega$ and value $v\in\D$ with $\varPhi(f) \not\leq v$, there exists a prefix $s \prefix f$ such that $\inf_{g \in \Sigma^\omega} \varPhi(sg) \not\leq v$.
\end{definition}

We note that our definition generalizes boolean co-safety, and thus a dual of Proposition~\ref{prop:safetyboolean} holds also for co-safety.
Moreover, we analogously define the notions of co-safety closure and lower semicontinuity.

\begin{definition}[Co-safety closure]
  The \emph{co-safety closure} of a property $\varPhi$ is the property $\varPhi_*(f)$ defined by $\varPhi_*(f) = \sup_{s \prec f} \inf_{g \in \Sigma^\omega} \varPhi(sg)$ for all $f\in\Sigma^\omega$. 
\end{definition}

\begin{definition} [Lower semicontinuity~\cite{DBLP:conf/lics/HenzingerS21}]
  A property $\varPhi$ is \emph{lower semicontinuous} iff $\varPhi(f) = \lim_{s \prefix f} \inf_{g \in \Sigma^\omega} \varPhi(sg)$ for all $f \in \Sigma^\omega$. 
\end{definition}

Now, we define and investigate the \emph{maximal response-time} property.
In particular, we show that it is a $\sup$-property that is co-safe and lower semicontinuous.

\begin{example} \label{ex:maxresp}
	Let $\Sigma = \{\req, \gra, \tick, \other\}$ and $\D = \N \cup \{\infty\}$.
	We define the maximal response-time property $\varPhi_{\max}$ through a finitary property that computes the current response time for each finite trace and the value function $\sup$.
	In particular, for all $s \in \Sigma^*$, let $\pi_{\text{curr}}(s) = |s|_\tick - |r|_\tick$, where $r \preceq s$ is the longest prefix of $s$ without pending $\req$; then $\varPhi_{\max} = (\pi_{\text{curr}},\sup)$.
	Note the contrast between $\pi_{\text{curr}}$ and $\pi_{\text{last}}$ from Example~\ref{ex:minresp}.
	While $\pi_{\text{curr}}$ takes an optimistic view of the future and assumes the $\gra$ will follow immediately, $\pi_{\text{last}}$ takes a pessimistic view and assumes the $\gra$ will never follow.
	Let $f \in \Sigma^\omega$ and $v \in \D$. 
	If the maximal response time of $f$ is greater than $v$, then for some prefix $s \prec f$ the current response time is greater than $v$ also, which means that, no matter what happens in the future, the maximal response time is greater than $v$ after observing $s$.
	Therefore, $\varPhi_{\max}$ is co-safe.
	By a similar reasoning, the sequence of greatest lower bounds of possible prediction values over the prefixes converges to the property value.
	In other words, we have $\lim_{s \prec f} \inf_{g \in \Sigma^\omega} \varPhi_{\max}(sg) = \varPhi_{\max}(f)$ for all $f \in \Sigma^\omega$.
	Thus $\varPhi_{\max}$ is also lower semicontinuous, and it equals its co-safety closure.
	Now, consider the complementary property $\overline{\varPhi_{\max}}$, which maps every trace to the same value as $\varPhi_{\max}$ on a domain where the order is reversed.
	It is easy to see that $\overline{\varPhi_{\max}}$ is safe.
	Finally, recall the ghost monitor from the introduction, which maintains the infimum of possible prediction values for the maximal response-time property.
	Since the maximal response-time property is $\inf$-closed, the output of the ghost monitor after every prefix is realizable by some future continuation, and that output is $\pi_{\max}(s) = \max_{r \preceq s} \pi_{\text{curr}}(r)$ for all $s \in \Sigma^*$.
\qed\end{example}

Generalizing the observations in the example above, we obtain the following characterizations due to the duality between safety and co-safety.

\begin{theorem} \label{thm:cosafe}
	For every property $\varPhi : \Sigma^\omega \to \D$, the following are equivalent.
	\begin{enumerate}
		\item $\varPhi$ is co-safe.
		\item $\varPhi$ is lower semicontinuous.
		\item $\varPhi(f) = \varPhi_*(f)$ for every $f \in \Sigma^\omega$.
		\item $\varPhi$ is a $\sup$-property.
		\item $\varPhi$ is a limit property such that for every $f \in \Sigma^\omega$ and value $v \in \D$, we have $\varPhi(f) \leq v$ iff $\varPhi(s) \leq v$ for all $s \prec f$.
		\item $\overline{\varPhi}$ is safe.
	\end{enumerate}
\end{theorem}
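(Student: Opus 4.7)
The plan is to derive all five non-trivial equivalences from the previously established safety results via a single duality argument, rather than re-proving each characterization from scratch.

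First, I would establish (1)~$\Leftrightarrow$~(6): $\varPhi : \Sigma^\omega \to \D$ is co-safe iff $\overline{\varPhi} : \Sigma^\omega \to \overline{\D}$ is safe. This is immediate from the definitions, since the relation $\varPhi(f) \not\leq v$ in $\D$ is, by the definition of $\overline{\D}$, the same as $\overline{\varPhi}(f) \not\geq v$ in $\overline{\D}$, and $\inf$ in $\D$ is $\sup$ in $\overline{\D}$. Thus the co-safety condition for $\varPhi$ is literally the safety condition for $\overline{\varPhi}$.

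Next, I would set up a short ``duality dictionary'' and use it to translate each remaining safety characterization. Concretely: (i)~$\lim_{s\prec f}\inf_{g\in\Sigma^\omega}\varPhi(sg)$ computed in $\D$ equals $\lim_{s\prec f}\sup_{g\in\Sigma^\omega}\overline{\varPhi}(sg)$ computed in $\overline{\D}$, so $\varPhi$ is lower semicontinuous iff $\overline{\varPhi}$ is upper semicontinuous; applying Theorem~\ref{thm:safe:main} to $\overline{\varPhi}$ gives (2)~$\Leftrightarrow$~(6). (ii)~The co-safety closure $\varPhi_*$ in $\D$ agrees pointwise with the safety closure $(\overline{\varPhi})^*$ in $\overline{\D}$, so $\varPhi = \varPhi_*$ iff $\overline{\varPhi} = (\overline{\varPhi})^*$, yielding (3)~$\Leftrightarrow$~(6) via Theorem~\ref{thm:safe:main}. (iii)~A value function $\sup$ on $\D$ is the function $\inf$ on $\overline{\D}$, so $\varPhi = (\pi, \sup)$ in $\D$ iff $\overline{\varPhi} = (\pi, \inf)$ in $\overline{\D}$; applying Theorem~\ref{thm:safe:inf} to $\overline{\varPhi}$ gives (4)~$\Leftrightarrow$~(6). (iv)~The condition ``$\varPhi(f) \leq v$ iff $\varPhi(s) \leq v$ for all $s\prec f$'' in $\D$ is precisely ``$\overline{\varPhi}(f) \geq v$ iff $\overline{\varPhi}(s) \geq v$ for all $s\prec f$'' in $\overline{\D}$; applying Theorem~\ref{thm:safe:inflooking} to $\overline{\varPhi}$ yields (5)~$\Leftrightarrow$~(6).

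Chaining these equivalences through (6) closes the loop, so all six statements are mutually equivalent. The main obstacle is not conceptual but bookkeeping: I must be scrupulous in checking that every occurrence of $\sup$, $\inf$, $\leq$, $\geq$, $\not\leq$, $\not\geq$, and every closure operator flips correctly when the order on the value domain is reversed, and that the limit of a monotonically increasing sequence in $\D$ really corresponds to the limit of a monotonically decreasing sequence in $\overline{\D}$. Once the dictionary is written down once, each of the four translations is a one-line verification, and the theorem follows with no additional analytic content beyond what was already proved for safety.
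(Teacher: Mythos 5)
Your proposal is correct and matches the paper's intent exactly: the paper gives no explicit proof of Theorem~\ref{thm:cosafe}, asserting only that the characterizations follow ``due to the duality between safety and co-safety,'' which is precisely the order-reversal dictionary you spell out. Your translations of each item to the corresponding safety result (Theorems~\ref{thm:safe:main}, \ref{thm:safe:inf}, \ref{thm:safe:inflooking}) through the complement $\overline{\varPhi}$ are all accurate, so your write-up is simply a more explicit version of the paper's argument.
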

\vspace*{-0.5cm}
\subsection{Limit Inferior and Limit Superior Properties}
Let us start with an observation on the minimal response-time property.
\begin{example}
	Recall once again the minimal response-time property $\varPhi_{\min}$ from Example~\ref{ex:minresp}.
	In the previous subsection, we presented an alternative definition of $\varPhi_{\min}$ to establish that it is an $\inf$-property.
	Observe that there is yet another equivalent definition of $\varPhi_{\min}$	which takes the monotonically decreasing finitary property $\pi_{\min}$ from Example~\ref{ex:minresp} and pairs it with either the value function $\liminf$, or with $\limsup$.
	Hence $\varPhi_{\min}$ is both a $\liminf$- and a $\limsup$-property.
\qed\end{example}

Before moving on to investigating $\liminf$- and $\limsup$-properties more closely, we show that the above observation can be generalized.
\begin{theorem}\label{thm:inf:liminf:limsup}
  Every $\ell$-property $\varPhi$, for $\ell\in\{\inf, \sup\}$, is both a $\liminf$- and a $\limsup$-property.
\end{theorem}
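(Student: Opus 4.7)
The plan is to reduce both directions to the elementary fact that, in a complete lattice, a monotone sequence has $\liminf = \limsup = \lim$, equal to the $\inf$ (resp.\ $\sup$) of its range. So the only real work is to massage the given finitary witness into a monotone one without changing the eventual value.

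First I would treat the $\inf$ case. Given $\varPhi = (\pi, \inf)$, define a new finitary property by taking the running minimum, $\pi'(s) = \min_{r \preceq s} \pi(r)$. For every $f \in \Sigma^\omega$, the sequence $(\pi'(s))_{s \prec f}$ is monotonically decreasing in the prefix order, and a routine cofinality argument shows $\inf_{s \prec f} \pi(s) = \inf_{s \prec f} \pi'(s)$, so $\varPhi(f) = \inf_{s \prec f} \pi'(s)$. Since the sequence is monotonically decreasing in the complete lattice $\D$, for every $n$ we have $\inf_{m \geq n} \pi'(s_m) = \inf_{m} \pi'(s_m)$ and $\sup_{m \geq n} \pi'(s_m) = \pi'(s_n)$, which yields $\liminf_{s \prec f} \pi'(s) = \limsup_{s \prec f} \pi'(s) = \inf_{s \prec f} \pi'(s) = \varPhi(f)$. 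Hence $\varPhi = (\pi', \liminf) = (\pi', \limsup)$.

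The $\sup$ case is perfectly symmetric: given $\varPhi = (\pi, \sup)$, define $\pi''(s) = \max_{r \preceq s} \pi(r)$, which is monotonically increasing along every trace, has the same $\sup$ over prefixes as $\pi$, and therefore witnesses $\varPhi$ as both a $\liminf$- and a $\limsup$-property by the dual computation.

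The main subtlety, and what I would be most careful about, is the behaviour of $\liminf$ and $\limsup$ in an arbitrary complete lattice $\D$ (recall the paper only assumes $\D$ is a complete lattice, not a chain or a metric space). Concretely, I need $\liminf$ and $\limsup$ defined via the standard formulas $\liminf_n x_n = \sup_n \inf_{m \geq n} x_m$ and $\limsup_n x_n = \inf_n \sup_{m \geq n} x_m$, and then I must check that on a monotone sequence these formulas collapse to the $\inf$ (resp.\ $\sup$) of the range. This is the computation sketched above, and it is the only place where the lattice structure is used nontrivially; everything else is bookkeeping on the finitary witness.
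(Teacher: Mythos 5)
Your proof is correct and follows essentially the same route as the paper: replace $\pi$ by its running minimum (resp.\ maximum), observe that the resulting sequence is monotone along every trace with the same infimum (resp.\ supremum) over prefixes, and conclude that $\liminf$ and $\limsup$ of a monotone sequence both collapse to that value. Your extra care about the lattice-theoretic definitions of $\liminf$ and $\limsup$ is a welcome elaboration of a step the paper leaves implicit, but it is not a different argument.
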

\begin{proof}
	Let $\varPhi = (\pi, \inf)$ and define an alternative finitary property as follows:
	$\pi'(s) = \min_{r \preceq s} \pi(s)$.
	One can confirm that $\pi'$ is monotonically decreasing and thus $\lim_{s \prec f} \pi'(s) = \inf_{s \prec f} \pi(s)$ for every $f \in \Sigma^\omega$.
	Then, letting $\varPhi_1 = (\pi',\liminf)$ and $\varPhi_2 = (\pi',\limsup)$, we obtain that $\varPhi(f) = \varPhi_1(f) = \varPhi_2(f)$ for all $f \in \Sigma^\omega$.
	For $\ell = \sup$ we use $\max$ instead of $\min$.
\qed\end{proof}

An interesting response-time property beyond safety and co-safety arises when we remove extreme values: instead of minimal response time, consider the property that maps every trace to a value that bounds from below, not all response times, but all of them from a point onward (i.e., all but finitely many).
We call this property  \emph{tail-minimal response time}.
\begin{example} \label{ex:liminfresp}
	Let $\Sigma = \{\req,\gra,\tick,\other\}$ and $\pi_{\text{last}}$ be the finitary property from Example~\ref{ex:minresp} that computes the last response time.
	We define the tail-minimal response-time property as $\varPhi_{\text{tmin}} = (\pi_{\text{last}},\liminf)$.
	Intuitively, it maps each trace to the least response time over all but finitely many requests.
	This property is interesting as a performance measure, because it focuses on the long-term performance by ignoring finitely many outliers.	
	Consider $f \in \Sigma^\omega$ and $v \in \D$.
	Observe that if the tail-minimal response time of $f$ is at least $v$, then there is a prefix $s \prec f$ such that for all longer prefixes $s \preceq r \prec f$, the last response time in $r$ is at least $v$, and vice versa.
\qed\end{example}

Similarly as for $\inf$-properties, we characterize $\liminf$-properties through a relation between property behaviors on finite and infinite traces.

\begin{theorem}\label{thm:liminf:liminflooking}
	A property $\varPhi:\Sigma^\omega\rightarrow\D$ is a $\liminf$-property iff $\varPhi$ is a limit property such that for every $f \in \Sigma^\omega$ and value $v \in \D$, we have $\varPhi(f) \geq v$ iff there exists $s \prec f$ such that for all $s \preceq r \prec f$, we have $\varPhi(r) \geq v$.
\end{theorem}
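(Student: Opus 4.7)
The plan is to mirror the two-direction proof of Theorem~\ref{thm:safe:inflooking}, replacing $\inf$ with $\liminf$ throughout. Writing $s_n$ for the length-$n$ prefix of any $f \in \Sigma^\omega$, a $\liminf$-property $\varPhi = (\pi, \liminf)$ satisfies $\varPhi(f) = \sup_n \inf_{m \geq n} \pi(s_m)$, and the condition in the theorem unfolds to the statement that $\pi(s_n) \geq v$ holds eventually along $f$.

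For the direction from $\liminf$-property to the characterization, I assume $\varPhi = (\pi, \liminf)$; this already makes $\varPhi$ a limit property, so I only need to verify the biconditional for all $f$ and $v$. If some $s = s_N \prec f$ satisfies $\pi(r) \geq v$ for every $r$ with $s_N \preceq r \prec f$, then $\inf_{m \geq N} \pi(s_m) \geq v$, hence $\varPhi(f) \geq v$. Conversely, from $\varPhi(f) \geq v$, i.e., $\sup_n \inf_{m \geq n} \pi(s_m) \geq v$, I extract an index $N$ with $\inf_{m \geq N} \pi(s_m) \geq v$, whose corresponding prefix $s_N$ witnesses the existential claim.

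For the converse direction, I assume $\varPhi = (\pi, \ell)$ is a limit property satisfying the biconditional, and establish that $\varPhi(f) = \liminf_{s \prec f} \pi(s)$, so that $(\pi, \liminf)$ is a $\liminf$-representation. Introduce the two downward-closed threshold sets
\begin{equation*}
L_f = \{v \in \D \st \varPhi(f) \geq v\}
\quad \text{and} \quad
R_f = \{v \in \D \st \exists s \prec f \text{ with } \pi(r) \geq v \text{ for all } s \preceq r \prec f\}.
\end{equation*}
The hypothesis is exactly $L_f = R_f$, so their suprema agree. The supremum of $L_f$ is $\varPhi(f)$ (realized as a maximum). Rewriting $R_f = \bigcup_n \{v \in \D \st v \leq \inf_{m \geq n} \pi(s_m)\}$ shows that its supremum equals $\sup_n \inf_{m \geq n} \pi(s_m) = \liminf_{s \prec f} \pi(s)$. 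Equating the two suprema yields $\varPhi(f) = \liminf_{s \prec f} \pi(s)$.

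The main obstacle lies in the forward direction: extracting a witness index $N$ from $\sup_n y_n \geq v$, where $y_n = \inf_{m \geq n} \pi(s_m)$ is a monotonically increasing sequence, relies on the supremum being realized. This is automatic in the discrete value domains that drive the running examples (e.g., $\N \cup \{\infty\}$ for the tail-minimal response-time property of Example~\ref{ex:liminfresp}), but in a fully general complete lattice I would first replace $\pi$ with a representation whose eventual infima along every trace are realized, by defining a new $\pi'$ on finite traces that explicitly encodes these tail infima while preserving $\liminf_{s \prec f} \pi'(s) = \varPhi(f)$.
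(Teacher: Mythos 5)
Your backward direction (the characterization implies $\varPhi$ is a $\liminf$-property) is correct, and in fact cleaner than the paper's: where the paper argues by contradiction and treats $\varPhi(f) \not\geq \liminf_{s\prec f}\pi(s)$ and $\varPhi(f) \not\leq \liminf_{s\prec f}\pi(s)$ as two separate cases, your observation that the biconditional says precisely that the down-set of $\varPhi(f)$ equals $\bigcup_n \{v \in \D \mid v \leq \inf_{m\geq n}\pi(s_m)\}$, so that equating suprema yields $\varPhi(f)=\liminf_{s\prec f}\pi(s)$ in one step, avoids the partial-order case analysis entirely.

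The gap is in the forward direction, and you have located it exactly but not closed it. From $\sup_n y_n \geq v$ with $y_n = \inf_{m\geq n}\pi(s_m)$ increasing, one cannot in general extract an index $N$ with $y_N \geq v$. Moreover, your claim that discreteness of the running domains rescues this is not right: over $\D = \N\cup\{\infty\}$ take $\pi(s)=|s|$ and $v=\infty$; then $\varPhi(f)=\liminf_{s\prec f}|s|=\infty\geq v$, yet no tail of $\pi$ along $f$ is bounded below by $\infty$. So the biconditional genuinely fails for some $\liminf$-representations, and the forward direction can only mean that \emph{some} representation of $\varPhi$ satisfies it. Your proposed repair---replace $\pi$ by a $\pi'$ encoding the tail infima---is where the real work lies, and it is not obvious it can be carried out: the tail infimum $\inf_{m\geq n}\pi(s_m)$ depends on the whole trace $f$, not only on the prefix $s_n$, so it is not a finitary property; and the natural finitary surrogate $\pi'(s)=\inf\{\pi(r) \mid s\preceq r,\ r\in\Sigma^*\}$ can have a strictly smaller $\liminf$ along $f$ than $\pi$ does (take $\pi(s)=1$ for $s\in a^*$ and $\pi(s)=0$ otherwise, and $f=a^\omega$). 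For what it is worth, the paper's own proof keeps the given $\pi$ and silently performs the very inference you flag as unsafe (``for all $s\prec f$ there is $s\preceq r\prec f$ with $\pi(r)\not\geq v$, hence $\liminf_{s\prec f}\pi(s)\not\geq v$''), so you are being more careful than the source; but as submitted, your argument establishes the forward direction only for representations whose tail infima along every trace are eventually attained, not for arbitrary $\liminf$-properties.
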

\begin{proof}
	Assume $\varPhi$ is a $\liminf$ property, i.e., $\varPhi = (\pi,\liminf)$ for some finitary property $\pi : \Sigma^* \to \D$.
	Suppose towards contradiction that for some $f \in \Sigma^\omega$ and $v \in \D$ we have
	(i) $\varPhi(f) \geq v$ and for all $s \prec f$ there exists $s \preceq r \prec f$ such that $\pi(r) \not \geq v$, or
	(ii) $\varPhi(f) \not \geq v$ and there exists $s \prec f$ such that for all $s \preceq r \prec f$ we have $\pi(r) \geq v$.
	One can easily verify that (i) yields a contradiction, since if for all $s \prec f$ there exists $s \preceq r \prec f$ with $\varPhi(r) \not \geq v$, then $\liminf_{s \prec f} \pi(s) = \varPhi(f) \not \geq v$.
	Similarly, (ii) also yields a contradiction, since if there exists $s \prec f$ such that for all $s \preceq r \prec f$ we have $\pi(r) \geq v$ then $\liminf_{s \prec f} \pi(s) = \varPhi(f) \geq v$.
	
	Now, assume $\varPhi = (\pi,\ell)$ for some finitary property $\pi$ and value function $\ell$ such that for every $f \in \Sigma^\omega$ and value $v \in \D$ we have $\varPhi(f) \geq v$ iff there exists $s \prec f$ such that for all $s \preceq r \prec f$ we have $\pi(r) \geq v$.
	We claim that $\varPhi(f) = \liminf_{s \prec f} \pi(s)$ for every $f \in \Sigma^\omega$.
	Suppose towards contradiction that the equality does not hold for some trace.
	If $\varPhi(f) \not \geq \liminf_{s \prec f} \pi(s)$ for some $f \in \Sigma^\omega$, let $v = \liminf_{s \prec f} \pi(s)$ and observe that (i) $\varPhi(f) \not \geq v$, and (ii) $\liminf_{s \prec f} \pi(s) \geq v$.
	However, by hypothesis, (i) implies that for all $s \prec f$ there exists $s \preceq r \prec f$ with $\pi(r) \not \geq v$, which means that $\liminf_{s \prec f} \pi(s) \not \geq v$, resulting in a contradiction to (ii).
	The case where $\varPhi(f) \not \leq \liminf_{s \prec f} \pi(s)$ for some $f \in \Sigma^\omega$ is similar.
	Therefore, $\varPhi$ is a $\liminf$ property.
\qed\end{proof}

Next, we show that $\liminf$-properties are closed under pairwise minimum.

\begin{proposition}\label{prop:liminf:clorure}
	For every value domain $\D$, the set of $\liminf$-properties over $\D$ is closed under $\min$.
\end{proposition}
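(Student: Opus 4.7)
The plan is to take $\pi(s) \defeq \min(\pi_1(s), \pi_2(s))$ as the candidate finitary part of $\min(\varPhi_1, \varPhi_2)$, where $\varPhi_i = (\pi_i, \liminf)$, and then show that the limit property $\varPsi \defeq (\pi, \liminf)$ coincides pointwise with $\min(\varPhi_1, \varPhi_2)$ on $\Sigma^\omega$. Since $\varPsi$ is of the form $(\pi, \liminf)$ by construction, establishing this coincidence suffices to conclude that $\min(\varPhi_1, \varPhi_2)$ is a $\liminf$-property.

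Rather than attack the direct identity $\liminf_n \min(x_n, y_n) = \min(\liminf_n x_n, \liminf_n y_n)$, which is a form of distributivity of $\wedge$ over directed joins and need not hold in an arbitrary complete lattice $\D$, I would route the argument through the prefix characterization of Theorem~\ref{thm:liminf:liminflooking}. Applied to each of the three $\liminf$-properties $\varPhi_1, \varPhi_2, \varPsi$, it yields, for every $f \in \Sigma^\omega$ and $v \in \D$, the equivalence between the property being at least $v$ on $f$ and the existence of a prefix $s \prec f$ such that the corresponding finitary part stays $\geq v$ on all extensions $s \preceq r \prec f$.

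The core step is then a routine rewriting. We have $\min(\varPhi_1(f), \varPhi_2(f)) \geq v$ iff $\varPhi_1(f) \geq v$ and $\varPhi_2(f) \geq v$; unfold each conjunct via Theorem~\ref{thm:liminf:liminflooking} to obtain prefix witnesses $s_1, s_2$; merge them into a single witness $s$ by taking the longer one, which is well-defined since both are prefixes of $f$ and hence prefix-comparable; and finally collapse using the elementary fact that $\min(\pi_1(r), \pi_2(r)) \geq v$ iff $\pi_1(r) \geq v$ and $\pi_2(r) \geq v$. This produces $\min(\varPhi_1(f), \varPhi_2(f)) \geq v \iff \varPsi(f) \geq v$ for every $v \in \D$, from which pointwise equality follows by instantiating $v$ with each side in turn. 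The main design decision is precisely the choice to go through Theorem~\ref{thm:liminf:liminflooking} rather than through a direct lattice-theoretic manipulation of $\liminf$: the former preserves generality over arbitrary complete lattices and turns the argument into combinatorics on finite prefixes, whereas the latter would impose unwanted distributivity assumptions on $\D$.
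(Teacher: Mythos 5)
Your proof is correct and takes essentially the same route as the paper's: both use $\pi(s)=\min(\pi_1(s),\pi_2(s))$ as the candidate finitary part and establish pointwise equality of $(\pi,\liminf)$ with $\min(\varPhi_1,\varPhi_2)$ via the prefix characterization of Theorem~\ref{thm:liminf:liminflooking}, merging the two prefix witnesses by taking the longer one. The only difference is presentational: you prove the biconditional $\min(\varPhi_1(f),\varPhi_2(f))\geq v \iff \varPsi(f)\geq v$ for all $v$ and then instantiate, whereas the paper establishes the two inequalities separately by contradiction.
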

\begin{proof} 	Consider two $\liminf$-properties $\varPhi_1 = (\pi_1, \liminf)$, $\varPhi_2 = (\pi_2, \liminf)$ and let $\varPhi$ be as follows: $\varPhi = (\pi, \liminf)$ where $\pi(s) = \min (\pi_1(s), \pi_2(s))$ for all $s\in \Sigma^*$.
	We now prove that $\varPhi(f) = \min(\varPhi_1(f), \varPhi_2(f))$.

	Suppose towards contradiction that $\min(\varPhi_1(f), \varPhi_2(f)) \ngeq \varPhi(f)$ for some $f\in\Sigma^\omega$.
	Observe that for all $g\in\Sigma^\omega$ and $v\in\D$, if $\min(\varPhi_1(g), \varPhi_2(g)) \ngeq v$ then $\varPhi_1(g) \ngeq v$ or $\varPhi_2(g) \ngeq v$.
	We assume without loss of generality that $\varPhi_1(f) \ngeq \varPhi(f)$.
	By Theorem~\ref{thm:liminf:liminflooking}, $\varPhi_1(f) \ngeq \varPhi(f)$ implies that for all $\hat{s} \prec f$, there exists $\hat{s} \prefixeq \hat{r} \prefix f$ such that $\varPhi_1(\hat{r}) \ngeq \varPhi(f)$.
	Dually, $\varPhi(f) \geq \varPhi(f)$ implies that there exists $\tilde{s} \prec f$ such that $\varPhi(r) \geq \varPhi(f)$ for all $\tilde{s} \prefixeq r \prefix f$.
	In particular, there exists $\tilde{s} \prefixeq \tilde{r} \prefix f$ such that $\varPhi_1(\hat{r}) \ngeq \varPhi(f)$ and $\varPhi(\hat{r}) \geq \varPhi(f)$.
	By the definition of $\min$, we have that $\varPhi_1(\hat{r}) \geq \varPhi(\hat{r}) \geq \varPhi(f)$ which contradicts that $\varPhi_1(\hat{r}) \ngeq \varPhi(f)$.
	Hence, we proved that $\min(\varPhi_1(f), \varPhi_2(f)) \geq \varPhi(f)$ for all $f\in\Sigma^\omega$.
	
	Suppose towards contradiction that $\varPhi(f) \ngeq \min(\varPhi_1(f), \varPhi_2(f))$ for some $f\in\Sigma^\omega$.
	In particular, $\liminf_{s \prec f} \min(\pi_1(s), \pi_2(s)) \ngeq \min(\varPhi_1(f), \varPhi_2(f))$.
	Observe that for all $s\in\Sigma^*$ and $v\in\D$, if $\min(\pi_1(s), \pi_2(s)) \ngeq v$ then $\pi_1(s) \ngeq v$ or $\pi_2(s) \ngeq v$.
	We assume without loss of generality that $|\{s \st \exists s \prefixeq r \prefix f, \pi_1(r) \ngeq \min(\varPhi_1(f), \varPhi_2(f))\}| = \infty$, or equivalently for all $s \prec f$, there exists $s \prefixeq r \prefix f$ such that $\pi_1(r) \ngeq \min(\varPhi_1(f), \varPhi_2(f))$.
	By Theorem~\ref{thm:liminf:liminflooking}, we get $\varPhi_1(f) \ngeq \min(\varPhi_1(f), \varPhi_2(f))$.
	By the definition of $\min$, we have that $\varPhi_1(f) \geq \min(\varPhi_1(f), \varPhi_2(f))$ which contradicts that $\varPhi_1(f) \ngeq \min(\varPhi_1(f), \varPhi_2(f))$.
	Hence, we proved that $\varPhi(f) \geq \min(\varPhi_1(f), \varPhi_2(f))$ for all $f\in\Sigma^\omega$.
\qed\end{proof}

Now, we show that the tail-minimal response-time property can be expressed as a countable supremum of $\inf$-properties.

\begin{example}
	Let $i \in \N$ and define $\pi_{i,\text{last}}$ as a finitary property that imitates $\pi_{\text{last}}$ from Example~\ref{ex:minresp}, but ignores the first $i$ observations of every finite trace.
	Formally, for $s \in \Sigma^*$, we define $\pi_{i,\text{last}}(s) = \pi_{\text{last}}(r)$ for $s = s_i r$ where $s_i \preceq s$ with $|s_i| = i$, and $r \in \Sigma^*$.
	Observe that an equivalent way to define $\varPhi_{\text{tmin}}$ from Example~\ref{ex:liminfresp} is $\sup_{i \in \N} (\inf_{s \prec f} (\pi_{i,\text{last}}(s)))$ for all $f \in \Sigma^\omega$.
	Intuitively, for each $i \in \N$, we obtain an $\inf$-property that computes the minimal response time of the suffixes of a given trace.
	Taking the supremum over these, we obtain the greatest lower bound on all but finitely many response times.
\qed\end{example}

We generalize this observation and show that every $\liminf$-property is a countable supremum of $\inf$-properties.

\begin{theorem}\label{thm:liminf:supinf}
	Every $\liminf$-property is a countable supremum of $\inf$-properties.
\end{theorem}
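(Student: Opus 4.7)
The plan is to imitate the example just given, generalizing the trick to arbitrary $\liminf$-properties. Let $\varPhi = (\pi,\liminf)$ be a $\liminf$-property. Recall the standard identity $\liminf_{n} a_n = \sup_{n \in \N} \inf_{m \geq n} a_m$. Applied to the sequence of values $\pi(s)$ along the prefixes of $f$, this gives
\[
\varPhi(f) \;=\; \liminf_{s \prec f} \pi(s) \;=\; \sup_{i \in \N} \; \inf_{\substack{s \prec f \\ |s| \geq i}} \pi(s).
\]
So the goal is to realize the inner infimum, for each fixed $i$, as an $\inf$-property applied to $f$.

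To that end, for each $i \in \N$, I would define a finitary property $\pi_i : \Sigma^* \to \D$ that coincides with $\pi$ on long prefixes and is ``inert'' on short ones. Concretely, set
\[
\pi_i(s) \;=\; \begin{cases} \pi(s) & \text{if } |s| \geq i, \\ \top & \text{if } |s| < i, \end{cases}
\]
where $\top$ is the maximum of the complete lattice $\D$. Let $\varPhi_i = (\pi_i, \inf)$, which is by definition an $\inf$-property. Since the $\top$ values cannot lower an infimum, we have $\inf_{s \prec f} \pi_i(s) = \inf_{s \prec f,\, |s| \geq i} \pi(s)$, and therefore
\[
\sup_{i \in \N} \varPhi_i(f) \;=\; \sup_{i \in \N} \inf_{\substack{s \prec f \\ |s| \geq i}} \pi(s) \;=\; \liminf_{s \prec f} \pi(s) \;=\; \varPhi(f),
\]
exhibiting $\varPhi$ as the countable supremum $\sup_{i\in\N} \varPhi_i$ of the $\inf$-properties $\varPhi_i$.

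There is no real obstacle here: completeness of $\D$ guarantees the existence of $\top$ and of all the required suprema and infima, and the identity for $\liminf$ in terms of $\sup$ of tail infima is the standard one. The only point requiring a brief justification is the equality $\inf_{s \prec f} \pi_i(s) = \inf_{s \prec f,\,|s|\geq i} \pi(s)$, which follows immediately because $\top \geq v$ for every $v \in \D$, so padding the first $i$ prefixes with $\top$ cannot decrease the infimum.
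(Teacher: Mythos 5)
Your proof is correct and uses exactly the same construction as the paper: padding the first $i$ prefixes with $\top$ to turn each tail infimum into an $\inf$-property, then invoking the identity $\liminf = \sup$ of tail infima. No differences worth noting.
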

\begin{proof}
	Let $\varPhi = (\pi,\liminf)$.
	For each $i \in \N$ let us define $\varPhi_i = (\pi_i,\inf)$ where $\pi_i$ is as follows:
	$\pi_i(s) = \top$ if $|s| < i$, and $\pi_i(s) = \pi(s)$ otherwise.
	We claim that $\varPhi(f) = \sup_{i \in \N} \varPhi_i(f)$ for all $f \in \Sigma^\omega$.
	Expanding the definitions, observe that the claim is $\liminf_{s \prec f} \pi(s) = \sup_{i \in \N} \inf_{s \prec f} \pi_i(s)$.
	Due to the definition of $\liminf$, the left-hand side is equal to $\sup_{i \in \N} \inf_{s \prec f \land |s| \geq i} \pi(s)$.
	Moreover, due to the definition of $\pi_i$, this is equal to the right-hand side.
\qed\end{proof}

We would also like to have the converse of Theorem~\ref{thm:liminf:supinf}, i.e., that every countable supremum of $\inf$-properties is a $\liminf$-property.
Currently, we are able to show only the following.

\begin{theorem}\label{thm:liminf:leq}
  For every infinite sequence $(\varPhi_i)_{i \in \N}$ of $\inf$-properties, there is a $\liminf$-property $\varPhi$ such that $\sup_{i \in \N} \varPhi_i(f) \leq \varPhi(f)$.
\end{theorem}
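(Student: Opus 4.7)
The plan is to construct the desired $\liminf$-property explicitly from the given sequence. For each $i \in \N$, write $\varPhi_i = (\pi_i, \inf)$ in limit form. I would then define a single finitary property $\pi \colon \Sigma^* \to \D$ that aggregates the $\pi_i$'s so that, for every index $i$, the contribution of $\pi_i$ enters $\pi(s)$ from the moment $|s|$ is large enough. A natural choice is
\[
	\pi(s) \;=\; \sup_{i \leq |s|} \pi_i(s)
\]
for every $s \in \Sigma^*$, and then set $\varPhi = (\pi, \liminf)$, which is a $\liminf$-property by construction.

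To verify the required bound $\sup_{i \in \N} \varPhi_i(f) \leq \varPhi(f)$ for every $f \in \Sigma^\omega$, I would fix an arbitrary $n \in \N$. For every $s \prec f$ with $|s| \geq n$, the definition of $\pi$ gives $\pi(s) \geq \pi_n(s)$, hence
\[
	\inf_{s \prec f,\, |s| \geq n} \pi(s) \;\geq\; \inf_{s \prec f} \pi_n(s) \;=\; \varPhi_n(f).
\]
Since $\liminf_{s \prec f} \pi(s) = \sup_{n \in \N} \inf_{s \prec f,\, |s| \geq n} \pi(s)$ by definition of $\liminf$, taking the supremum over $n$ on both sides yields $\varPhi(f) \geq \sup_{n \in \N} \varPhi_n(f)$, which is exactly the claimed inequality.

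No real obstacle arises, because only an upper bound is claimed, not equality. The reason the authors formulate the theorem as $\leq$ rather than as an exact converse of Theorem~\ref{thm:liminf:supinf} is that the candidate $\pi(s)$ can transiently overshoot each individual $\pi_i(s)$, so the resulting $\liminf$ may strictly dominate $\sup_i \varPhi_i(f)$. Forcing equality would require additional structural assumptions that prevent such overshoots, e.g.\ a monotonicity or compatibility condition on the family $(\pi_i)_{i \in \N}$ that lets one pick, for each $s$, a distinguished index whose $\pi_i$-value realises $\pi(s)$.
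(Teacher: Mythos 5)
Your construction $\pi(s)=\sup_{i\le|s|}\pi_i(s)$ with $\varPhi=(\pi,\liminf)$ is exactly the one the paper uses (there written $\max_{i\le|s|}\pi_i(s)$), and your verification via $\liminf_{s\prec f}\pi(s)=\sup_n\inf_{|s|\ge n}\pi(s)\ge\sup_n\varPhi_n(f)$ is a correct, slightly more direct rendering of the paper's comparison of the sequences $x_k\le y_k$. The proposal is correct and takes essentially the same approach; it even avoids the paper's (unneeded) normalization that each $\pi_i$ be monotonically decreasing.
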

\begin{proof}
	For each $i \in \N$, let $\varPhi_i = (\pi_i, \inf)$ for some finitary property $\pi_i$.
	We assume without loss of generality that each $\pi_i$ is monotonically decreasing.
	Let $\varPhi = (\pi, \liminf)$ where $\pi(s) = \max_{i \leq |s|} \pi_i(s)$ for all $s \in \Sigma^*$.
	We want to show that $\sup_{i \in \N} \varPhi_i(f) \leq \varPhi(f)$ for all $f \in \Sigma^\omega$.
	Expanding the definitions, observe that the claim is the following: $\sup_{i \in \N}( \inf_{s \prec f} \pi_i(s)) \leq \liminf_{s \prec f} (\max_{i \leq |s|} \pi_i(s))$ for all $f \in \Sigma^\omega$.
	
	Let $f \in \Sigma^\omega$, and for each $k \in \N$, let $x_k = \max_{i \leq k} \inf_{s \prec f} \pi_i(s)$ and $y_k = \max_{i \leq k} \pi_i(s_k)$ where $s_k \prec f$ with $|s_k| = k$.
	Observe that we have $x_k \leq y_k$ for all $k \in \N$.
	Then, we have $\liminf_{k \to \infty} x_k \leq \liminf_{k \to \infty} y_k$.
	Moreover, since the sequence $(x_k)_{k \in \N}$ is monotonically decreasing, we can replace the $\liminf$ on the left-hand side with $\lim$ to obtain the following:
	$\lim_{k \to \infty} \max_{i \leq k} \inf_{s \prec f} \pi_i(s) \leq \liminf_{k \to \infty} \max_{i \leq k} \pi_i(s_k)$. 
	Then, rewriting the expression concludes the proof by giving us  $\sup_{i \in \N} (\inf_{s \prec f} \pi_i(s)) \leq \liminf_{s \prec f} (\max_{i \leq |s|} \pi_i(s))$. 
\qed\end{proof}

We conjecture that some $\liminf$-property that satisfies Theorem~\ref{thm:liminf:leq} is also a lower bound on the countable supremum that occurs in the theorem.
This, together with Theorem~\ref{thm:liminf:leq}, would imply the converse of Theorem~\ref{thm:liminf:supinf}.
Proving the converse of Theorem~\ref{thm:liminf:supinf} would give us, thanks to the following duality, that the $\liminf$- and $\limsup$-properties characterize the second level of the Borel hierarchy of the topology induced
by the safety closure operator.

\begin{proposition}
	A property $\varPhi$ is a $\liminf$-property iff its complement $\overline{\varPhi}$ is a $\limsup$-property.
\end{proposition}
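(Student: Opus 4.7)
The plan is to reduce the statement to the observation that the $\liminf$ and $\limsup$ value functions are exchanged when one flips the ordering of the underlying lattice. Recall that for a complete lattice $(\D,<)$ and a sequence $(a_i)_{i\in\N}$ in $\D$, we have $\liminf_i a_i = \sup_n \inf_{i\geq n} a_i$ and $\limsup_i a_i = \inf_n \sup_{i\geq n} a_i$. In the inverse lattice $\overline{\D}$ the roles of $\sup$ and $\inf$ are swapped, so the $\liminf$ on $\D$ coincides, as an element of the underlying set, with the $\limsup$ computed in $\overline{\D}$ (and vice versa).

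For the forward direction, I would take $\varPhi = (\pi, \liminf)$ for some finitary $\pi : \Sigma^* \to \D$, and consider the same finitary property $\pi'$ viewed as a map into $\overline{\D}$ (i.e., $\pi'(s) = \pi(s)$ with the reversed order). For every trace $f \in \Sigma^\omega$, by the definition of complement,
\begin{equation*}
  \overline{\varPhi}(f) \;=\; \varPhi(f) \;=\; \liminf\nolimits_{s \prec f} \pi(s) \;=\; \limsup\nolimits_{s \prec f} \pi'(s),
\end{equation*}
where the last equality is the duality of $\liminf$ and $\limsup$ across order reversal. Hence $\overline{\varPhi} = (\pi', \limsup)$ witnesses that $\overline{\varPhi}$ is a $\limsup$-property.

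The converse is immediate by symmetry: complementation is an involution, since $\overline{\overline{\D}} = \D$ and $\overline{\overline{\varPhi}} = \varPhi$. Thus if $\overline{\varPhi}$ is a $\limsup$-property, applying the forward direction to $\overline{\varPhi}$ shows that $\overline{\overline{\varPhi}} = \varPhi$ is a $\liminf$-property.

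There is no real obstacle here; the only point requiring a moment of care is making explicit that $\liminf$ and $\limsup$ are defined in terms of the lattice-theoretic $\sup$ and $\inf$, so that flipping the order of $\D$ genuinely swaps the two value functions. Once that duality is stated, the proof is a one-line rewriting, and the biconditional follows from the involutive nature of the complement.
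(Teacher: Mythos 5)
Your argument is correct and is exactly the duality the paper has in mind: the proposition is stated there without proof, as an immediate consequence of the fact that complementation preserves values but reverses the order of $\D$, hence exchanges the lattice-theoretic $\sup$ and $\inf$ and therefore the $\liminf$ and $\limsup$ value functions. The only cosmetic point is that your converse really invokes the mirrored implication (a $\limsup$-property has a $\liminf$-property as complement), which holds by the same one-line computation with the roles of $\sup$ and $\inf$ swapped, rather than literally the forward direction applied to $\overline{\varPhi}$.
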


\section{Quantitative Liveness}
Similarly as for safety, we take the perspective of the quantitative membership problem to define liveness:
a property $\varPhi$ is live iff, whenever a property value is less than $\top$, there exists a value $v$ for which the wrong hypothesis $\varPhi(f) \geq v$ can never be dismissed by any finite witness $s \prec f$.

\begin{definition}[Liveness]
	A property $\varPhi : \Sigma^\omega \to \D$ is \emph{live} iff for all $f \in \Sigma^\omega$, if $\varPhi(f) < \top$, then there exists a value $v \in \D$ such that $\varPhi(f) \not \geq v$ and for all prefixes $s \prec f$, we have $\sup_{g \in \Sigma^\omega} \varPhi(sg) \geq v$.
\end{definition}

An equivalent definition can be given through the safety closure.

\begin{theorem} \label{thm:livenessclosure}
	A property $\varPhi$ is live iff $\varPhi^*(f) > \varPhi(f)$ for every $f \in \Sigma^\omega$ with $\varPhi(f) < \top$.
\end{theorem}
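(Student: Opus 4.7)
The plan is to prove the biconditional pointwise: for each trace $f$ with $\varPhi(f) < \top$, show that the existence of a witness value $v$ as required by liveness is equivalent to the strict inequality $\varPhi^*(f) > \varPhi(f)$. The key tool is Proposition~\ref{proposition:safe:closure}, which gives that $\varPhi^*(f) \geq \varPhi(f)$ always, combined with the fact that in a partial order a strict inequality $a > b$ is just $a \geq b$ together with $a \neq b$.

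For the forward direction, fix $f$ with $\varPhi(f) < \top$ and let $v$ be the witness supplied by liveness, so $\varPhi(f) \not\geq v$ and $\sup_{g \in \Sigma^\omega} \varPhi(sg) \geq v$ for every $s \prec f$. Taking the infimum over $s$ on the right yields $\varPhi^*(f) \geq v$ directly from the definition of the safety closure. Since $\varPhi(f) \not\geq v$ while $\varPhi^*(f) \geq v$, we cannot have $\varPhi^*(f) = \varPhi(f)$; combined with $\varPhi^*(f) \geq \varPhi(f)$ from Proposition~\ref{proposition:safe:closure}(2), this yields $\varPhi^*(f) > \varPhi(f)$.

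For the reverse direction, suppose $\varPhi^*(f) > \varPhi(f)$ for a given $f$ with $\varPhi(f) < \top$, and take $v = \varPhi^*(f)$ as the witness. First, $\varPhi(f) \not\geq v$: otherwise $\varPhi(f) \geq \varPhi^*(f)$ together with Proposition~\ref{proposition:safe:closure}(2) would force $\varPhi(f) = \varPhi^*(f)$, contradicting strictness. Second, for every $s \prec f$, the definition of infimum immediately gives $\sup_{g \in \Sigma^\omega} \varPhi(sg) \geq \inf_{s' \prec f} \sup_{g \in \Sigma^\omega} \varPhi(s'g) = \varPhi^*(f) = v$. Thus $v$ satisfies both conditions of the liveness definition.

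The only subtle point is handling the partial-order framework carefully: one must not conflate $a \not\geq b$ with $a < b$, nor $a > b$ with $a \geq b \wedge a \neq b$ being automatic. All such steps are mediated by Proposition~\ref{proposition:safe:closure}(2), which anchors the comparison of $\varPhi$ and $\varPhi^*$ to a genuine $\geq$ relation, so these order-theoretic nuisances resolve cleanly rather than becoming a real obstacle.
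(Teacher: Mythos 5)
Your proof is correct and follows essentially the same route as the paper: both directions use $v=\varPhi^*(f)$ (resp.\ the liveness witness $v$) together with the definition of the safety closure as $\inf_{s\prec f}\sup_{g}\varPhi(sg)$ and the bound $\varPhi^*(f)\geq\varPhi(f)$ from Proposition~\ref{proposition:safe:closure}. You are merely more explicit than the paper about the order-theoretic step from $\varPhi^*(f)\geq v$ and $\varPhi(f)\not\geq v$ to strict inequality, which is a welcome clarification rather than a deviation.
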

\begin{proof}
	First, suppose $\varPhi$ is live.
	Let $v$ be as in the definition of liveness, and observe that, by definition, we have $\varPhi^*(f) \geq v$ for all $f \in \Sigma^\omega$.
	Moreover, since $v \not \leq \varPhi(f)$, we are done.
	Now, suppose $\varPhi^*(f) > \varPhi(f)$ for every $f \in \Sigma^\omega$ with $\varPhi(f) < \top$.
	Let $f \in \Sigma^\omega$ be such a trace, and let $v = \varPhi^*(f)$.
	It is easy to see that $v$ satisfies the liveness condition since $\varPhi^*(f) = \inf_{s \prec f} \sup_{g \in \Sigma^\omega} \varPhi(sg) > \varPhi(f)$.
\qed\end{proof}

We show that liveness properties are closed under pairwise $\max$.
This is not the case for pairwise $\min$ as we will later present a liveness-liveness decomposition for every quantitative property (Theorem~\ref{thm:decomp}).

\begin{proposition}\label{prop:live:closure}
	For every value domain $\D$, the set of liveness properties over $\D$ is closed under $\max$.
\end{proposition}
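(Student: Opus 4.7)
The plan is to invoke Theorem~\ref{thm:livenessclosure}, which characterizes liveness of $\varPhi$ as the strict inequality $\varPhi^*(f) > \varPhi(f)$ at every $f$ with $\varPhi(f) < \top$. Setting $\varPhi = \max(\varPhi_1, \varPhi_2)$, I will derive this strict inequality from the corresponding strict inequalities for $\varPhi_1$ and $\varPhi_2$, together with a pointwise bound relating the safety closure of $\varPhi$ to those of the components.

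First I would establish that $\varPhi^*(f) \geq \max(\varPhi_1^*(f), \varPhi_2^*(f))$ for every $f$. For any fixed $s \in \Sigma^*$, binary $\max$ commutes with $\sup$ over a single index set, so $\sup_g \max(\varPhi_1(sg), \varPhi_2(sg)) = \max(\sup_g \varPhi_1(sg), \sup_g \varPhi_2(sg))$; taking $\inf_{s \prec f}$ and applying the generic inequality $\inf(X \vee Y) \geq (\inf X) \vee (\inf Y)$ then gives the claimed bound. Next I would fix a trace $f$ with $\varPhi(f) < \top$; since $\top$ is absorbing for the join, this forces both $\varPhi_1(f) < \top$ and $\varPhi_2(f) < \top$, so Theorem~\ref{thm:livenessclosure} applied to the liveness of each $\varPhi_i$ yields $\varPhi_i^*(f) > \varPhi_i(f)$ for $i \in \{1,2\}$.

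The main obstacle is upgrading these ingredients to the strict inequality $\varPhi^*(f) > \max(\varPhi_1(f), \varPhi_2(f))$. I plan to argue by contradiction: if equality held, then combining with $\varPhi_i^*(f) \geq \varPhi_i(f)$ and the bound from the previous paragraph (which squeezes $\max(\varPhi_1^*(f), \varPhi_2^*(f))$ from above by $\varPhi^*(f)$ and from below by $\max(\varPhi_1(f), \varPhi_2(f))$), one would obtain $\max(\varPhi_1^*(f), \varPhi_2^*(f)) = \max(\varPhi_1(f), \varPhi_2(f))$. When the value domain is totally ordered, choosing an index $i$ that realizes the max on the right immediately yields $\varPhi_i^*(f) = \varPhi_i(f)$, contradicting liveness of $\varPhi_i$ via Theorem~\ref{thm:livenessclosure}. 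For a general complete lattice, I would fall back on the original definition, extracting witnesses $v_1, v_2$ from the liveness of $\varPhi_1$ and $\varPhi_2$, and then construct a combined witness $v$ for $\varPhi$; the delicate point is to guarantee both $\sup_g \varPhi(sg) \geq v$ for all $s \prec f$ (which follows from $v := v_1 \vee v_2$ and the distributivity of $\sup$ over $\max$ observed above) and $\max(\varPhi_1(f), \varPhi_2(f)) \not\geq v$, and it is the second conjunct whose verification in a non-totally-ordered lattice requires the most care.
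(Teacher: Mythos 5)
Your proposal is complete only for totally ordered value domains; for a general complete lattice you explicitly leave the key step open, so there is a gap. Concretely: from $\varPhi_1(f)\not\geq v_1$ and $\varPhi_2(f)\not\geq v_2$ one cannot conclude $\max(\varPhi_1(f),\varPhi_2(f))\not\geq \max(v_1,v_2)$ in an arbitrary lattice. Take the powerset of $\{1,2\}$ ordered by inclusion, with $\varPhi_1(f)=v_2=\{1\}$ and $\varPhi_2(f)=v_1=\{2\}$: both hypotheses hold, yet the join of the values equals $\{1,2\}$ and dominates the join of the witnesses. Your closure-based squeeze runs into the same wall: $\varPhi_1^*(f)\vee\varPhi_2^*(f)=\varPhi_1(f)\vee\varPhi_2(f)$ is perfectly consistent with $\varPhi_i^*(f)>\varPhi_i(f)$ for both $i$ (same example, with $\varPhi_i^*(f)=\{1,2\}$), so the contradiction you reach by ``choosing an index that realizes the max'' is only available when the order is total. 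You correctly isolate this as the ``delicate point,'' but isolating it is not resolving it.

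The part you do complete is sound. The preparatory facts --- that binary $\max$ commutes with $\sup$ over a common index set, hence $\varPhi^*(f)\geq\max(\varPhi_1^*(f),\varPhi_2^*(f))$, and that $\varPhi(f)<\top$ forces $\varPhi_i(f)<\top$ for both $i$ --- are all correct, and on a totally ordered domain your argument via Theorem~\ref{thm:livenessclosure} closes cleanly. For comparison, the paper's proof works directly from the definition of liveness: it extracts witnesses $v_1,v_2$, sets $v=\max(v_1,v_2)$, obtains $\sup_{g}\varPhi(sg)\geq v$ for all $s\prec f$ from the same distributivity of $\sup$ over $\max$, and then asserts $\varPhi(f)\not\geq v$ --- which is exactly the conjunct you flagged as delicate. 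So your hesitation points at a genuine subtlety in the partially ordered case rather than a failure of strategy; but as submitted, your proof is not finished at the stated level of generality, and you should either supply the missing lattice argument or restrict the claim to the totally ordered setting where your closure-based route does go through.
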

\begin{proof}
	Consider two liveness properties $\varPhi_1$, $\varPhi_2$ and let $\varPhi$ be their pairwise maximum, i.e., $\varPhi(f) = \max(\varPhi_1(f),\varPhi_2(f))$ for all $f\in \Sigma^\omega$.
	We show that $\varPhi$ fulfills the liveness definition for any given $f \in \Sigma^\omega$.
	If $\varPhi_1(f) = \top$ or $\varPhi_2(f) = \top$ then $\varPhi(f) = \top$.
	Otherwise, for each $i \in \{1, 2\}$, there exists $v_i$ such that $\varPhi_i(f) \not\geq v_i$, and for all $s \prefix f$ we have that $\sup_{g \in \Sigma^\omega} \varPhi_i(sg) \geq v_i$.
	Hence, defining $v = \max(v_1, v_2)$ implies $\varPhi(f) \not\geq v$ as well as $\sup_{g \in \Sigma^\omega} \varPhi(sg) \geq v$ for all $s \prefix f$.
\qed\end{proof}

Our definition of liveness generalizes the boolean one.
A boolean property $P \subseteq \Sigma^\omega$ is live according to the classical definition \cite{DBLP:journals/ipl/AlpernS85} iff its characteristic property $\varPhi_P$ is live according to our definition.
Moreover, it is worth emphasizing that boolean liveness enjoys a stronger closure under maximum, namely, the union of a boolean liveness property and any property is live in the boolean sense.


However, as in the boolean setting, the intersection of safety and liveness contains only the single degenerate property that always outputs $\top$.

\begin{proposition}\label{prop:top}
	A property $\varPhi$ is safe and live iff $\varPhi(f) = \top$ for all $f \in \Sigma^\omega$.
\end{proposition}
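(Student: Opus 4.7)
The plan is to establish both directions by straightforward contradiction, leveraging the characterizations already proved earlier in the excerpt.

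For the easy direction, I would first observe that if $\varPhi(f) = \top$ for all $f \in \Sigma^\omega$, then both the safety condition and the liveness condition are vacuously satisfied. Safety requires a finite witness whenever $\varPhi(f) \not\geq v$ for some $v \in \D$; since $\varPhi(f) = \top$ always, no such $v$ exists (as $\top \geq v$ for every $v$), so the universal statement is vacuous. Similarly, the liveness definition starts with the premise $\varPhi(f) < \top$, which never holds, so the implication is vacuously true.

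For the nontrivial direction, I would suppose $\varPhi$ is both safe and live, and assume toward contradiction that there exists $f \in \Sigma^\omega$ with $\varPhi(f) < \top$. Applying the liveness definition to this $f$ yields a value $v$ such that $\varPhi(f) \not\geq v$ and $\sup_{g \in \Sigma^\omega} \varPhi(sg) \geq v$ for every prefix $s \prec f$. But applying the safety definition to the same $f$ and $v$ produces some prefix $s \prec f$ with $\sup_{g \in \Sigma^\omega} \varPhi(sg) \not\geq v$, directly contradicting the liveness condition.

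Alternatively, a shorter route goes through Theorem~\ref{thm:safe:main} and Theorem~\ref{thm:livenessclosure}: safety gives $\varPhi(f) = \varPhi^*(f)$ for all $f$, while liveness gives $\varPhi^*(f) > \varPhi(f)$ whenever $\varPhi(f) < \top$; combining these forces $\varPhi(f) = \top$ everywhere. I do not anticipate any real obstacle in this proof, as both definitions were set up precisely to make safety and liveness orthogonal in exactly the boolean-style sense; the only subtlety is to recognize that the condition ``$\varPhi(f) \not\geq v$'' in the safety definition simply cannot be activated when $\varPhi(f) = \top$, which is what makes the trivial property qualify as safe.
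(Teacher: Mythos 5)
Your proof is correct and follows essentially the same approach as the paper: the forward direction derives a contradiction between the liveness witness $v$ (no prefix dismisses it) and the safety requirement (some prefix must dismiss it), and your ``alternative route'' via Theorem~\ref{thm:safe:main} and Theorem~\ref{thm:livenessclosure} is precisely the argument the paper writes out. Your primary version is a slight streamlining that pits the two definitions against each other directly without passing through the safety closure, and it has the minor advantage of using the hypothesis $\varPhi(f) \not\geq v$ exactly as stated, which is the form needed to invoke the safety definition on a partially ordered domain.
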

\begin{proof}
	Observe that $\varPhi_{\top}$ is trivially safe and live.
	Now, let $\varPsi$ be a property that is both safe and live, and suppose towards contradiction that $\varPsi(f) < \top$ for some $f \in \Sigma^\omega$.
	Since $\varPsi$ is live,  there exists $v > \varPsi(f)$ such that for all $s \prec f$, we have $\sup_{g \in \Sigma^\omega} \varPsi(sg) \geq v$.
	In particular, $\inf_{s\prefix f}\sup_{g \in \Sigma^\omega} \varPsi(sg) \geq v > \varPsi(f)$ holds, implying $\varPsi^*(f) > \varPsi(f)$ by definition of safety closure.
	By Theorem~\ref{thm:safe:main}, this contradicts the assumption that $\varPsi$ is safe.
\qed\end{proof}

We define co-liveness symmetrically, and note that the duals of the observations above also hold for co-liveness.

\begin{definition}[Co-liveness]
	A property $\varPhi : \Sigma^\omega \to \D$ is \emph{co-live} iff for all $f \in \Sigma^\omega$, if $\varPhi(f) > \bot$, then there exists a value $v \in \D$ such that $\varPhi(f) \not \leq v$ and for all prefixes $s \prec f$, we have $\inf_{g \in \Sigma^\omega} \varPhi(sg) \leq v$.
\end{definition}

Next, we present some examples of liveness and co-liveness properties.
We start by showing that $\liminf$- and $\limsup$-properties can be live and co-live.

\begin{example}
	Let $\Sigma = \{a,b\}$ be an alphabet, and let $P = \LTLg \LTLf a$ and $Q = \LTLf \LTLg b$ be boolean properties defined in linear temporal logic.
	Consider their characteristic properties $\varPhi_P$ and $\varPhi_Q$.
	As we pointed out earlier, our definitions generalize their boolean counterparts, therefore $\varPhi_P$ and $\varPhi_Q$ are both live and co-live.
	Moreover, $\varPhi_P$ is a $\limsup$-property: define $\pi_P(s) = 1$ if $s \in \Sigma^* a$, and $\pi_P(s) = 0$ otherwise, and observe that $\varPhi_P(f) = \limsup_{s \prec f} \pi_P(s)$ for all $f \in \Sigma^\omega$.
	Similarly, $\varPhi_Q$ is a $\liminf$-property.
\qed\end{example}

Now, we show that the maximal response-time property is live, and the minimal response time is co-live.

\begin{example}
	Recall the co-safety property $\varPhi_{\max}$ of maximal response time from Example~\ref{ex:maxresp}.
	Let $f \in \Sigma^\omega$ such that $\varPhi_{\max}(f) < \infty$.
	We can extend every prefix $s \prec f$ with $g = \req\, \tick^\omega$, which gives us $\varPhi_{\max}(sg) = \infty > \varPhi(f)$.
	Equivalently, for every $f \in \Sigma^\omega$, we have $\varPhi_{\max}^*(f) = \infty > \varPhi_{\max}(f)$.
	Hence $\varPhi_{\max}$ is live and, analogously, the safety property $\varPhi_{\min}$ from Example~\ref{ex:minresp} is co-live.
\qed\end{example}

Finally, we show that the \emph{average response-time} property is live and co-live.

\begin{example} \label{ex:avgresp}
	Let $\Sigma = \{\req, \gra, \tick, \other\}$.
	For all $s \in \Sigma^*$, let $p(s) = 1$ if there is no pending $\req$ in $s$, and $p(s) = 0$ otherwise.
	Define $\pi_{\text{valid}}(s) = |\{r \preceq s \st \exists t \in \Sigma^* : r = t \, \req \land p(t) = 1\}|$ as the number of valid requests in $s$, and define $\pi_{\text{time}}(s)$ as the number of $\tick$ observations that occur after a valid $\req$ and before the matching $\gra$.
	Then, $\varPhi_{\avg} = (\pi_{\avg}, \liminf)$, where $\pi_{\avg}(s) = \frac{\pi_{\text{time}}(s)}{\pi_{\text{valid}}(s)}$ for all $s \in \Sigma^*$ with $\pi_{\text{valid}}(s) > 0$, and $\pi_{\avg}(s) = \infty$ otherwise.
	For example, $\pi_{\avg}(s) = \frac{3}{2}$ for $s = \req\, \tick\, \gra\, \tick\, \req\, \tick\, \req\, \tick$.
	Note that $\varPhi_{\avg}$ is a $\liminf$-property.
	
	The property $\varPhi_{\avg}$ is defined on the value domain $[0,\infty]$ and is both live and co-live.	
	To see this, let $f \in \Sigma^\omega$ such that $0 < \varPhi_{\avg}(f) < \infty$ and, for every prefix $s \prec f$, consider $g = \req\, \tick^\omega$ and $h = \gra\, (\req\,\gra)^\omega$.
	Since $s g$ has a pending request followed by infinitely many clock ticks, we have $\varPhi_{\avg}(s g) = \infty$.
	Similarly, since $s h$ eventually has all new requests immediately granted, we get $\varPhi_{\avg}(s h) = 0$.
\qed\end{example}

\subsection{The Quantitative Safety-Liveness Decomposition}
A celebrated theorem states that every boolean property can be expressed as an intersection of a safety property and a liveness property~\cite{DBLP:journals/ipl/AlpernS85}. 
In this section, we prove the analogous result for the quantitative setting. 

\begin{example}
	Let $\Sigma = \{\req, \gra, \tick, \other\}$.
	Recall the maximal response-time property $\varPhi_{\max}$ from Example~\ref{ex:maxresp}, and the average response-time property $\varPhi_{\avg}$ from Example~\ref{ex:avgresp}.
	Let $n > 0$ be an integer and define a new property $\varPhi$ by $\varPhi(f) = \varPhi_{\avg}(f)$ if $\varPhi_{\max}(f) \leq n$, and $\varPhi(f) = 0$ otherwise.
	For the safety closure of $\varPhi$, we have $\varPhi^*(f) = n$ if $\varPhi_{\max}(f) \leq n$, and $\varPhi^*(f) = 0$ otherwise.
	Now, we further define $\varPsi(f) = \varPhi_{\avg}(f)$ if $\varPhi_{\max}(f) \leq n$, and $\varPsi(f) = n$ otherwise.
	Observe that $\varPsi$ is live, because every prefix of a trace whose value is less than $n$ can be extended to a greater value.
	Finally, note that for all $f \in \Sigma^\omega$, we can express $\varPhi(f)$ as the pointwise minimum of $\varPhi^*(f)$ and $\varPsi(f)$.
	Intuitively, the safety part $\varPhi^*$ of this decomposition checks whether the maximal response time stays below the permitted bound, and the liveness part $\varPsi$ keeps track of the average response time as long as the bound is satisfied.
\qed\end{example}

Following a similar construction, we show that a safety-liveness decomposition exists for every property.

\begin{theorem}\label{thm:decomp}
	For every property $\varPhi$, there exists a liveness property $\varPsi$ such that $\varPhi(f) = \min(\varPhi^*(f), \varPsi(f))$ for all $f \in \Sigma^\omega$.
\end{theorem}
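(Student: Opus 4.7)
The plan is to give a completely explicit decomposition: define a candidate $\varPsi$ and verify it meets both requirements.

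Concretely, I would set
\[
\varPsi(f) \defeq
\begin{cases}
\varPhi(f) & \text{if } \varPhi^*(f) > \varPhi(f),\\
\top & \text{if } \varPhi^*(f) = \varPhi(f).
\end{cases}
\]
The definition is well-posed because Proposition~\ref{proposition:safe:closure}(2) gives $\varPhi^*(f) \geq \varPhi(f)$ for every $f$, so exactly one of the two cases applies. Note also that on the first branch we automatically have $\varPhi(f) < \top$, since $\varPhi^*(f) > \varPhi(f) = \top$ is impossible.

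The first step is to verify the decomposition equation $\varPhi(f) = \min(\varPhi^*(f), \varPsi(f))$ by splitting on the same cases. If $\varPhi^*(f) > \varPhi(f)$, then $\varPsi(f) = \varPhi(f) < \varPhi^*(f)$, so the minimum is $\varPhi(f)$. If $\varPhi^*(f) = \varPhi(f)$, then $\varPsi(f) = \top$, so the minimum is $\varPhi^*(f) = \varPhi(f)$. Both branches give $\varPhi(f)$, as required.

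The second step is to show that $\varPsi$ is live. Pick any $f$ with $\varPsi(f) < \top$; by construction this forces $\varPhi^*(f) > \varPhi(f)$ and $\varPsi(f) = \varPhi(f)$. I would choose the witness value $v \defeq \varPhi^*(f)$. Then $\varPsi(f) = \varPhi(f) < v$, so $\varPsi(f) \not\geq v$. For any prefix $s \prec f$, observe that $\varPsi(g') \geq \varPhi(g')$ holds pointwise on all traces (the first branch gives equality, the second gives $\top$), so
\[
\sup_{g \in \Sigma^\omega} \varPsi(sg) \;\geq\; \sup_{g \in \Sigma^\omega} \varPhi(sg) \;\geq\; \inf_{s \prec f} \sup_{g \in \Sigma^\omega} \varPhi(sg) \;=\; \varPhi^*(f) \;=\; v.
\]
This verifies the liveness condition for $\varPsi$.

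There is no real obstacle here; the main thing to keep straight is the direction of the inequality $\varPhi^* \geq \varPhi$ (so that putting $\varPsi = \varPhi$ on the ``live'' fibers still gives the correct minimum) and the pointwise bound $\varPsi \geq \varPhi$ used to transport $\sup_g \varPhi(sg) \geq v$ up to $\sup_g \varPsi(sg) \geq v$. The edge case $\varPhi^*(f) = \top$ (making $v = \top$ a legitimate threshold) requires no separate treatment, since $\varPsi(f) < \top$ still yields $\varPsi(f) \not\geq \top$ and the $\sup$ chain above delivers $\top$ on the right-hand side.
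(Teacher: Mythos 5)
Your proof is correct and takes essentially the same route as the paper: the same two-case definition of $\varPsi$ (equal to $\varPhi$ where the safety closure is strictly larger, and $\top$ otherwise), the same case split for the decomposition equation, and the witness $v = \varPhi^*(f)$ for liveness. If anything, your direct argument is slightly cleaner than the paper's proof by contradiction, since you explicitly bridge from $\sup_g \varPhi(sg)$ to $\sup_g \varPsi(sg)$ via the pointwise bound $\varPsi \geq \varPhi$, a step the paper leaves implicit.
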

\begin{proof}
	Let $\varPhi$ be a property and consider its safety closure $\varPhi^*$.
	We define $\varPsi$ as follows:
	$\varPsi(f) = \varPhi(f)$ if $\varPhi^*(f) \neq \varPhi(f)$, and $\varPsi(f) = \top$ otherwise.
	Note that $\varPhi^*(f) \geq \varPhi(f)$ for all $f\in\Sigma^\omega$ by Proposition~\ref{proposition:safe:closure}.
	When $\varPhi^*(f) > \varPhi(f)$, we have $\min(\varPhi^*(f), \varPsi(f)) = \min(\varPhi^*(f), \varPhi(f)) = \varPhi(f)$.
	When $\varPhi^*(f) = \varPhi(f)$, we have $\min(\varPhi^*(f), \varPsi(f)) = \min(\varPhi(f), \top) = \varPhi(f)$.
	
	Now, suppose towards contradiction that $\varPsi$ is not live, i.e., there exists $f \in \Sigma^\omega$ such that $\varPsi(f) < \top$ and for all $v \not \leq \varPhi(f)$, there exists $s \prec f$ satisfying $\sup_{g \in \Sigma^\omega} \varPhi(sg) \not\geq v$.
	Let $f \in \Sigma^\omega$ be such that $\varPsi(f) < \top$.
	Then, by definition of $\varPsi$, we know that $\varPsi(f) = \varPhi(f) < \varPhi^*(f)$.
	Moreover, since $\varPhi^*(f) \not \leq \varPsi(f)$, there exists $s \prec f$ satisfying $\sup_{g \in \Sigma^\omega} \varPhi(sg) \not\geq \varPhi^*(f)$.
	In particular, we have $\sup_{g \in \Sigma^\omega} \varPhi(sg) < \varPhi^*(f)$, which is a contradiction since we have $\varPhi^*(f) = \inf_{r \prec f} \sup_{g \in \Sigma^\omega} \varPhi(rg)$ by definition, and $s \prec f$.
	Therefore, $\varPsi$ is live.
\qed\end{proof}

In particular, if the given property is safe or live, the decomposition is trivial.

\begin{remark}\label{rem:trivialdecomp}
	Let $\varPhi$ be a property.
	If $\varPhi$ is safe (resp.\ live), then the safety (resp.\ liveness) part of the decomposition is $\varPhi$ itself, and the liveness (resp.\ safety) part is the constant property that maps every trace to $\top$.
\end{remark}

Another, lesser-known decomposition theorem is the one of nonunary boolean properties into two liveness properties~\cite{DBLP:journals/ipl/AlpernS85}.
We extend this result to the quantitative setting.

\begin{theorem}\label{thm:livedecomp}
	For every property $\varPhi$, there exist two liveness properties $\varPsi_1$ and $\varPsi_2$ such that $\varPhi(f) = \min(\varPsi_1(f), \varPsi_2(f))$ for all $f \in \Sigma^\omega$.
\end{theorem}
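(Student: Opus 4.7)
The plan is to inflate $\varPhi$ on two disjoint dense sets of traces, so that each inflated property is live and their pointwise minimum still agrees with $\varPhi$. Assuming $|\Sigma| \geq 2$ (the same mild assumption that appears in the classical boolean result), fix two distinct letters $a, b \in \Sigma$ and let $D_1 = \{s a^\omega \st s \in \Sigma^*\}$ and $D_2 = \{s b^\omega \st s \in \Sigma^*\}$. These sets are disjoint (any element of $D_1 \cap D_2$ would eventually be both $a^\omega$ and $b^\omega$), and each is dense in the Cantor topology: from any finite prefix $s$, the continuation $a^\omega$ lands in $D_1$ and $b^\omega$ lands in $D_2$. Define
$$\varPsi_i(f) = \begin{cases}\top & \text{if } f \in D_i, \\ \varPhi(f) & \text{otherwise.}\end{cases}$$

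First I would verify that $\varPhi(f) = \min(\varPsi_1(f), \varPsi_2(f))$ by case analysis on the $D_i$-membership of $f$: since $D_1 \cap D_2 = \emptyset$, at least one of $\varPsi_1(f), \varPsi_2(f)$ equals $\varPhi(f)$, while the other equals either $\varPhi(f)$ or $\top$; in every case the minimum is $\varPhi(f)$. Next I would check that $\varPsi_1$ is live (the argument for $\varPsi_2$ being symmetric). Take $f \in \Sigma^\omega$ with $\varPsi_1(f) < \top$; then necessarily $f \notin D_1$ and $\varPhi(f) < \top$. Choose $v = \top$ as the witness: clearly $\varPsi_1(f) \not\geq \top$, and for every prefix $s \prec f$ the continuation $g = a^\omega$ yields $sg \in D_1$, so $\varPsi_1(sg) = \top$ and therefore $\sup_{g \in \Sigma^\omega} \varPsi_1(sg) = \top \geq v$, as required.

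The only subtle point is the tacit alphabet assumption $|\Sigma| \geq 2$. When $|\Sigma| = 1$, the space $\Sigma^\omega$ has a single element; the definition of liveness then forces the only live property to be the constant $\top$, so no nontrivial property admits a min-decomposition into two live properties. This exactly mirrors the caveat of the classical Alpern--Schneider liveness-liveness decomposition. Once the disjoint dense sets $D_1, D_2$ are in place, the rest of the proof is mechanical, so the main conceptual step is the observation that overwriting a property with $\top$ on a dense set produces a live property while preserving the minimum pointwise.
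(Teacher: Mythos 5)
Your proposal is correct and follows essentially the same route as the paper's proof: the same decomposition $\varPsi_i(f)=\top$ on $\Sigma^*(a_i)^\omega$ and $\varPsi_i(f)=\varPhi(f)$ elsewhere, with the same disjointness argument for the minimum; the only cosmetic difference is that you verify liveness directly from the definition with witness $v=\top$, while the paper invokes the safety-closure characterization of liveness (Theorem~\ref{thm:livenessclosure}). Your explicit remark on the $|\Sigma|=1$ case matches the ``nonunary'' caveat the paper inherits from the boolean setting.
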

\begin{proof}
	Let $\Sigma$ be a finite alphabet and $a_1, a_2 \in \Sigma$ be two distinct letters.
	Consider an arbitrary property $\varPhi$.
	For $i \in \{1, 2\}$, we define $\varPsi_i$ as follows:
	$\varPsi_i(f) = \top$ if $f = s(a_i)^\omega$ for some $s\in\Sigma^*$, and $\varPsi_i(f) = \varPhi(f)$ otherwise.
	Note that, since $a_1$ and $a_2$ are distinct, whenever $f \in \Sigma^* (a_1)^\omega$ then $f \notin \Sigma^* (a_2)^\omega$, and vice versa.
	Then, we have that both $\varPsi_1$ and $\varPsi_2$ are $\top$ only when $\varPhi$ is $\top$.
	In the remaining cases, when at most one of $\varPsi_1$ and $\varPsi_2$ is $\top$, then either both equals $\varPhi$ or one of them is $\top$ and the other is $\varPhi$.
	As a direct consequence, $\varPhi(f) = \min(\varPsi_1(f), \varPsi_2(f))$ for all $f\in\Sigma^\omega$.
	
	Now, we show that $\varPsi_1$ and $\varPsi_2$ are both live.
	By construction, $\varPsi_i(s(a_i)^\omega)=\top$ for all $s\in\Sigma^*$.
	In particular, $\varPsi^*_i(f) = \inf_{s \prefix f}\sup_{g \in \Sigma^\omega}\varPsi_i(sg) = \top$ for all $f\in\Sigma^\omega$.
	We conclude that $\varPsi_i$ is live thanks to Theorem~\ref{thm:livenessclosure}.
\qed\end{proof}

For co-safety and co-liveness, the duals of Remark~\ref{rem:trivialdecomp} and Theorems~\ref{thm:decomp}~and~\ref{thm:livedecomp}  hold.
In particular, every property is the pointwise maximum of its co-safety closure and a co-liveness property.

\subsection{Related Definitions of Quantitative Liveness}


In~\cite{DBLP:journals/isci/LiDL17}, the authors define a property $\varPhi$ as \emph{multi-live} iff $\varPhi^*(f) > \bot$ for all $f\in \Sigma^\omega$.
We show that our definition is more restrictive, resulting in fewer liveness properties while still allowing a safety-liveness decomposition.

\begin{proposition}\label{prop:multi:live}
	Every live property is multi-live, and the inclusion is strict.
\end{proposition}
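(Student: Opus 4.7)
The plan is to prove inclusion using the safety-closure characterization of liveness (Theorem~\ref{thm:livenessclosure}), and then to separate the two notions with an elementary constant-property counterexample on any value domain of size at least three.

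For the inclusion, suppose $\varPhi$ is live and fix an arbitrary trace $f \in \Sigma^\omega$. I would split cases on whether $\varPhi(f) = \top$. If $\varPhi(f) = \top$, then by Proposition~\ref{proposition:safe:closure}(2) we have $\varPhi^*(f) \geq \varPhi(f) = \top$, and since $\D$ is nontrivial ($\bot \neq \top$) this gives $\varPhi^*(f) > \bot$. If instead $\varPhi(f) < \top$, then Theorem~\ref{thm:livenessclosure} yields $\varPhi^*(f) > \varPhi(f) \geq \bot$, and again $\varPhi^*(f) > \bot$. Either way $\varPhi^*(f) > \bot$, so $\varPhi$ is multi-live.

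For strictness, I would exhibit a property that is multi-live but not live. Take any value domain with an intermediate element, for instance $\D = \{0, 1, 2\}$ with the natural order, fix $\Sigma = \{a,b\}$, and let $\varPhi$ be the constant property $\varPhi(f) = 1$ for every $f \in \Sigma^\omega$. Then $\sup_{g \in \Sigma^\omega} \varPhi(sg) = 1$ for every $s \in \Sigma^*$, so $\varPhi^*(f) = 1 > 0 = \bot$ and $\varPhi$ is multi-live. However $\varPhi(f) = 1 < 2 = \top$ and $\varPhi^*(f) = 1 = \varPhi(f)$, so Theorem~\ref{thm:livenessclosure} forbids $\varPhi$ from being live.

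The only subtle point, and where I expect the main (minor) obstacle to lie, is the choice of counterexample: on the boolean domain $\B$, multi-liveness collapses to liveness because $\bot$ and $\top$ are the only values, so a strict separation requires a domain with at least three comparable elements. Once one works on such a domain, any constant property taking a value strictly between $\bot$ and $\top$ does the job, because such a property is safe and nonconstantly-$\top$, hence cannot be live by Proposition~\ref{prop:top}, while its safety closure still stays above $\bot$.
\qed
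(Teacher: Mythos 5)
Your proof is correct. For the inclusion, you take essentially the same route as the paper: both arguments hinge on Theorem~\ref{thm:livenessclosure} together with Proposition~\ref{proposition:safe:closure}(2) and the nontriviality of the lattice; the paper phrases it as a proof by contradiction (if $\varPhi^*(f)=\bot$ then $\varPhi(f)=\bot<\top$, so liveness forces $\varPhi^*(f)>\varPhi(f)$, a contradiction), whereas you argue directly with a case split on $\varPhi(f)=\top$ — a cosmetic difference only. For strictness, the paper delegates to a separating example stated after the proposition: a three-valued property on $\Sigma=\{a,b,c\}$ defined via the LTL conditions $\LTLg a$, $\LTLf c$, and their complement, which is multi-live because every prefix can be extended by $c^\omega$ to value $1$, but not live because traces satisfying $\LTLf c$ have value $1<\top$ and a prefix after which the supremum of prediction values is already $1$. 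Your counterexample — the constant property with value $1$ on the chain $\{0,1,2\}$ — is more elementary and works for the same underlying reason ($\varPhi^*=\varPhi$ strictly between $\bot$ and $\top$), and your observation that the separation is impossible on the boolean domain, hence requires a chain of length at least three, is a worthwhile addition not made explicit in the paper. The paper's example arguably illustrates better that the gap persists for properties genuinely depending on the trace, but both establish strictness on a totally ordered domain, which is the point the paper emphasizes.
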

\begin{proof}
	We prove that liveness implies multi-liveness.
	Suppose toward contradiction that some property $\varPhi$ is live, but not multi-live.
	Then, there exists $f\in\Sigma^\omega$ for which $\varPhi^*(f) = \bot$, and therefore $\varPhi(f) = \bot$ too.
	Note that we assume $\D$ is a nontrivial complete lattice, i.e., $\top \neq \bot$.
	Then, since $\varPhi$ is live, we have $\varPhi^*(f) > \varPhi(f)$ by Theorem~\ref{thm:livenessclosure}, which yields a contradiction.
\qed\end{proof}

We provide a separating example on a totally ordered domain below.

\begin{example}
	Let $\Sigma = \{a,b, c\}$, and consider the following property:
	$\varPhi(f) = 0$ if $f \models \LTLg a$, and $\varPhi(f) = 1$ if $f \models \LTLf c$, and $\varPhi(f) = 2$ otherwise (i.e., if $f \models \LTLf b \land \LTLg \lnot c$).
	For all $f\in\Sigma^\omega$ and prefixes $s\prefix f$, we have $\varPhi(s c^\omega) = 1$.
	Thus $\varPhi^*(f) \neq \bot$, which implies that $\varPhi$ is multi-live.
	However, $\varPhi$ is not live.
	Indeed, for every $f\in\Sigma^\omega$ such that $f \models \LTLf c$, we have $\varPhi(f) = 1 < \top$.
	Moreover, $f$ admits some prefix $s$ that contains an occurrence of $c$, thus satisfying $\sup_{g \in \Sigma^\omega} \varPhi(sg) = 1$.
\qed\end{example}

In~\cite{DBLP:conf/nfm/GorostiagaS22}, the authors define a property $\varPhi$ as \emph{verdict-live} iff for every $f \in \Sigma^\omega$ and value $v \not \leq \varPhi(f)$, every prefix $s \prec f$ satisfies $\varPhi(sg) = v$ for some $g \in \Sigma^\omega$.
We show that our definition is more liberal.

\begin{proposition}
	Every verdict-live property is live, and the inclusion is strict.
\end{proposition}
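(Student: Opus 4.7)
The plan is to establish the forward inclusion by observing that verdict-liveness, applied with the specific value $v = \top$, already supplies the witness demanded by liveness, and then to separate the two notions by a small example whose range skips an intermediate lattice element.

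For the inclusion, I would fix any $f \in \Sigma^\omega$ with $\varPhi(f) < \top$. Since $\top$ is the unique maximum of the complete lattice $\D$, this strict inequality forces $\top \not\leq \varPhi(f)$. Applying verdict-liveness with $v = \top$ then yields, for every prefix $s \prec f$, a continuation $g \in \Sigma^\omega$ with $\varPhi(sg) = \top$, and hence $\sup_{g \in \Sigma^\omega} \varPhi(sg) = \top \geq v$. Combined with $\varPhi(f) \not\geq \top$, the single value $v = \top$ discharges the existential in the liveness definition, so $\varPhi$ is live.

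For strictness, I would propose the property over $\Sigma = \{a, b\}$ with value domain the three-element chain $\D = \{0,1,2\}$, given by $\varPhi(a^\omega) = 0$ and $\varPhi(f) = 2$ for every $f \neq a^\omega$. Liveness is direct: $a^\omega$ is the only trace of value below $\top$, and every prefix $a^n \prec a^\omega$ continues to value $2$ via $b^\omega$, so $v = 2$ witnesses liveness at $a^\omega$. Verdict-liveness, however, fails at $f = a^\omega$ with $v = 1$: the value $1$ lies in $\D$ and satisfies $1 \not\leq 0 = \varPhi(a^\omega)$, but $1$ is never in the range of $\varPhi$, so no prefix of $a^\omega$ admits any continuation $g$ with $\varPhi(sg) = 1$. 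The argument has essentially no hard step; the conceptual core is that liveness requires only one undismissible value above $\varPhi(f)$, whereas verdict-liveness demands that every such value be undismissible, and any range that omits lattice elements above $\varPhi(f)$ will accordingly separate the two notions.
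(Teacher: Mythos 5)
Your proof is correct. The inclusion direction is the natural argument (the paper states this proposition without proof), and instantiating verdict-liveness at $v=\top$ is exactly the right move: the hypothesis $\top\not\leq\varPhi(f)$ and the conclusion $\varPhi(f)\not\geq\top$ coincide with $\varPhi(f)<\top$, and the guaranteed continuations with $\varPhi(sg)=\top$ give $\sup_{g}\varPhi(sg)=\top$ for every prefix. Where you genuinely diverge from the paper is in the separating example. The paper uses $\varPhi(f)=0$ for $f\not\models\LTLf b$, $\varPhi(f)=1$ for $f\models\LTLf(b\land\LTLo\LTLf b)$, and $\varPhi(f)=2^{-|s|}$ otherwise, and shows that from the trace $a^k b a^\omega$ the intermediate values $2^{-m}$ ($1\leq m\leq k$) become unreachable after a finite prefix even though each of them is attained by the property on some other trace. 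Your three-element chain $\{0,1,2\}$ with range $\{0,2\}$ is simpler and perfectly valid---verdict-liveness quantifies over all $v\in\D$ with $v\not\leq\varPhi(f)$, so an unattained domain value $1$ immediately defeats it---but the separation it exhibits is an artifact of padding the domain with a value the property never takes, and it would disappear if one restricted attention to the image of $\varPhi$. The paper's example is doing more work: it shows the two notions differ even when every relevant value is realized by the property somewhere, which is the more convincing demonstration that liveness is \emph{strictly} more liberal in a substantive sense. Your version buys brevity; the paper's buys robustness of the counterexample.
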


We provide a separating example below, concluding that our definition is strictly more general even for totally ordered domains.

\begin{example}
	Let $\Sigma = \{a,b\}$, and consider the following property:
	$\varPhi(f) = 0$ if $f \not \models \LTLf b$, and $\varPhi(f) = 1$ if $f \models \LTLf (b \land \LTLnext \LTLf b)$, and $\varPhi(f) = 2^{-|s|}$ otherwise, where $s \prec f$ is the shortest prefix in which $b$ occurs.
	Consider an arbitrary $f \in \Sigma^\omega$.
	If $\varPhi(f) = 1$, then the liveness condition is vacuously satisfied.
	If $\varPhi(f) = 0$, then $f = a^\omega$, and every prefix $s \prec f$ can be extended with $g = ba^\omega$ or $h = b^\omega$ to obtain $\varPhi(s g) = 2^{-(|s|+1)}$ and $\varPhi(s h) = 1$.
	If $0 < \varPhi(f) < 1$, then $f$ satisfies $\LTLf b$ but not $\LTLf (b \land \LTLnext \LTLf b)$, and every prefix $s \prec f$ can be extended with $b^\omega$ to obtain $\varPhi(s b^\omega) = 1$.
	Hence $\varPhi$ is live.
	However, $\varPhi$ is not verdict-live.
	To see this, consider the trace $f = a^k b a^\omega$ for some integer $k \geq 1$ and note that $\varPhi(f) = 2^{-(k+1)}$.
	Although all prefixes of $f$ can be extended to reach the value 1, the value domain contains elements between $\varPhi(f)$ and 1, namely the values $2^{-m}$ for $1 \leq m \leq k$.
	Each of these values can be rejected after reading a finite prefix of $f$, because for $n \geq m$ it is not possible to extend $a^n$ to reach the value $2^{-m}$. 
\qed\end{example}

\section{Approximate Monitoring through Approximate Safety}
In this section, we consider properties on extended reals $\R^{\pm \infty} = \R \cup \{-\infty, +\infty\}$.
We denote by $\R_{\geq 0}$ the set of nonnegative real numbers.

\begin{definition}[Approximate safety and co-safety]
	Let $\alpha \in \R_{\geq 0}$.
	A property $\varPhi$ is \emph{$\alpha$-safe} iff for every $f \in \Sigma^\omega$ and value $v \in \R^{\pm \infty}$ with $\varPhi(f) < v$, there exists a prefix $s \prefix f$ such that $\sup_{g \in \Sigma^\omega} \varPhi(sg) < v + \alpha$.
	Similarly, $\varPhi$ is \emph{$\alpha$-co-safe} iff for every $f \in \Sigma^\omega$ and $v \in \R^{\pm \infty}$ with $\varPhi(f) > v$, there exists $s \prefix f$ such that $\inf_{g \in \Sigma^\omega} \varPhi(sg) > v - \alpha$.
	When $\varPhi$ is $\alpha$-safe (resp.\ $\alpha$-co-safe) for some $\alpha \in \R_{\geq 0}$, we say that $\varPhi$ is \emph{approximately safe} (resp.\ \emph{approximately co-safe}).
\end{definition}

Approximate safety can be characterized through the following relation with the safety closure.

\begin{proposition}\label{prop:approx:safe}
  For every error bound $\alpha \in \R_{\geq 0}$, a property $\varPhi$ is $\alpha$-safe iff $\varPhi^*(f) - \varPhi(f) \leq \alpha$ for all $f \in \Sigma^\omega$.
\end{proposition}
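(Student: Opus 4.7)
The plan is to unfold the definition of the safety closure $\varPhi^*(f) = \inf_{s \prec f} \sup_{g \in \Sigma^\omega} \varPhi(sg)$ on both sides of the equivalence and argue directly.

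For the forward direction, I would argue by contradiction. Assume $\varPhi$ is $\alpha$-safe, but suppose there exists some $f \in \Sigma^\omega$ with $\varPhi^*(f) - \varPhi(f) > \alpha$. The idea is to pick a threshold $v$ that separates $\varPhi(f)$ from $\varPhi^*(f)$ by more than~$\alpha$; concretely, set $v = \varPhi^*(f) - \alpha$, so that $\varPhi(f) < v$ and $v + \alpha = \varPhi^*(f)$. Applying $\alpha$-safety to this $f$ and $v$ yields a prefix $s \prec f$ with $\sup_{g \in \Sigma^\omega} \varPhi(sg) < v + \alpha = \varPhi^*(f)$. This contradicts the fact that $\varPhi^*(f)$ is the infimum of exactly these suprema over $s \prec f$, and thus $\sup_{g \in \Sigma^\omega} \varPhi(sg) \geq \varPhi^*(f)$.

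For the backward direction, I would proceed directly from the assumption that $\varPhi^*(f) - \varPhi(f) \leq \alpha$ for all $f$. Fix $f \in \Sigma^\omega$ and $v \in \R^{\pm\infty}$ with $\varPhi(f) < v$. Then $\varPhi^*(f) \leq \varPhi(f) + \alpha < v + \alpha$, so the infimum defining $\varPhi^*(f)$ is strictly below $v + \alpha$, which by the definition of infimum supplies a prefix $s \prec f$ with $\sup_{g \in \Sigma^\omega} \varPhi(sg) < v + \alpha$, as required by $\alpha$-safety.

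The only real subtlety is the handling of infinite values in $\R^{\pm\infty}$: in particular, when $\varPhi(f) = -\infty$ or $\varPhi^*(f) = +\infty$ the difference $\varPhi^*(f) - \varPhi(f)$ must be interpreted properly (taking $(-\infty)-(-\infty)=0$ and requiring finiteness of $\varPhi^*(f)$ whenever $\varPhi(f)$ is finite). These edge cases can be handled separately with a brief case distinction, but the substantive content of the argument is the finite case covered above. I do not anticipate a serious obstacle; the proposition is essentially a reformulation of $\alpha$-safety in which the universal quantifier over $v$ is absorbed into the single inequality on the difference.
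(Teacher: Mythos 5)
Your proof is correct and takes essentially the same route as the paper: the forward direction uses the identical witness $v = \varPhi^*(f) - \alpha$ and the same contradiction with $\varPhi^*(f)$ being the infimum, while your direct backward argument is just the contrapositive of the paper's proof by contradiction. Your remark about the arithmetic of $\pm\infty$ is a reasonable extra caution that the paper itself glosses over.
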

\begin{proof}
	Let $\varPhi$ and $\alpha$ be as above.
	We show each direction separately by contradiction.
	First, assume $\varPhi$ is $\alpha$-safe.
	Suppose towards contradiction that $\varPhi^*(f) - \varPhi(f) > \alpha$ for some $f \in \Sigma^\omega$.
	Let $v = \varPhi^*(f) - \alpha$ and notice that, since $\varPhi$ is $\alpha$-safe, there exists $s \prec f$ such that $\sup_{g \in \Sigma^\omega} \varPhi(sg) < v + \alpha = \varPhi^*(f)$.
	By definition, we get $\sup_{g \in \Sigma^\omega} \varPhi(sg) < \inf_{r \prec f} \sup_{g \in \Sigma^\omega} \varPhi(rg)$, which is a contradiction.
	
	Now, assume $\varPhi^*(f) - \varPhi(f) \leq \alpha$ for all $f \in \Sigma^\omega$.
	Suppose towards contradiction that $\varPhi$ is not $\alpha$-safe, i.e., there exists $f \in \Sigma^\omega$ and $v \in \D$ such that (i) $\varPhi(f) < v$ and (ii) $\sup_{g \in \Sigma^\omega} \varPhi(sg) \geq v + \alpha$ for all $s \prec f$.
	Note that (i) implies $v + \alpha > \varPhi(f) + \alpha$, and (ii) implies $\inf_{s \prec f} \sup_{g \in \Sigma^\omega} \varPhi(sg) \geq v + \alpha$.
	Combining the two with the definition of $\varPhi^*$ we get $\varPhi^*(f) > \varPhi(f) + \alpha$, which is a contradiction.
\qed\end{proof}

An analogue of Proposition~\ref{prop:approx:safe} holds for approximate co-safety and the co-safety closure.
Moreover, approximate safety and approximate co-safety are dual notions that are connected by the complement operation, similarly to their precise counterparts (Theorem~\ref{thm:cosafe}).

\subsection{The Intersection of Approximate Safety and Co-safety}
Recall the ghost monitor from the introduction.
If, after a finite number of observations, all the possible prediction values are close enough, then we can simply freeze the current value and achieve a sufficiently small error.
This happens for properties that are both approximately safe and approximately co-safe, generalizing the unfolding approximation of discounted properties \cite{DBLP:journals/corr/BokerH14}.

\begin{proposition}\label{prop:squeeze}
  For every limit property $\varPhi$ and all error bounds $\alpha, \beta \in \R_{\geq 0}$, if $\varPhi$ is $\alpha$-safe and $\beta$-co-safe, then the set $\textstyle S_\delta = \{s \in \Sigma^* \st \sup_{r_1\in\Sigma^*} \varPhi(sr_1) - \inf_{r_2\in\Sigma^*} \varPhi(sr_2) \geq \delta\}$ is finite for all reals $\delta > \alpha + \beta$.
\end{proposition}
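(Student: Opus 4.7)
The plan is to argue by contradiction through K\"onig's lemma, coupling the monotone shrinkage of possible prediction intervals with the characterization of approximate safety given by Proposition~\ref{prop:approx:safe}. Writing $U(s) = \sup_{r_1 \in \Sigma^*} \varPhi(sr_1)$ and $L(s) = \inf_{r_2 \in \Sigma^*} \varPhi(sr_2)$, the target becomes showing $S_\delta = \{s \in \Sigma^* \st U(s) - L(s) \geq \delta\}$ is finite whenever $\delta > \alpha + \beta$.

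First I would verify that $S_\delta$ is prefix-closed: if $s' \prefixeq s$ then every finite extension of $s$ is already a finite extension of $s'$, which gives $U(s') \geq U(s)$ and $L(s') \leq L(s)$, and hence $U(s') - L(s') \geq U(s) - L(s) \geq \delta$. Assuming for contradiction that $S_\delta$ is infinite, prefix-closedness makes it an infinite finitely-branching tree, so K\"onig's lemma yields some $f \in \Sigma^\omega$ every finite prefix of which lies in $S_\delta$.

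Next I would translate the finitary gap along $f$ into an infinitary one. Using that $\varPhi$ is a limit property, any finite extension $sr$ can be continued by an infinite trace whose $\ell$-value is (arbitrarily close to) $\varPhi(sr)$, so the bounds $U(s)$ and $L(s)$ transfer to the suprema and infima of $\varPhi(sg)$ over $g \in \Sigma^\omega$, giving $\sup_g \varPhi(sg) - \inf_g \varPhi(sg) \geq \delta$ at every prefix $s$ of $f$. Since $\sup_g \varPhi(sg)$ is monotonically non-increasing and $\inf_g \varPhi(sg)$ is monotonically non-decreasing as $s$ grows, taking the limit along prefixes of $f$ yields $\varPhi^*(f) - \varPhi_*(f) \geq \delta$. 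Finally, combining Proposition~\ref{prop:approx:safe} applied to $\varPhi$ with its co-safety dual applied to $\overline{\varPhi}$ forces $\varPhi^*(f) - \varPhi_*(f) \leq \alpha + \beta < \delta$, which is the desired contradiction.

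The hard part will be the finitary-to-infinitary lifting: justifying that $U(s) - L(s)$ lower bounds the infinitary amplitude $\sup_g \varPhi(sg) - \inf_g \varPhi(sg)$ is where the limit-property hypothesis is essentially used, and it may require working with a canonical monotone representative of $\pi$ for each $\ell \in \{\inf, \sup, \liminf, \limsup\}$ to guarantee that every value $\varPhi(sr)$ is witnessed by a genuine infinite continuation; the rest of the argument reduces to the standard K\"onig-plus-monotone-convergence pattern above.
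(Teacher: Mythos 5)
Your proposal follows the paper's argument step for step: prefix-closedness of $S_\delta$, K\"onig's lemma to extract a trace $f$ all of whose prefixes lie in $S_\delta$, passage to the limit along those prefixes to conclude $\varPhi^*(f) - \varPhi_*(f) \geq \delta$, and a contradiction with Proposition~\ref{prop:approx:safe} together with its co-safety dual. The finitary-to-infinitary lifting that you single out as the hard part is exactly the step the paper dispatches in one line (``this implies that $\varPhi^*(f) - \varPhi_*(f) > \alpha + \beta$''), so you are not missing anything relative to the published argument; just be aware that your proposed justification---that each finitary value $\varPhi(sr)$ is approximately realized by some infinite continuation---does not hold for an arbitrary finitary property $\pi$, and both your proof and the paper's implicitly rely on the finitary values being sandwiched between $\inf_{g}\varPhi(sg)$ and $\sup_{g}\varPhi(sg)$.
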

\begin{proof}
	Let $\alpha, \beta \in \R_{\geq 0}$ and $\varPhi$ be a limit property that is $\alpha$-safe and $\beta$-co-safe.
	Assume towards contradiction that $|S_\delta| = \infty$ for some $\delta > \alpha + \beta$.
	Notice that $S_\delta$ is prefix closed, i.e., for all $s, r \in \Sigma^*$ having both $r \prefixeq s$ and $s \in S_\delta$ implies $r \in S_\delta$.
	Then, by K\"onig's lemma, there exists $f \in \Sigma^\omega$ such that $s \in S_\delta$ for every prefix $s \prec f$.
	Let $s_i \prefix f$ be the prefix of length $i$.
	We have that $\lim_{n \to \infty} ( \sup_{r_1\in\Sigma^*} \varPhi(s_n r_1) - \inf_{r_2\in\Sigma^*} \varPhi(s_n r_2) ) \geq \delta > \alpha + \beta$.
	This implies that $\varPhi^*(f) - \varPhi_*(f) > \alpha + \beta$, which contradicts the assumption that $\varPhi$ is $\alpha$-safe and $\beta$-co-safe.
	Hence $S_\delta$ is finite for all $\delta > \alpha + \beta$.
\qed\end{proof}

Based on this proposition, we show that for limit properties that are both approximately safe and approximately co-safe, the influence of the suffix on the property value is eventually negligible.

\begin{theorem}\label{thm:prefix_determinacy}
  For every limit property $\varPhi$ such that $\varPhi(f) \in \R$ for all $f \in \Sigma^\omega$, and for all error bounds $\alpha, \beta \in \R_{\geq 0}$, if $\varPhi$ is $\alpha$-safe and $\beta$-co-safe, then for every real $\delta > \alpha + \beta$ and trace $f \in \Sigma^\omega$, there is a prefix $s \prec f$ such that for all continuations $w \in \Sigma^* \cup \Sigma^\omega$, we have $|\varPhi(sw) - \varPhi(s)| < \delta$.
\end{theorem}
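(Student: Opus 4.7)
The plan is to apply Proposition~\ref{prop:squeeze} with the given error bound $\delta$ and exploit the resulting finiteness of $S_\delta$ to locate a suitable prefix of $f$. Since $\varPhi$ is $\alpha$-safe, $\beta$-co-safe, and $\delta > \alpha + \beta$, the proposition yields that
\[ S_\delta = \{s \in \Sigma^* : \sup\nolimits_{r_1 \in \Sigma^*} \varPhi(sr_1) - \inf\nolimits_{r_2 \in \Sigma^*} \varPhi(sr_2) \geq \delta\} \]
is finite. Because the prefixes of $f$ form an infinite chain while $S_\delta$ is finite, some $s \prec f$ must lie outside $S_\delta$; for any such $s$, setting $L \defeq \inf_{r_2 \in \Sigma^*} \varPhi(sr_2)$ and $U \defeq \sup_{r_1 \in \Sigma^*} \varPhi(sr_1)$ yields $U - L < \delta$.

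The core of the argument is then to verify that both $\varPhi(s)$ and $\varPhi(sw)$ lie in the interval $[L, U]$ for every $w \in \Sigma^* \cup \Sigma^\omega$, after which $|\varPhi(sw) - \varPhi(s)| \leq U - L < \delta$ is immediate. For $\varPhi(s)$ this is immediate by taking the empty extension $r_1 = r_2 = \epsilon$ in the definitions of $L$ and $U$, and for finite continuations $w \in \Sigma^*$ by taking $r_1 = r_2 = w$, so that $\varPhi(sw) \in [L, U]$.

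The main obstacle is the infinite-continuation case $w \in \Sigma^\omega$: since $\varPhi$ is a limit property, $\varPhi(sw)$ is obtained by applying the value function $\ell$ to the sequence $(\varPhi(t))_{t \prec sw}$, and only the tail beyond the prefix $s$ is guaranteed to take values in $[L, U]$. The plan is to handle this by exploiting that $\ell$ belongs to the four value functions of the safety-progress hierarchy: $\liminf$ and $\limsup$ are insensitive to any finite initial segment of their input sequence, so the tail dictates the limit and forces $\varPhi(sw) \in [L, U]$, while by Theorem~\ref{thm:inf:liminf:limsup} the remaining $\inf$- and $\sup$-representations can be recast as $\liminf$- and $\limsup$-representations to which the same argument applies. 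Once $\varPhi(sw) \in [L, U]$ is secured, the triangle inequality concludes the proof.
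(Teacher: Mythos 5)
Your skeleton is exactly the paper's: invoke Proposition~\ref{prop:squeeze} to get finiteness of $S_\delta$, pick a prefix $s \prec f$ with $s \notin S_\delta$ (the paper takes the shortest $s$ with $sr \notin S_\delta$ for all $r \in \Sigma^*$, but by prefix-closedness of $S_\delta$ this is equivalent to $s \notin S_\delta$, so your selection is fine), and squeeze $\varPhi(s)$ and $\varPhi(sw)$ into the interval $[L,U]$ of width less than $\delta$. For finite continuations $w$ this matches the paper verbatim and is correct. You have also put your finger precisely on the one delicate step, which the paper's own proof passes over with the single sentence ``since $\varPhi$ is a limit property, this implies\dots''.

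However, your patch for that step does not close the gap in the $\inf$ and $\sup$ cases. Recasting via Theorem~\ref{thm:inf:liminf:limsup} replaces $\pi$ by $\pi'(t) = \min_{r \preceq t} \pi(r)$; while $\varPhi(sw)$ for infinite $w$ is representation-independent, the quantities $\varPhi(s)$, $\varPhi(sw)$ for finite $w$, and the set $S_\delta$ in the statement all refer to the \emph{original} $\pi$, and the tail of the $\pi'$-sequence along $sw$ is $\pi'(sr) = \min\bigl(\min_{t \preceq s}\pi(t), \inf_{t \preceq r}\pi(st)\bigr)$, which can sit strictly below $L = \inf_{r}\pi(sr)$ whenever some proper prefix of $s$ has a small $\pi$-value. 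Concretely, take $\ell = \inf$, $\pi(\varepsilon) = -1$ and $\pi(t) = 0$ for $t \neq \varepsilon$: then $\varPhi(f) = -1$ for all $f \in \Sigma^\omega$, so $\varPhi$ is $0$-safe and $0$-co-safe, yet for every nonempty $s$ and infinite $w$ we have $\varPhi(sw) = -1$ while $\varPhi(s) = 0$, so the conclusion fails for $\delta = 1/2$. This shows the difficulty is genuine; it is equally present (unacknowledged) in the last sentence of the paper's proof, and it cannot be repaired by switching representations --- what is actually needed is an extra hypothesis tying $\pi$ to $\varPhi$ (e.g., that $\pi(s)$ lies between $\inf_{g}\varPhi(sg)$ and $\sup_{g}\varPhi(sg)$, or that the relevant sup/inf over \emph{finite} extensions of $s$ controls the value on \emph{infinite} extensions of $s$). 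A second, smaller point: the theorem quantifies over arbitrary limit properties, i.e., arbitrary value functions $\ell$, so an argument that only treats $\ell \in \{\inf,\sup,\liminf,\limsup\}$ proves a narrower statement than the one claimed.
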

\begin{proof}
	Given $\alpha,\beta \in\R_{\geq 0}$ and $\varPhi$ as in the statement, assume $\varPhi$ is $\alpha$-safe and $\beta$-co-safe.
	Let $\delta > \alpha + \beta$ and $f \in \Sigma^\omega$ be arbitrary.
	Let $S_\delta$ be as in Proposition~\ref{prop:squeeze}.
	Since $S_\delta$ is finite and prefix closed, there exists $s \prec f$ such that $sr \notin S_\delta$ for all $r \in \Sigma^*$.
	Let $s \prec f$ be the shortest such prefix.
	By construction, $\sup_{r_1\in\Sigma^*} \varPhi(s r_1) - \inf_{r_2\in\Sigma^*} \varPhi(s r_2) < \delta$.
	Furthermore, for all $t\in\Sigma^*$, we trivially have $\inf_{r_2\in\Sigma^*} \varPhi(s r_2) \leq \varPhi(s t) \leq \sup_{r_1\in\Sigma^*} \varPhi(s r_1)$.
	In particular, $\inf_{r_2\in\Sigma^*} \varPhi(s r_2) \leq \varPhi(s) \leq \sup_{r_1\in\Sigma^*} \varPhi(s r_1)$ holds simply by taking $t = \varepsilon$.
	Then, one can easily obtain $-\delta < \varPhi(s r) - \varPhi(s) < \delta$ for all $r \in \Sigma^*$.
	Since $\varPhi$ is a limit property, this implies $-\delta < \varPhi(s g) - \varPhi(s) < \delta$ for all $g \in \Sigma^*$ as well.
\qed\end{proof}

We illustrate this theorem with a \emph{discounted safety} property.

\begin{example} \label{ex:discsafe}
	Let $P \subseteq \Sigma^\omega$ be a boolean safety property.
	We define the finitary property $\pi_P : \Sigma^* \to [0,1]$ as follows: 
	$\pi_P(s) = 1$ if $sf \in P$ for some $f \in \Sigma^\omega$, and $\pi_P(s) = 1 - 2^{-|r|}$ otherwise, where $r \preceq s$ is the shortest prefix with $rf \notin P$ for all $f \in \Sigma^\omega$.
	The limit property $\varPhi = (\pi_P, \inf)$ is called \emph{discounted safety} \cite{DBLP:conf/icalp/AlfaroHM03}.
	Because $\varPhi$ is an $\inf$-property, it is safe by Theorem~\ref{thm:safe:inf}.
	Now consider the finitary property $\pi_P'$ defined by $\pi_P'(s) = 1 - 2^{-|s|}$ if $sf \in P$ for some $f \in \Sigma^\omega$, and $\pi_P'(s) = 1 - 2^{-|r|}$ otherwise, where $r \preceq s$ is the shortest prefix with $rf \notin P$ for all $f \in \Sigma^\omega$.
	Let $\varPhi' = (\pi_P', \sup)$, and note that $\varPhi(f) = \varPhi'(f)$ for all $f \in \Sigma^\omega$.
	Hence $\varPhi$ is also co-safe, because it is a $\sup$-property.
        
	Let $f \in \Sigma^\omega$ and $\delta > 0$.
	For every prefix $s \prec f$, the set of possible prediction values is either the range $[1 - 2^{-|s|}, 1]$ or the singleton $\{1 - 2^{-|r|}\}$, where $r \preceq s$ is chosen as above.
	In the latter case, we have $|\varPhi(sw) - \varPhi(s)| = 0 < \delta$ for all $w \in \Sigma^* \cup \Sigma^\omega$.
	In the former case, since the range becomes smaller as the prefix grows, there is a prefix $s' \prec f$ with $2^{-|s'|} < \delta$, which yields $|\varPhi(s'w) - \varPhi(s')| < \delta$ for all $w \in \Sigma^* \cup \Sigma^\omega$.
\qed\end{example}

\subsection{Finite-state Approximate Monitoring}
Monitors with finite state spaces are particularly desirable, because finite automata enjoy a plethora of desirable closure and decidability properties.
Here, we prove that properties that are both approximately safe and approximately co-safe can be monitored approximately by a finite-state monitor.
First, we recall the notion of abstract quantitative monitor from~\cite{DBLP:conf/rv/HenzingerMS22}.

A binary relation ${\sim}$ over $\Sigma^*$ is an \emph{equivalence relation} iff it is reflexive, symmetric, and transitive.
Such a relation is \emph{right-monotonic} iff  $s_1 \sim s_2$ implies $s_1 r \sim s_2 r$ for all $s_1,s_2,r \in \Sigma^*$.
For an equivalence relation ${\sim}$ over $\Sigma^*$ and a finite trace $s \in \Sigma^*$, we write $[s]_{{\sim}}$ for the equivalence class of ${\sim}$ to which $s$ belongs.
When ${\sim}$ is clear from the context, we write $[s]$ instead.
We denote by $\Sigma^* / {\sim}$ the quotient of the relation ${\sim}$.

\begin{definition}[Abstract monitor~\cite{DBLP:conf/rv/HenzingerMS22}]
  An \emph{abstract monitor} $\calM = (\sim, \gamma)$ is a pair consisting of a right-monotonic equivalence relation ${\sim}$ on $\Sigma^*$ and a function $\gamma \colon ({\Sigma^* / \sim}) \rightarrow \RR$.
  The monitor $\calM$ is \emph{finite-state} iff the relation $\sim$ has finitely many equivalence classes.
	Let $\delta_{\fin},\delta_{\lim} \in \RR$ be error bounds.
	We say that $\calM$ is a \emph{$(\delta_{\fin},\delta_{\lim})$-monitor} for a given limit property $\varPhi = (\pi, \ell)$ iff for all $s \in \Sigma^*$ and $f \in \Sigma^\omega$, we have $|\pi(s) - \gamma([s])| \leq \delta_{\fin}$ and $|\ell_{s \prefix f}(\pi(s)) - \ell_{s\prefix f}(\gamma([s]))| \leq \delta_{\lim}$.
\end{definition}

Building on Theorem~\ref{thm:prefix_determinacy}, we identify a sufficient condition to guarantee the existence of an abstract monitor with finitely many equivalence classes.

\begin{theorem} \label{thm:fin_apx_mon}
  For every limit property $\varPhi$ such that $\varPhi(f) \in \R$ for all $f \in \Sigma^\omega$, and for all error bounds $\alpha, \beta \in \R_{\geq 0}$, if $\varPhi$ is $\alpha$-safe and $\beta$-co-safe, then for every real $\delta > \alpha + \beta$, there exists a finite-state $(\delta,\delta)$-monitor for $\varPhi$.
\end{theorem}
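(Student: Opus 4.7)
The plan is to build the equivalence relation $\sim$ and value assignment $\gamma$ explicitly from the finite set $S_\delta$ provided by Proposition~\ref{prop:squeeze}. Recall from the proof of that proposition that $S_\delta$ is prefix-closed, so once some prefix of a trace falls outside $S_\delta$, every extension stays outside; this will be the crucial structural fact.

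I would define a canonical form $c : \Sigma^* \to \Sigma^*$ by $c(s) = s$ if $s \in S_\delta$, and otherwise $c(s) = e$, where $e$ is the unique shortest prefix of $s$ not in $S_\delta$. Then put $s_1 \sim s_2$ iff $c(s_1) = c(s_2)$, and $\gamma([s]) = \varPhi(c(s))$ (recall that $\varPhi(r)$ abbreviates $\pi(r)$ on finite traces). Right-monotonicity is immediate from prefix-closedness: if $c(s_1) = c(s_2)$, then either both finite traces lie in $S_\delta$ and are equal, so their one-letter extensions agree, or both have the same exit prefix $e$, which remains the shortest non-$S_\delta$ prefix after appending any letter. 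The quotient is finite because $S_\delta$ is finite and the number of \emph{exit traces} (finite traces not in $S_\delta$ whose proper prefixes all lie in $S_\delta$) is bounded by $|S_\delta|\cdot|\Sigma| + 1$.

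For the finitary error condition I would split into two cases. If $s \in S_\delta$ then $\gamma([s]) = \pi(s)$ and the error is $0$. If $s \notin S_\delta$, let $e = c(s)$; because $e \notin S_\delta$ we have $\sup_{r_1 \in \Sigma^*} \varPhi(e r_1) - \inf_{r_2 \in \Sigma^*} \varPhi(e r_2) < \delta$, so every value of $\varPhi$ on a finite extension of $e$, including $\pi(s)$ and $\pi(e)$, lies in this sub-$\delta$ window, giving $|\pi(s) - \gamma([s])| < \delta$. For the limit error along any $f \in \Sigma^\omega$, either every prefix of $f$ stays in $S_\delta$, in which case the sequences $(\pi(s))_{s \prec f}$ and $(\gamma([s]))_{s \prec f}$ coincide, or $f$ has a shortest prefix $e$ not in $S_\delta$, after which the $\gamma$-sequence is constant at $\pi(e)$ while the $\pi$-sequence stays within $\delta$ of $\pi(e)$. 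In either case the two sequences differ pointwise by at most $\delta$, and from this one concludes $|\ell(a) - \ell(b)| \leq \delta$ for each $\ell \in \{\inf, \sup, \liminf, \limsup\}$ by a routine monotonicity argument; the hypothesis $\varPhi(f) \in \R$ ensures both $\ell$-values lie in $\R$, so the subtraction is meaningful.

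The main obstacle is striking the right balance in the equivalence: collapsing too aggressively would destroy right-monotonicity, while refining too finely would break finiteness of the quotient. Making the collapse happen exactly at the boundary of $S_\delta$ reconciles both, and relies essentially on the prefix-closedness of $S_\delta$ established inside the proof of Proposition~\ref{prop:squeeze}.
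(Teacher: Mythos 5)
Your construction is exactly the one the paper uses: the canonical map $c$ is the paper's ${\prefixeq_{S_\delta}}$, the equivalence relation, the assignment $\gamma([s])=\varPhi(c(s))$, and the case split for the finitary error bound all coincide, with prefix-closedness of $S_\delta$ doing the same work for right-monotonicity and finiteness of the quotient. Your treatment is if anything slightly more careful than the paper's (you handle $S_\delta=\emptyset$ uniformly rather than as a separate trivial case, your class-count bound $|S_\delta|\cdot|\Sigma|+1$ for the exit traces is the accurate one, and you spell out why a pointwise error of $\delta$ propagates through $\ell\in\{\inf,\sup,\liminf,\limsup\}$, which the paper only asserts in a one-line remark).
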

\begin{proof}
	Let $\alpha, \beta \in \R_{\geq 0}$, and $\varPhi$ be a limit property such that $\varPhi(f) \in \R$ for all $f \in \Sigma^\omega$.
	Assume $\varPhi$ is $\alpha$-safe and $\beta$-co-safe, and let $\delta > \alpha + \beta$.
	We show how to construct a finite-state $(\delta,\delta)$-monitor for $\varPhi$.
	
	Consider the finite set $S_\delta$ from Proposition~\ref{prop:squeeze}.
	If $S_\delta$ is empty, then $|\varPhi(s_1) - \varPhi(s_2)| \leq \delta$ holds for all $s_1, s_2 \in \Sigma^*$, and thus we can construct a trivial $(\delta, \delta)$-monitor for $\varPhi$ simply by (arbitrarily) mapping all finite traces to $\varPhi(\varepsilon)$.
	So, we assume without loss of generality that $S_\delta$ is not empty.
	
	Consider the function ${\prefixeq_{S_\delta}} \colon \Sigma^*\rightarrow \Sigma^*$ such that ${\prefixeq_{S_\delta}}(s) = s$ if $s \in S_\delta$, and ${\prefixeq_{S_\delta}}(s) = s'$ otherwise, where $s' \preceq s$ is the the shortest prefix with $s' \notin S_\delta$.
	We let $\calM = ({\sim}, \gamma)$ where ${\sim} = \{(s_1, s_2) \st {\prefixeq_{S_\delta}}(s_1) = {\prefixeq_{S_\delta}}(s_2)\}$ and $\gamma([s]) = \varPhi({\prefixeq_{S_\delta}}(s))$.
	By construction, ${\sim}$ is right-monotonic and has at most $2|S_\delta|$ equivalence classes.
	
	Now, we prove that $|\varPhi(s) - \gamma([s])| \leq \delta$ for all $s\in\Sigma^*$.
	If $s \in S_\delta$, then $\gamma([s]) = \varPhi(s)$ by definition, and the statement holds trivially.
	Otherwise, if $s \notin S_\delta$, we let $r = {\prefixeq_{S_\delta}}(s)$, which gives us $|\varPhi(r t_1) - \varPhi(r t_2)| < \delta$ for all $t_1, t_2 \in \Sigma^*$.
	In particular, $|\varPhi(s) - \gamma([s])| < \delta$ since $r \preceq s$.
	We remark that an error of at most $\delta$ on finite traces implies an error of at most $\delta$ on infinite traces.
	
	Finally, we prove that ${\sim}$ is right-monotonic.
	Let $s_1,s_2 \in \Sigma^*$ such that $s_1 \sim s_2$.
	Note that $s_1 \sim s_2$ implies $s_1 \in S_\delta \Leftrightarrow s_2 \in S_\delta$ by definition of ${\prefixeq_{S_\delta}}$.
	If $s_1, s_2\in S_\delta$, then ${\prefixeq_{S_\delta}}$ is the identity function, and thus $s_1t \sim s_2t$ for all $t\in\Sigma^*$ trivially.
	Otherwise, if $s_1, s_2\notin S_\delta$, we define $s = {\prefixeq_{S_\delta}}(s_1) = {\prefixeq_{S_\delta}}(s_2)\notin S_\delta$.
	By the definition of ${\prefixeq_{S_\delta}}$, we have that ${\prefixeq_{S_\delta}}(s) \notin S_\delta$ implies ${\prefixeq_{S_\delta}}(st) = {\prefixeq_{S_\delta}}(s)$ for all $t \in \Sigma^*$.
	In particular, $s_1t \sim s_2t$.
\qed\end{proof}

Due to Theorem~\ref{thm:fin_apx_mon}, the discounted safety property of Example~\ref{ex:discsafe} has a finite-state monitor for every positive error bound.
We remark that Theorem~\ref{thm:fin_apx_mon} is proved by a construction that generalizes the unfolding approach for the approximate determinization of discounted automata~\cite{DBLP:conf/fsttcs/BokerH12}, which unfolds an automaton until the distance constraint is satisfied.

\section{Conclusion}
We presented a generalization of safety and liveness that lifts the safety-progress hierarchy to the quantitative setting of~\cite{DBLP:journals/tocl/ChatterjeeDH10} while preserving major desirable features of the boolean setting, such as the safety-liveness decomposition.

Monitorability identifies a boundary separating properties that can be verified or falsified from a finite number of observations, from those that cannot.
Safety-liveness and co-safety-co-liveness decompositions allow us separate, for an individual property, monitorable parts from nonmonitorable parts.
The larger the monitorable parts of the given property, the stronger the decomposition.
We provided the strongest known safety-liveness decomposition, which consists of a pointwise minimum between a safe part defined by a quantitative safety closure, and a live part which corrects for the difference.
We then defined approximate safety as the relaxation of safety by a parametric error bound.
This further increases the monitorability of properties and offers monitorability at a parametric cost.
In fact, we showed that every property that is both approximately safe and approximately co-safe can be monitored arbitrarily precisely by a finite-state monitor.
A future direction is to extend our decomposition to approximate safety together with a support for quantitative assumptions~\cite{DBLP:conf/rv/HenzingerS20}.

The literature contains efficient model-checking procedures that leverage the boolean safety hypothesis~\cite{DBLP:journals/fmsd/KupfermanV01,DBLP:conf/spin/Latvala03}.
We thus expect that also quantitative safety and co-safety, and their approximations, enable efficient verification algorithms for quantitative properties.


\bibliographystyle{splncs04}
\bibliography{qsl_journal}

\end{document}